\newtheorem{theorem}{Theorem}
\newtheorem{proposition}[theorem]{Proposition}
\newtheorem{lemma}[theorem]{Lemma}
\newtheorem{claim}[theorem]{Claim}
\newtheorem{corollary}[theorem]{Corollary}
\newtheorem{definition}[theorem]{Definition}
\newtheorem{observation}[theorem]{Observation}
\newtheorem{conjecture}[theorem]{Conjecture}
\newcommand{\beq}{\begin{eqnarray}}
\newcommand{\eeq}{\end{eqnarray}}
\newcommand{\ket}[1]{|#1\rangle}
\newcommand{\bra}[1]{\langle#1|}
\newcommand{\Tr}{\mbox{\rm Tr}}
\newcommand{\Id}{\ensuremath{\mathop{\rm Id}\nolimits}}
\newcommand{\Es}[1]{\textsc{E}_{#1}}
\newcommand{\N}{\ensuremath{\mathbb{N}}}
\newcommand{\eps}{\varepsilon}
\newcommand{\MA}{\mathrm{MA}}
\newcommand{\MIP}{\mathrm{MIP}}
\newcommand{\AMIP}{\mathrm{MIP}_{ac}}
\newcommand{\NEXP}{\mathrm{NEXP}}
\newcommand{\EXP}{\mathrm{EXP}}
\newcommand{\TIME}{\mathrm{TIME}}
\newcommand{\DTIME}{\mathrm{DTIME}}
\newcommand{\classfont}{\mathrm}
\newcommand{\MIPstar}{\classfont{MIP}^*}
\DeclareMathOperator{\poly}{poly}
\DeclareMathOperator{\expt}{exp}
\newcommand{\setft}[1]{\mathrm{#1}}
\newcommand{\density}[1]{\setft{D}\left(#1\right)}
\newcommand{\pos}[1]{\setft{Pos}\left(#1\right)}
\newcommand{\Ima}{\ensuremath{\textrm{Im}}}
\newcommand{\Pileq}[1]{\Pi_{\leq #1}}
\newcommand{\tildPiPxa}{\tilde{P}}
\newcommand{\tildPiQyb}{\tilde{Q}}
\newcommand{\pili}{\Pi_{ \leq i}}
\newcommand{\pilj}{\Pi_{ \leq j}}
\newcommand{\pieip}{\Pi_{=i+1}}
\newcommand{\piejp}{\Pi_{=j+1}}
\newcommand{\PiPxa}{P}
\newcommand{\PiQyb}{Q}
\begin{document}

\title{Interactive proofs with approximately commuting provers}
\author{Matthew Coudron\thanks{Massachusetts Institute of Technology. Email 
\texttt{\href{mailto:mcoudron@mit.edu}{\color{black}mcoudron@mit.edu}}. Part of this work was accomplished while the author was a visitor at the Institute for Quantum Information and Matter (IQIM) at the California Institute of Technology.}
\qquad Thomas Vidick\thanks{Department of Computing and Mathematical Sciences, California Institute of Technology, Pasadena CA 91106, USA. Email 
\texttt{\href{mailto:vidick@cms.caltech.edu}{\color{black}vidick@cms.caltech.edu}}}}

\maketitle

\begin{abstract}
The class $\MIP^*$ of promise problems that can be decided through an interactive proof system with multiple entangled provers provides a complexity-theoretic framework for the exploration of the nonlocal properties of entanglement. Very little is known in terms of the power of this class. 
The only proposed approach for establishing upper bounds is based on a hierarchy of semidefinite programs introduced independently by Pironio et al. and Doherty et al. in 2006. This hierarchy converges to a value, the field-theoretic value, that is only known to coincide with the provers' maximum success probability in a given proof system under a plausible but difficult mathematical conjecture, Connes' embedding conjecture.  No bounds on the rate of convergence are known.

\smallskip

We introduce a rounding scheme for the hierarchy, establishing that any solution to its $N$-th level can be mapped to a strategy for the provers in which measurement operators associated with distinct provers have pairwise commutator bounded by $O(\ell^2/\sqrt{N})$ in operator norm, where $\ell$ is the number of possible answers per prover. 

\smallskip

Our rounding scheme motivates the introduction of a variant of quantum multiprover interactive proof systems, called $\MIP_\delta^*$, in which the soundness property is required to hold against provers allowed to operate on the {same} Hilbert space as long as the commutator of operations performed by distinct provers has norm at most $\delta$.  Our rounding scheme implies the upper bound $\MIP_\delta^* \subseteq \DTIME(\exp(\exp(\poly)/\delta^2))$. In terms of lower bounds we establish that $\MIP^*_{2^{-\poly}}$, with completeness $1$ and soundness $1-2^{-\poly}$, contains $\NEXP$.  The relationship of $\MIP_\delta^*$ to $\MIPstar$ has connections with the mathematical literature on approximate commutation. Our rounding scheme gives an elementary proof that the Strong Kirchberg Conjecture implies that $\MIPstar$ is computable. We also discuss applications to device-independent cryptography.

\end{abstract}

\section{Introduction}

In a multiprover interactive proof system, a \emph{verifier} with bounded resources (a polynomial-time Turing machine) interacts with multiple all-powerful but non-communicating \emph{provers} in an attempt to verify the truth of a mathematical statement --- the membership of some input $x$, a string of bits, in a language $L$, such as $3$-SAT. The provers always collaborate to maximize their chances of making the verifier accept the statement, and their maximum probability of success in doing so is called the \emph{value} $\omega = \omega(x)$ of the protocol. (We will sometimes refer to a given protocol as an ``interactive game''.) A proof system's \emph{completeness} $c$ is the smallest value of $\omega(x)$ over all $x\in L$, while its soundness $s$ is the largest value of $\omega(x)$ over $x\notin L$; a protocol is sound if $s<c$.  

The class of all languages that have multiprover interactive proof systems with $c\geq 2/3$ and $s\leq 1/3$, denoted $\MIP$, is a significant broadening of its non-interactive, single-prover analogue $\MA$, as is witnessed by the characterization $\MIP = \NEXP$~\cite{BabForLun91CC}. This result is one of the cornerstones on which the PCP theorem~\cite{AroSaf98JACM,AroLunMotSudSze98JACM} was built, with consequences ranging from cryptography~\cite{BenGolKilWig88STOC} to hardness of approximation~\cite{FeiGolLovSafSze96JACM}. 

Quantum information suggests a natural extension of the class $\MIP$.  The laws of quantum mechanics assert that, in the physical world, a set of non-communicating provers may share an arbitrary entangled quantum state, a physical resource which strictly extends their set of strategies but provably does not allow them to communicate.  The corresponding extension of $\MIP$ is the class $\MIP^*$ of all languages that have multiprover interactive proof systems with entangled provers~\cite{KobMat03JCSS}. 

Physical intuition for the significance of the prover's new resource, entanglement, dates back to Einstein, Podolsky and Rosen's paradoxical account~\cite{epr} of the consequences of quantum entanglement, later clarified through Bell's pioneering work~\cite{Bell:64a}.
To state the relevance of Bell's results more precisely in our context we first introduce the mathematical formalism used by Bell to model locality. With each prover's private space is associated a separate Hilbert space. The joint quantum state of the provers is specified by a unit vector $\ket{\Psi}$ in the tensor product of their respective Hilbert spaces. Upon receiving its query from the verifier, each prover applies a local measurement (a positive operator supported on its own Hilbert space) the outcome of which is sent back to the verifier as its answer. The supremum of the provers' probability of being accepted by the verifier, taken over all Hilbert spaces, states in their joint tensor product, and local measurements, is called the entangled value $\omega^*$ of the game.  The analogue quantity for ``classical" provers (corresponding to shared states which are product states) is denoted $\omega$.   

Bell's work and the extensive literature on Bell inequalities~\cite{Clauser:69a,Arvind:02} and quantum games~\cite{CHTW04} establishes that there are protocols, or interactive games, for which $\omega^* > \omega$. This simple fact has important consequences for interactive proof systems. First, a proof system sound with classical provers may no longer be so in the presence of entanglement. Cleve et al.~\cite{CHTW04} exhibit a class of restricted interactive proof systems, XOR proof systems, such that the class with classical provers equals $\NEXP$ while the same proof systems with entangled provers cannot decide any language beyond $\EXP$. Second, the completeness property of a proof system may also increase through the provers' use of entanglement. As a result optimal strategies may require the use of arbitrarily large Hilbert spaces for the provers --- no explicit bound on the dimension of these spaces is known as a function of the size of the game. In fact no better upper bound on the class $\MIPstar$ is known other than its languages being recursively enumerable: they may not even be decidable! This unfortunate state of affairs stems from the fact that, while the value $\omega^*$ may be approached from below through exhaustive search in increasing dimensions, there is no verifiable criterion for the termination of such a procedure. 

\paragraph{Bounding entangled-prover strategies.}
The question of deriving algorithmic methods for placing upper bounds on the entangled value $\omega^*$ of a given protocol has long frustrated researchers' efforts. Major progress came in 2006 through the introduction of a hierarchy of relaxations based on semidefinite programming~\cite{DLTW08,NPA07} that we will refer to as the QC SDP hierarchy.\footnote{Here ``QC'' stands for ``Quantum Commuting''.} These relaxations follow a similar spirit as e.g. the Lasserre hierarchy in combinatorial optimization~\cite{Laurent03lasserre}, and can be formulated using the language of sums of squares of \emph{non-commutative} polynomials. In contrast with the commutative setting, this leads to a hierarchy that is in general infinite and need not converge at any finite level. 

The limited convergence results that are known for the QC SDP hierarchy involve a formalization of locality for quantum provers which originates in the study of infinite-dimensional systems such as those that arise in quantum field theory. Here the idea is that observations made at different space-time locations should be represented by operators which, although they may act on the same Hilbert space, should nevertheless commute --- a minimal requirement ensuring that the joint outcome of any two measurements made by distinct parties should be well-defined and independent of the order in which the measurements were performed.  

For the case of finite-dimensional systems this seemingly weaker condition is equivalent to the existence of a tensor product representation~\cite{DLTW08}. In contrast, for the case of infinite-dimensional systems the two formulations are not known to be equivalent. This question, known as Tsirelson's problem in quantum information, was recently shown to be equivalent to a host of deep mathematical conjectures~\cite{ScholzW08,JungeNPPSW11}, in particular Connes' embedding conjecture~\cite{Connes76} and Kirchberg's QWEP conjecture~\cite{Kirchberg93qwep}. 
The validity of these conjectures has a direct bearing on our understanding of $\MIPstar$. The QC SDP hierarchy is known to converge to a value called the \emph{field-theoretic value} $\omega^f$ of the game, which is the maximum success probability achievable by commuting strategies of the type described above. 
A positive answer to Tsirelson's conjecture thus implies that $\omega^*=\omega^f$ and both quantities are computable. However, even assuming the conjecture and in spite of strong interest (the use of the first few levels of the hierarchy has proven extremely helpful to study a range of questions in device independence~\cite{BancalSS14data,YangVBSN14robust} and the study of nonlocality~\cite{PalV10I3322}) absolutely no bounds have been obtained on the convergence rate of the hierarchy. It is only known that if a certain technical condition, called a rank loop, holds, then convergence is achieved~\cite{NPA08NJP}; unfortunately the condition is computationally expensive to verify (even for low levels of the hierarchy) and, in general, may not be satisfied at any finite level.

Beyond the obvious limitations for practical applications, these severe computational difficulties are representative of the intrinsic difficulty of working with the model of entangled provers. 
Our work is motivated by this state of affairs: we establish the first quantitative convergence results for the quantum SDP hierarchy. Our main observation is that successive levels of the hierarchy place bounds on the value achievable by provers employing a relaxed notion of strategy in which measurements applied by distinct provers are allowed to \emph{approximately commute}: their commutator is bounded, in operator norm, by a quantity that tends to zero as the number of levels in the hierarchy grows. 

\subsection*{A rounding scheme for the QC SDP hierarchy} \label{sec:results}

Our main technical result is a rounding procedure for the QC SDP hierarchy of semidefinite programs. The procedure maps any feasible solution to the $N$-th level of the hierarchy to a set of measurement operators for the provers that approximately commute. For simplicity we state and prove our results for the case of a single round of interaction with two provers and classical messages only. Extension to multiple provers is straightforward; we expect generalizations to multiple rounds and quantum messages to be possible but leave them for future work.

\begin{definition}
An $(m,\ell)$ strategy for the provers is specified by two sets of $m$ POVMS $\{A_x^a\}_{1\leq a \leq \ell}$ and $\{B_y^b\}_{1\leq b \leq \ell}$ with $\ell$ outcomes each, where $x,y\in\{1,\ldots,m\}$. 

A strategy is said to be $\delta$-AC if for every $x,y,a$ and $b$, $\|A_x^a B_y^b - B_y^b A_x^a\|\leq \delta$, where $\|\cdot\|$ denotes the operator norm.
\end{definition}

Our main theorem on the QC SDP hierarchy can be stated as follows. (We refer to Section~\ref{Sec:def-sdp} for a definition of the hierarchy.)

\begin{theorem}\label{thm:sdp}
Let $G$ be a $2$-prover one-round game with classical messages in which each prover has $\ell$ possible answers, and $\omega_{QCSDP}^N(G)$ the optimum of the $N$-th level of the QC SDP hierarchy. Then there exists a $\delta=O(\ell^2/\sqrt{N})$ and a $\delta$-AC strategy for the provers with success probability $\omega_{QCSDP}^N(G)$ in $G$.\footnote{Due to the approximate commutation of the provers' strategies the success probability of $\delta$-AC strategies may a priori depend on the order in which the measurement operators are applied. In our context the parameter $\delta$ will always be small enough that we can neglect this effect. Moreover, for the particular kind of strategies constructed in our rounding scheme the value will not be affected by the order.}
\end{theorem}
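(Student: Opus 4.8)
The plan is to apply a GNS-type construction to a feasible point of the $N$-th level and then ``round'' the resulting (only partially defined) operators into a genuine strategy by smoothing over the degree grading. A feasible point is a positive semidefinite, consistent moment matrix $\Gamma$ indexed by words of degree at most $N$ in the formal symbols $\{A_x^a\},\{B_y^b\}$, satisfying the defining relations of the hierarchy --- in particular the cross-commutation relations $A_x^a B_y^b = B_y^b A_x^a$. Quotienting the span of degree-$\le N$ words by the kernel of the form $\langle u\mid v\rangle:=\Gamma[u^\ast v]$ produces a finite-dimensional Hilbert space $\mathcal H$, a unit vector $\ket\psi$ (image of the empty word), and, for each $A_x^a$, a left-multiplication operator $L_{A_x^a}:\ket w\mapsto\ket{A_x^a w}$; since applying a symbol raises the degree by at most one, $L_{A_x^a}$ is well defined from the degree-$\le N-1$ subspace $P$ into $\mathcal H$, and similarly for $L_{B_y^b}$. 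I would then set $\widetilde A_x^a:=\Pi L_{A_x^a}\Pi$ and $\widetilde B_y^b:=\Pi L_{B_y^b}\Pi$, where $\Pi$ projects onto $P$: positivity of each $\widetilde A_x^a$ follows from the localizing-matrix constraints, and $\sum_a\widetilde A_x^a=\Pi$ from $\sum_a A_x^a=I$, so $\{\widetilde A_x^a\}_a,\{\widetilde B_y^b\}_b$ are honest POVMs on $P$. Crucially, because $\ket\psi$ is the empty word and only words of degree $\le 2$ enter $\bra\psi\widetilde A_x^a\widetilde B_y^b\ket\psi$, the truncation is invisible there and this quantity equals $\Gamma[A_x^a B_y^b]$; hence $\{\widetilde A,\widetilde B\}$ already has value \emph{exactly} $\omega_{QCSDP}^N(G)$.

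The key structural observation comes next: the commutation relation forces $\Gamma$ to identify the words $A_x^a B_y^b w$ and $B_y^b A_x^a w$ whenever both have degree $\le N$, i.e. whenever $|w|\le N-2$. Therefore $\widetilde A_x^a\widetilde B_y^b$ and $\widetilde B_y^b\widetilde A_x^a$ agree on the degree-$\le N-3$ subspace of $P$, and, since $[\widetilde A_x^a,\widetilde B_y^b]$ is anti-Hermitian, it is \emph{supported on the top couple of degree layers} of $P$. Of course this alone only yields $\|[\widetilde A_x^a,\widetilde B_y^b]\|\le 2$, which is useless; the whole content of the theorem is that the contribution of those top layers can be damped down.

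The damping is achieved by averaging over the degree-cutoff scale. Concretely, I would replace the sharp projection $\Pi$ by a ``ramp'' $G=\sum_d g(d)\,\Pi_{=d}$, diagonal in the degree grading, with $g\equiv 1$ on degrees $\le 2$, $g$ nonincreasing with increments $O(1/N)$, and $g=O(1/N)$ on the top two layers of $P$, and take $\bar A_x^a:=G^{1/2}\widetilde A_x^a G^{1/2}$, $\bar B_y^b:=G^{1/2}\widetilde B_y^b G^{1/2}$, completing each family to a POVM summing to the identity by folding $I-G$ (which vanishes on degrees $\le 2$) into one outcome. Because $\widetilde A_x^a$ changes the degree by at most one and $g$ varies by $O(1/N)$ per step, $\|[G,\widetilde A_x^a]\|=O(1/N)$; expanding
\[
\bar A_x^a\bar B_y^b-\bar B_y^b\bar A_x^a \;=\; G^{1/2}[\widetilde A_x^a,\widetilde B_y^b]\,G^{3/2} \;+\; G^{1/2}\bigl(\widetilde A_x^a[G,\widetilde B_y^b]-\widetilde B_y^b[G,\widetilde A_x^a]\bigr)G^{1/2},
\]
the last two terms have norm $O(1/N)$, while in the first term $[\widetilde A_x^a,\widetilde B_y^b]$ is supported exactly where $G$ --- hence $G^{1/2}$ --- has norm $O(1/\sqrt N)$, so the first term has norm $O(1/\sqrt N)$ as well. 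Since $g$ is flat on degrees $\le 2$, the value computation above is untouched, so $\{\bar A,\bar B\}$ still has value $\omega_{QCSDP}^N(G)$. Chasing the constants --- in particular the bookkeeping needed to pass between the (projective) measurements most natural to the hierarchy and the general $\ell$-outcome POVMs of the statement, which is where the dependence on $\ell$ enters --- then yields $\delta=O(\ell^2/\sqrt N)$.

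The one delicate point is the smoothing step itself: one must choose $G$ so that \emph{simultaneously} the smoothed operators remain a bona fide POVM, the game value is preserved on the nose (which forces $g$ to be flat near degree $0$), and the commutator is controlled in operator norm rather than merely in its action on $\ket\psi$ (which is what forces the two-sided conjugation $G^{1/2}(\cdot)G^{1/2}$ and the quantitative bound on $[G,\widetilde A_x^a]$). I expect this to be the main obstacle, together with the purely technical but fiddly $\ell$-dependence; the GNS construction, the localizing-matrix positivity, and the extension to more than two provers (smooth each prover's operators against the same ramp) should all be routine.
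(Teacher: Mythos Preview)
Your approach is correct and follows the same high-level strategy as the paper: build the GNS space from an optimal $N$-th level solution, observe that the truncated left-multiplication operators $\widetilde A_x^a,\widetilde B_y^b$ already reproduce the SDP value on the cyclic vector and have commutator supported on the top two degree layers, and then smooth the hard cutoff so that those layers carry little weight. The paper's smoothing is implemented differently, however: instead of conjugating by a single ramp $G^{1/2}(\cdot)G^{1/2}$, it takes a Ces\`aro average over cutoffs,
\[
\tilde P_x^a \;=\; \frac{1}{N-1}\sum_{i=1}^{N-1}\Pi_{\le i}\,\Pi_{P_x^a}\,\Pi_{\le i},
\]
and bounds $\|[\tilde P_x^a,\tilde Q_y^b]\|$ by expanding the double sum and controlling terms of the form $\sum_i p_i\,\Pi_{\le i}P\,\Pi_{=i+1}$ via $\big(\sum_i p_i\,\Pi_{\le i}P\,\Pi_{=i+1}\big)\big(\cdots\big)^\dagger\le \sum_i p_i^2\,\Id$. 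Your ramp-conjugation and the resulting commutator identity
\[
[\bar A,\bar B]=G^{1/2}[\widetilde A,\widetilde B]\,G^{3/2}+G^{1/2}\bigl(\widetilde A[G,\widetilde B]-\widetilde B[G,\widetilde A]\bigr)G^{1/2}
\]
is a clean alternative: the tridiagonality of $\widetilde A,\widetilde B$ in the degree grading gives $\|[G,\widetilde A]\|=O(1/N)$, while the support of $[\widetilde A,\widetilde B]$ on the top layers, where $G^{1/2}$ has norm $O(1/\sqrt N)$, handles the first term.

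One point where you are off is the origin of the $\ell^2$. In the paper it comes from the ``garbage'' outcome $\tilde P_x^{\mathrm{garbage}}=I-\sum_a\tilde P_x^a$, whose commutator with $\tilde Q_y^b$ is bounded by the triangle inequality as $\sum_a\|[\tilde P_x^a,\tilde Q_y^b]\|=O(\ell/\sqrt N)$, and similarly $O(\ell^2/\sqrt N)$ for both garbages. In your construction the garbage is simply $I-G$, so $[\,I-G,\bar B_y^b\,]=-G^{1/2}[G,\widetilde B_y^b]G^{1/2}$ has norm $O(1/N)$ with \emph{no} $\ell$-dependence, and $[\,I-G,\,I-G\,]=0$. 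So your scheme in fact yields $\delta=O(1/\sqrt N)$ outright (and, if you also use the support of $[\widetilde A,\widetilde B]$ on the $G^{3/2}$ side, $O(1/N)$ for the non-garbage commutators as well); the $\ell^2$ is not a matter of ``projective vs.\ POVM bookkeeping'' in your argument.
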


Our result is the first to derive the condition that the \emph{operator norm} of commutators is small. In contrast it is not hard to show that a feasible solution to the first level of the hierarchy already gives rise to measurement operators that exactly satisfy a commutation relation \emph{when evaluated on the state} (corresponding to the zeroth-order vector provided by the hierarchy). While the latter condition can be successfully exploited to give an exact rounding procedure from the first level for the class of XOR games~\cite{CHTW04}, and an approximate rounding for the more general class of unique games~\cite{KRT10}, we do not expect it to be sufficient in general.  In particular, even approximate tightness of the first level of the hierarchy for three-prover games would imply $\EXP=\NEXP$~\cite{Vidick13xor}.  
We will further show that the problem of optimizing over strategies which approximately commute, to within sufficiently small error and in \emph{operator norm}, is $\NEXP$-hard (see Section~\ref{sec:mipdelta} for details). 

The proof of Theorem~\ref{thm:sdp} is constructive: starting from any feasible solution to the $N$-th level of the QC SDP hierarchy we construct measurement operators for the provers with pairwise commutators bounded by $\delta$ in operator norm, and which achieve a value in the game that equals the objective value of the $N$-th level SDP. Recall that this SDP has $O(m\ell)^N$ vector variables indexed by strings of length at most $N$ over the formal alphabet $\{P_x^a,Q_y^b\}$ containing a symbol for each possible (question,answer) pair to any of the provers. Our main idea is to introduce a ``graded'' variant of the construction in~\cite{NPA08NJP} (which was used to show convergence under the rank loop constraint). Rather informally, 
the rounded measurement operators, $\{\tilde{P}_x^a\}$ for the first prover and $\{\tilde{Q}_y^b\}$ for the second, can be defined as follows: 
\begin{align*}
\tilde{P}_x^a \equiv  \frac{1}{N-1} \sum_{i=1}^{N-1}  \Pileq{i} \Pi_{P_x^a} \Pileq{i}\quad\text{and}\quad
\tilde{Q}_y^b \equiv \frac{1}{N-1} \sum_{j=1}^{N-1}  \Pileq{j} \Pi_{Q_y^b} \Pileq{j}.
\end{align*}
Here $\Pi_{P_x^a}$ and $\Pi_{Q_y^b}$ are projectors as defined in~\cite{NPA08NJP}, i.e. as the projection onto vectors associated with strings ending in the formal label $P_x^a$, $Q_y^b$ of the corresponding operator. The novelty is the introduction of the $\Pileq{i}$, which project onto the subspace spanned by all vectors associated with strings of length at most $i$. Thus $\tilde{P}_x^a$ itself is not a projector, and it gives more weight to vectors indexed by shorter strings.

The intuition behind this rounding scheme is as follows. The winning probability is unchanged because it is determined by the action of the measurement operators on the subspace $\Ima( \Pileq{1})$.  On the other hand, the rounded operators approximately commute in the operator norm because the original operators commuted exactly on the subspace $\Ima(\Pileq{N-1})$, and we have now shifted the weight of the operators so that they are supported on that subspace.  Furthermore, while truncating the operators abruptly at level $N-1$ (by conjugating by $ \Pileq{N-1}$ for example) could result in a large commutator, we perform a ``smooth'' truncation across vectors indexed by strings of increasing length.

\subsection*{Interactive proofs with approximately commuting provers}\label{sec:mipdelta}

Motivated by the rounding procedure ascertained in Theorem~\ref{thm:sdp} we propose a modification of the class $\MIP^*$ in which the assumption that isolated provers must perform perfectly commuting measurements is relaxed to a weaker condition of \emph{approximately commuting} measurements.  
\begin{definition} \label{def::MIPdelt} Let $\MIP^*_\delta(k,c,s)$ be the class of promise problems $(L_{yes},L_{no})$ that can be decided by an interactive proof system in which the verifier exchanges a single round of classical messages with $k$ quantum provers $P_1,\ldots,P_k$ and such that:

\begin{itemize}[itemsep=0pt, topsep=5pt, partopsep=5pt]
\item If the input $x\in L_{yes}$ then there exists a perfectly commuting strategy for the provers that is accepted with probability at least $c$,
\item If $x\in L_{no}$ then any $\delta$-AC strategy is accepted with probability at most $s$. 
\end{itemize}
\end{definition}

Note that the definition of $\MIP^*_\delta$ requires the completeness property to be satisfied with perfectly commuting provers; indeed we would find it artificial to seek protocols for which optimal strategies in the ``honest'' case would be required to depart from the commutation condition. Instead, only the soundness condition is relaxed by giving \emph{more} power to the provers, who are now allowed to apply any ``approximately commuting'' strategy. The ``approximately'' is quantified by the parameter $\delta$,\footnote{As a first approximation the reader may think of $\delta$ as a parameter that is inverse exponential in the input length $|x|$. In terms of games, this corresponds to $\delta$ being inverse polynomial in the number of questions in the game, which is arguably the most natural setting of parameters.} and for any $\delta'\leq \delta$ the inclusions $\MIP^*_\delta \subseteq \MIP^*_{\delta'} \subseteq \MIP^*$ trivially hold. It is important to keep in mind that while $\delta$ can be a function of the size of the protocol it must be independent of the dimension of the provers' operators, which is unrestricted. 

$\delta$-AC strategies were previously considered by Ozawa~\cite{Ozawa13commuting} in connection with Tsirelson's problem. Ozawa proposes a conjecture, the ``Strong Kirchberg Conjecture (I)'', which if true implies the equality $\MIP^* = \cup_{\delta>0}\, \MIP_\delta^*$. We state and discuss the conjecture further as Conjecture~\ref{conj:ozawa} below. Unfortunately the conjecture seems well beyond the reach of current techniques (Ozawa himself formulates doubts as to its  validity). However, in our context less stringent formulations of the conjecture would still imply conclusive results relating $\MIP^*_\delta$ to $\MIP^*$; we discuss such variants in Section~\ref{sec:discussion}. 

Further motivation for the definition of $\MIP^*_\delta$ may be found by thinking operationally --- with e.g. cryptographic applications in mind, how does one ascertain that ``isolated'' provers indeed apply commuting  measurements? The usual line of reasoning applies the laws of quantum mechanics and special relativity to derive the tensor product structure from space-time separation. However, not only is strict isolation virtually impossible to enforce in all but the simplest experimental scenarios, but the implication ``separation $\implies$ tensor product'' may itself be subject to questioning --- in particular it may not be a testable prediction, at least not to precision that exceeds the number of measurements, or observations, performed. Relaxations of the tensor product condition have been previously considered in the context of device-independent cryptography; for instance Silman et al.~\cite{SPM13} require that the joint measurement performed by two isolated devices be close, in operator norm, to a tensor product measurement. Our approximate commutation condition imposes a weaker requirement, and thus our convergence results on the hierarchy also apply to their setting; we discuss this in more detail in Section~\ref{subsec::SPM}. 

Theorem~\ref{thm:sdp} can be interpreted as evidence that the hierarchy converges at a polynomial rate to the maximum success probability for $\AMIP^*$ provers. More formally, it implies the inclusion $\MIP^*_\delta \subseteq \TIME(\expt(\exp(\poly)/\delta^2))$ for any $\delta>0$, thereby justifying our claim that the class $\MIP^*_\delta$ is computationally bounded. This stands in stark contrast with $\MIP^* = \MIP_0^*$, for which no quantitative upper bound is known  (We note, however, that $\MIP^*$ is known to be recursively enumerable).

Having shown that the new class has ``reasonable'' complexity, it is natural to ask whether the additional power granted to the provers might actually make the class trivial --- could provers that are $\delta$-AC be no more useful than a single quantum prover, even for very small $\delta$? The following theorem shows that this is not the case. 

\begin{theorem}\label{thm:nexp}
Every language in $\NEXP$ has a $2$-prover $\MIP^*$ protocol in which completeness $1$ holds with classical provers and soundness $2^{-\poly}$ holds even against provers thate are $2^{-\poly}$-AC. More formally,
$$ \NEXP \subseteq \bigcup_{p,q\in\poly} \MIP_{2^{-q}}^*(2,1,1-2^{-p}).$$
\end{theorem}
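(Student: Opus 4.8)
The plan is to start from the classical $\MIP = \NEXP$ protocol---or more precisely from a PCP-style two-prover one-round game with inverse-polynomial soundness, for instance the oracularized three-SAT game or the classical low-degree/sumcheck-based game---and argue that it remains sound even against $2^{-\poly}$-AC provers, not merely against classical provers. The key observation is that the usual consistency and satisfiability tests in such a game are \emph{projection games} (or at least games whose acceptance predicate is a low-degree polynomial in a constant number of the provers' answer bits), so that the verifier's acceptance probability under a strategy $(\{A_x^a\},\{B_y^b\})$ can be written as a fixed polynomial expression $\bra{\Psi} \sum c_{\cdots} A_{x_1}^{a_1} B_{y_1}^{b_1} \cdots \ket{\Psi}$ in a bounded number of measurement operators. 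First I would fix such a game $G$ with completeness $1$ achieved by classical (hence perfectly commuting) provers and soundness $s_0 = 1 - 1/\poly$ against classical provers, with $m = 2^{\poly(|x|)}$ questions and $\ell = \poly$ answers.

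Next I would show the soundness transfer: if a $\delta$-AC strategy wins $G$ with probability $p$, then there is a \emph{classical} (commuting, and in fact deterministic after diagonalization) strategy winning with probability at least $p - \poly(m,\ell)\cdot\delta$. This is the heart of the argument. The idea is that for a $0$-AC (perfectly commuting) strategy on a single Hilbert space, all the operators $A_x^a, B_y^b$ appearing in the acceptance expression pairwise commute \emph{across provers} and, within a prover, the $\{A_x^a\}_a$ for fixed $x$ form a POVM; by the same reasoning as in~\cite{DLTW08} such a commuting strategy is equivalent to a classical strategy, so $\omega^f(G) = \omega(G) = s_0$ for this game. For a $\delta$-AC strategy, I would massage the operators into a genuinely commuting one by paying, for each commutator swap needed to bring the acceptance polynomial into a ``sorted'' form, an additive error of at most $\delta$ in operator norm (using $\|A_x^a B_y^b - B_y^b A_x^a\| \le \delta$ and submultiplicativity, since all POVM elements have norm $\le 1$); the number of swaps is bounded by the degree of the acceptance polynomial times the number of monomials, which is $\poly(m,\ell)$. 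More carefully, one localizes: the verifier only ever queries a pair $(x,y)$, so the relevant commutators are exactly the ones bounded by $\delta$, and one shows the $\delta$-AC strategy's value is within $\poly(m,\ell)\delta$ of the value of the ``symmetrized'' commuting strategy obtained by replacing products by their symmetrized versions. Then the commuting strategy rounds to a classical one losing nothing further, and by classical soundness its value is $\le s_0$. Hence $p \le s_0 + \poly(m,\ell)\,\delta$.

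Finally I would choose parameters. Take the classical soundness gap $1 - s_0 = 2^{-p'(|x|)}$ for a polynomial $p'$ (amplifiable by parallel repetition of the classical game, which preserves completeness $1$ and, crucially, whose soundness analysis I only need in the \emph{classical} world since the transfer step above reduces to that), and set $\delta = 2^{-q(|x|)}$ with $q$ a polynomial large enough that $\poly(m,\ell)\cdot 2^{-q} \le \tfrac12 \cdot 2^{-p'}$; since $m = 2^{\poly}$ this just requires $q$ to dominate a fixed polynomial. Then yes-instances pass with probability $1$ using classical (perfectly commuting) provers, and no-instances are rejected by any $2^{-q}$-AC strategy with probability at least $2^{-p'}/2 = 2^{-p}$ for $p = p'+1$, giving $L \in \MIP^*_{2^{-q}}(2,1,1-2^{-p})$ as desired.

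I expect the main obstacle to be the soundness-transfer step, specifically making the bookkeeping in ``value of a $\delta$-AC strategy is within $\poly(m,\ell)\delta$ of a commuting strategy'' fully rigorous. Two subtleties need care: (i) as the paper's own footnote to Theorem~\ref{thm:sdp} warns, the acceptance probability of a $\delta$-AC strategy is only well-defined up to ambiguity in operator ordering, so one must fix a canonical ordering in the verifier's test and track that the same ordering is used throughout; and (ii) one must ensure the error does not scale with the Hilbert-space dimension---this is automatic because every bound used is in operator norm and every POVM element has norm at most $1$, but it is exactly the point where a naive ``trace-norm on the state'' argument would fail, so the argument must genuinely use the \emph{operator-norm} $\delta$-AC hypothesis. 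A secondary, more routine obstacle is simply citing or reproving that the specific $\NEXP$-complete classical two-prover game one starts from has the clean low-degree acceptance predicate and exponential-size but polynomial-answer structure needed; this is standard (e.g.\ via the oracularized PCP of~\cite{BabForLun91CC,FeiGolLovSafSze96JACM}) but should be stated precisely.
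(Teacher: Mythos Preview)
Your soundness-transfer step contains a genuine gap that breaks the argument. You write that ``by the same reasoning as in~\cite{DLTW08} such a commuting strategy is equivalent to a classical strategy, so $\omega^f(G)=\omega(G)=s_0$.'' That is false: \cite{DLTW08} shows that finite-dimensional \emph{commuting} strategies are equivalent to \emph{tensor-product} strategies, not to classical ones. Any Bell inequality violation (e.g.\ CHSH) already refutes the claimed equivalence. Thus the inequality $\omega^*(G)\le s_0$ is not free; it is exactly the nontrivial content of~\cite{ikm09} for this specific protocol, and its proof uses the particular structure of the game (the dummy question and the consistency test). You cannot invoke only the \emph{classical} soundness of the underlying PCP and then hope that commutativity alone collapses the strategy to a classical one.

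Even granting $\omega^*(G)\le s_0$ as a black box, your passage from $\delta$-AC to $0$-AC is not a construction. Reordering factors in each monomial $\Tr(A_x^aB_y^b\rho)$ changes nothing (there are only two operators), and ``replacing products by their symmetrized versions'' does not produce measurement operators for any strategy---it produces numbers. What you would actually need is commuting POVMs $(\tilde A_x^a,\tilde B_y^b)$ close to $(A_x^a,B_y^b)$ in a way that preserves the value; that is precisely Ozawa's Strong Kirchberg Conjecture (Conjecture~\ref{conj:ozawa}), which the paper discusses at length as \emph{open}. So your plan either assumes an open conjecture or silently assumes the result you are trying to prove.

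The paper's proof does not attempt any such reduction. Following~\cite{ikm09}, it works directly with the $\delta$-AC strategy: from the second prover's operators it defines single-index POVMs $C_i^c$, builds a probabilistic classical assignment via the sequential measurement $\Tr(\sqrt{C_N^{c_N}}\cdots\sqrt{C_1^{c_1}}\rho\sqrt{C_1^{c_1}}\cdots\sqrt{C_N^{c_N}})$, and shows its marginals on any PCP triple are close to the distribution $\Tr(A_{ijk}^{abc}\rho)$. The $\delta$-AC hypothesis enters only when commuting the auxiliary $\hat A$ operators past the $C$'s (a bounded number of swaps per step, each costing $\delta^{c_1}$ in trace distance via~\eqref{eq:com-bound}), and the dummy question is what makes the single-index $C_i$ well-defined and consistent. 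This is where the real work lives, and it is not bypassed by any generic ``$\delta$-AC $\approx$ commuting $\approx$ classical'' argument.
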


Theorem~\ref{thm:nexp} provides a direct analogue of the same lower bound for $\MIP^*$~\cite{ikm09}, and is proven using the same technique. We conjecture that the inclusion $\NEXP\subseteq \MIP^*_{2^{-\poly}}(3,1,2/3)$ also holds, and that this can be derived by a careful extension of the results in~\cite{IV12,Vidick13xor}.

\paragraph{Organization.}
Section~\ref{sec:prelim} contains some preliminaries, including the definition of $\AMIP^*$ and the QC SDP hierarchy. In Section~\ref{commutatorbound} we present and analyze our rounding scheme for the hierarchy. In Section~\ref{sec:nexp} we prove our lower bound on $\MIP^*$. We end in Section~\ref{sec:discussion} with a discussion of the relevance of the study of $\AMIP^*$ for that of $\MIP^*$ and closely related results from the mathematical literature.

\section{Preliminaries}\label{sec:prelim}

\paragraph{Notation.} Given an integer $N$, we use $[N]$ to refer to the set $\{1,\ldots,N\}$. We use the symbol $\equiv$ to mark a definition. $\mathfrak{S}_k$ is the set of all permutations $\sigma:[k]\to[k]$. $\poly$ denotes the set of all polynomials. We write $\|X\|$ for the operator norm of a matrix $X$. Given matrices $A,B$, $[A,B]\equiv AB-BA$ denotes their commutator.

\subsection{Approximately commuting provers}\label{interactiveproofs}

In this section we define the class $\MIP_\delta^*$ for $\delta>0$ and state some basic properties. We assume the reader is familiar with quantum interactive proof systems and the definition of the class $\MIP^*$; we refer to e.g.~\cite{KobMat03JCSS} for details. Throughout we will use $k$ to denote the number of provers and $c,s$ the completeness and soundness parameters. Although one could define the class more generally, we restrict our attention to protocols involving a single round of interaction between the verifier and the provers. As is customary we will also call such one-round protocols \emph{games}. 

\begin{definition} 
A $k$-prover game $G$ is specified by the following: integers $Q_1,\ldots,Q_k$, representing the number of possible questions to each prover, and a distribution $\pi$ on $[Q_1]\times \cdots \times [Q_k]$; integers $A_1,\ldots,A_k$, representing the number of possible answers from each prover; a mapping $V:([Q_1]\times \cdots \times [Q_k])\times([A_1]\times \cdots \times [A_k])\to\{0,1\}$ representing the referee's acceptance criterion. 
\end{definition}

Next we introduce a notion of \emph{approximately commuting} strategies for the provers. 

\begin{definition}
Given a game $G$, a \emph{strategy} for the provers consists of the following:
\begin{itemize}
\item A finite-dimensional Hilbert space $\mathcal{H}$,
\item For every $i\in [k]$ and $q\in [Q_i]$, a POVM $\{(A^{(i)})_q^a\}_{a\in [A_i]}$, where each $(A^{(i)})_q^a \in \pos{\mathcal{H}}$ and $\sum_a (A^{(i)})_q^a = \Id$,
\item A density matrix $\rho \in \density{\mathcal{H}}$. 
\end{itemize}
For any $\delta>0$ we say that the strategy $((A^{(1)})_{q}^{a}, \ldots,(A^{(k)})_{q'}^{a'},\rho)$ is $\delta$-AC if $\|[(A^{(i)})_{q}^{a},(A^{(j)})_{q'}^{a'}]\|\leq \delta$ for every $i\neq j \in [k]$ and $q\in [Q_i]$, $q'\in [Q_j]$, $a\in [A_i]$ and $a'\in[A_j]$. 
\end{definition}

Finally we define the success probability, or \emph{value}, achieved by a given strategy in a game. 

\begin{definition}
Let $G$ be a game and $((A^{(1)})_{q}^{a}, \ldots,(A^{(k)})_{q'}^{a'},\rho)$ a strategy in $G$. The strategy's value is defined as
$$ \omega^*\big( ((A^{(1)})_{q}^{a}, \ldots,(A^{(k)})_{q'}^{a'},\rho);G\big)\,:=\,\max_{\sigma\in\mathfrak{S}_k} \left | \sum_{q_i\in Q_i} \pi(q_i) \sum_{a_1,\ldots, a_k} V(a_i,q_i) \Tr\big( (A^{(\sigma(1))})_{q_{\sigma(1)}}^{a_{\sigma(1)}} \cdots (A^{(\sigma(k))})_{q_{\sigma(k)}}^{a_{\sigma(k)}} \rho\big)  \right |.$$
The \emph{$\delta$-AC value} $\omega_\delta^*(G)$ of the game $G$ is the supremum over all $\delta$-AC strategies of the strategy's value in $G$. 
\end{definition}

In the above definition the introduction of the supremum over all permutations of the provers amounts to allowing the provers to choose the order in which their respective POVM are applied to the shared state (the ordering should be the same throughout, independently of the questions asked). Since POVM elements applied by distinct provers do not necessarily commute the choice of ordering may affect a strategy's value. Nevertheless, for $\delta$-AC strategies it is easy to see that any two orderings will result in values that differ by at most $k^2\delta$; for our purposes the parameter $\delta$ will always be small enough that different choices of orderings would not matter and we will mostly ignore this issue throughout. Since we consider only finite-dimensional strategies, for $\delta=0$ the value $\omega^*_0(G)$ reduces to what is usually called the \emph{entangled value} $\omega^*(G)$, corresponding to strategies that are perfectly commuting, or equivalently strategies that can be put in tensor product form. 

Having introduced games, strategies, and values, we are ready to define the class $\MIP_\delta^*$. 

\begin{definition}
Let $\delta,c,s:\N\to [0,1]$ be computable functions and $k\in\poly$. A language $L$ is in ${\MIP_\delta^*(k, c, s)}$ if and only if there exists a polynomial-time computable mapping from inputs $x\in\{0,1\}^*$ to $k$-prover games $G_x$ such that:
\begin{itemize}
\item In the game $G_x$, the distribution $\pi$ can be sampled in time polynomial in $|x|$, and the predicate $V$ can be computed in time polynomial in $|x|$,
\item (Completeness) If ${x \in L}$ then $\omega^*(G_x)\geq c$, i.e. there exist a $k$-prover strategy that is $0$-AC and has value at least $c$ in $G_x$, 
 \item (Soundness) If ${x \not\in L}$ then $\omega_\delta^*(G_x)\leq s$, i.e. every $\delta$-AC $k$-prover strategy has value at most $s$ in $G_x$. 
\end{itemize}
\label{dem:mip-delta}
Throughout this work we use $\poly$ to represent any polynomial in $|x|$, or equivalently, any polynomial in the length of the messages passed in the protocol.

\end{definition}

Since for every $\delta < \delta'$ a $\delta'$-AC strategy is also $\delta$-AC, it follows that $\MIP_{\delta'}^*(k, c, s)\subseteq  \MIP_\delta^*(k, c, s)$.
A choice of parameter $\delta$ that is inverse exponential in the input length seems to be the most natural, and we define 
$$\AMIP^*(k, c, s) \,:=\, \cup_{\delta\in 2^{-\poly}} \MIP_{\delta}^*(k, c, s).$$ 

We end this section with a simple claim on $\delta$-AC strategies that is well-known to hold for the case of perfectly commuting strategies: up to a small loss in the commutation parameter we may without loss of generality restrict ourselves to strategies that apply projective measurements. 

\begin{claim}\label{claim:projective}
Let $(A_q^a,B_{q'}^{a'},\rho)$ be a $\delta$-AC strategy with success probability $p$ in a certain game. Then there exists a $(|A| |A'| \delta)$-AC strategy in which all POVM are projective and that achieves the same success probability $p$. 
\end{claim}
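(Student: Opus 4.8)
The plan is to use the standard Naimark-style dilation argument, but carried out in a way that keeps track of the commutator degradation. First I would recall the classical (perfectly commuting) version: given a POVM $\{A_q^a\}_a$ on $\mathcal{H}$, one forms the Naimark dilation onto $\mathcal{H}\otimes \C^{|A|}$ by defining isometries $V_q:\ket{\psi}\mapsto \sum_a \sqrt{A_q^a}\ket{\psi}\otimes\ket{a}$, and then the projective measurement $\{P_q^a = V_q(\Id\otimes \ket{a}\bra{a})V_q^*\}$ on the image; extending to the whole ancilla space gives a genuine projective POVM on $\mathcal{H}\otimes \C^{|A|}$ whose statistics against the dilated state match the original. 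One does this independently for each prover, adjoining a separate ancilla register $\C^{|A|}$ for the first prover and $\C^{|A'|}$ for the second, so the new Hilbert space is $\mathcal{H}\otimes\C^{|A|}\otimes\C^{|A'|}$ and the new state is the image of $\rho$ under the composition of both dilation isometries.

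The key step — and the place the $|A||A'|\delta$ factor enters — is bounding the commutator of the dilated projectors $\tilde P_q^a$ (acting on $\mathcal{H}\otimes\C^{|A|}$, trivially on $\C^{|A'|}$) with $\tilde Q_{q'}^{a'}$ (acting on $\mathcal{H}\otimes\C^{|A'|}$, trivially on $\C^{|A|}$). Since the two dilations use disjoint ancilla registers, $\tilde P_q^a$ and $\tilde Q_{q'}^{a'}$ act on ``disjoint'' tensor factors except for their shared action on $\mathcal{H}$, so their commutator is controlled entirely by commutators of the operators $\sqrt{A_q^a}$ with the operators $\sqrt{B_{q'}^{a'}}$ on $\mathcal{H}$. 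The point is that $[\sqrt{A},\sqrt{B}]$ is small whenever $[A,B]$ is small: by the integral representation $\sqrt{A} = \frac{1}{\pi}\int_0^\infty \lambda^{-1/2} A(A+\lambda)^{-1}\,d\lambda$ (or, more elementarily, a power-series / Lipschitz-continuity argument for the square-root function on $[0,1]$), one gets $\|[\sqrt{A_q^a},\sqrt{B_{q'}^{a'}}]\| = O(\|[A_q^a,B_{q'}^{a'}]\|) \le O(\delta)$. Assembling $\tilde P_q^a$ out of the $\sqrt{A_q^a}$ via $V_q$ and writing the commutator $[\tilde P_q^a,\tilde Q_{q'}^{a'}]$ as a sum of terms each of which contains exactly one factor $[\sqrt{A},\sqrt{B}]$ and is otherwise a product of contractions, and then noting that forming the full projective POVM over all outcomes introduces sums over the $|A|$ (resp. $|A'|$) outcomes, yields the bound $\|[\tilde P_q^a,\tilde Q_{q'}^{a'}]\| \le |A||A'|\,O(\delta)$, which is the claimed statement up to the implicit constant (absorbed, or removed by being slightly more careful in the square-root estimate).

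I expect the main obstacle to be the square-root Lipschitz estimate together with the precise bookkeeping of where the factor $|A||A'|$ comes from. One has to be careful that $\sqrt{\cdot}$ is only Hölder-$\tfrac12$ continuous at $0$ in general, so the bound $\|[\sqrt{A},\sqrt{B}]\| = O(\|[A,B]\|)$ is not completely trivial; the cleanest route is the integral formula for $A^{1/2}$, from which $[\sqrt{A},\sqrt{B}]$ is expressed as an integral whose integrand is linear in $[A,B]$ up to resolvent factors that are bounded (one uses $\|A(A+\lambda)^{-1}\|\le 1$ and $\|(A+\lambda)^{-1}\| \le \lambda^{-1}$, and the resulting integral converges). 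A secondary subtlety is that the dilated operators should be honest POVMs on the enlarged space — one must extend the projectors defined on $\Ima(V_q)$ to the orthogonal complement, e.g. by letting the ``last'' outcome absorb everything off the image, and check that this extension does not spoil the commutator bound (it does not, since the extension terms are again built from the same $\sqrt{A}$'s and orthogonal projections onto the ancilla). Once these pieces are in place the rest — that the value is unchanged, since the dilated state reproduces all the relevant traces $\Tr(A_q^a B_{q'}^{a'}\rho)$ and their permutations — is routine.
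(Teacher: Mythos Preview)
Your overall approach --- Naimark dilation with a separate ancilla register for each prover, then conjugating standard-basis projectors by the dilation unitaries --- is exactly the paper's. Where your argument breaks is the square-root commutator estimate. You assert that the integral representation $\sqrt{A} = \tfrac{1}{\pi}\int_0^\infty \lambda^{-1/2} A(A+\lambda)^{-1}\,d\lambda$ yields $\|[\sqrt{A},\sqrt{B}]\| = O(\|[A,B]\|)$ because ``the resulting integral converges.'' It does not: writing
\[
[\sqrt{A},C] \,=\, \frac{1}{\pi}\int_0^\infty \lambda^{1/2}\,(A+\lambda)^{-1}[A,C](A+\lambda)^{-1}\,d\lambda
\]
and bounding each resolvent by $\lambda^{-1}$ gives an integrand of order $\lambda^{-3/2}\|[A,C]\|$, which diverges at $\lambda=0$. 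The best dimension-free bound is $\|[A,B^{1/2}]\| \le 2\|B\|^{1/2}\|[A,B]\|^{1/2}$ --- precisely the paper's inequality~\eqref{eq:com-bound} with $r=\tfrac12$ --- and the exponent $\tfrac12$ is sharp: for non-commuting projections $P,Q$ set $A=\eps P$, $B=\eps Q$, so that $\|[A,B]\| = \eps^2\|[P,Q]\|$ while $\|[\sqrt{A},\sqrt{B}]\| = \eps\|[P,Q]\|$. Your own caveat that $\sqrt{\cdot}$ is only H\"older-$\tfrac12$ at $0$ is exactly the obstruction, and the resolvent identity does not get around it.

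Carried through with the correct estimate, your route produces at best an $O\big(|A||A'|\,\delta^{1/2}\big)$-AC projective strategy (or worse, $\delta^{1/4}$, if you iterate~\eqref{eq:com-bound} twice), not the claimed $|A||A'|\delta$. The paper's own proof is extremely terse at this step and simply asserts the final commutator bound without passing explicitly through $[\sqrt{A},\sqrt{B}]$; for every use of the claim in the paper the discrepancy is harmless since $\delta$ is taken inverse-exponential, but as a proof of the claim \emph{with the stated constant} your argument via square-root commutators does not close.
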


\begin{proof}
We can make $(A_q^a,B_{q'}^{a'},\rho)$ (which is defined on some Hilbert space $\mathcal{H}$) into a projective strategy $(\tilde{A}_q^a,\tilde{B}_{q'}^{a'}, \tilde{\rho})$ on an extended Hilbert space $\mathcal{H}' \equiv \mathcal{H} \otimes \mathbb{C}^{|A|} \otimes \mathbb{C}^{|A'|}$  by extending $\rho$ to $\rho \otimes \ket{0}_A \bra{0}\otimes \ket{0}_B \bra{0}$, and defining the norm preserving maps :

\begin{align*}
&U^q_{A}: \ket{\psi}\ket{0}_A \ket{0}_B \to\sum_a (\sqrt{A_q^a}\ket{\psi}) \ket{a}_A \ket{0}_B \\
&\text{and}\\
&U^{q'}_{B}: \ket{\psi}\ket{0}_A \ket{0}_B \to\sum_{a'} (\sqrt{B^{a'}_{q'}}\ket{\psi}) \ket{0}_A \ket{a'}_B 
\end{align*}

Since the maps are norm-preserving they can be extended to unitary maps $\tilde{U}^q_{A} \otimes I_{B}$ and $\tilde{U}^{q'}_{B}\otimes I_{A}$ respectively on $\mathcal{H} \otimes \mathbb{C}^{|A|} \otimes \mathbb{C}^{|A'|}$ (note that the unitary for each prover acts as the identity on the ancilla for each other prover).  We now define the new POVM operators as 

\begin{align*}
& \tilde{A}_q^a \equiv \left (\tilde{U}^q_{A} \otimes I_{B} \right ) \left ( I_{\mathcal{H}} \otimes \ket{a}_A\bra{a} \otimes I_{B} \right) \left (\tilde{U}^q_{A} \otimes I_{B} \right)^{\dagger} \\
& \text{and}\\
& \tilde{B}_{q'}^{a'} \equiv (\tilde{U}^{q'}_{B}\otimes I_{A}) \left( I_{\mathcal{H}} \otimes I_{A} \otimes \ket{a'}_B\bra{a'} \right ) (\tilde{U}^{q'}_{B}\otimes I_{A})^{\dagger}
\end{align*}
The new operators now form projective POVM strategies, and the transformation clearly preserves the strategy's success probability.  Since different provers act on distinct ancilla qubits, and as the identity on the ancillas for all other provers, we see that:
$ [\tilde{A}_q^a,\tilde{B}_{q'}^{a'}] \leq |A| |A'| \delta.$
\end{proof}

\subsection{The QC SDP Hierarchy}\label{Sec:def-sdp}

Fix a two-prover game $G$. Let $X=[Q_1]$ (resp. $Y=[Q_2]$) be the first (resp. second) prover's input alphabet, and $A=[A_1]$ (resp. $B=[A_2]$) the first (resp. second) prover's answer alphabet.  Let $V: X \times Y \times A \times B \to \{ 0, 1 \}$ be the referee's decision predicate, and $\mu: X \times Y \to [0,1]$ the distribution on inputs.  Consider the alphabet of formal symbols $\mathcal{A} \equiv \{P^a_x:  \forall x, a   \} \cup \{Q_y^b:   \forall y, b  \}$,\footnote{Although this is a formal alphabet, $P_x^a$ (resp. $Q_y^b$) is meant to represent the first (resp. second) provers' POVM element associated with input $x$ (resp. $y$) and answer $a$ (resp. $b$).} and let $W_m \equiv \cup_{i = 1}^m \mathcal{A}^i \cup \{\phi\}$ be the set of all words of length at most $m$ on the alphabet $\mathcal{A}$  (here $\phi$ is a formal symbol representing the empty string).  The $N^{th}$ level of QC SDP hierarchy for $G$ defines an optimization problem over the space of positive semidefinite matrices $\Gamma_N \in \mathbb{C}^{|W_N| \times |W_N|}$ with entries $\Gamma^N_{s,t}$ indexed by words $s,t \in W_N$.  As in \cite{pna08}, we will let $\{|v_s \rangle:  s \in W_N  \}$ be vectors in $\mathbb{C}^{|W_N|}$ such that $\Gamma^N_{s,t} = \langle v_s | v_t \rangle$. We can find such vectors by computing, for example, the Cholesky decomposition of $\Gamma^N$, which can be done in time polynomial in $|W_N|$.

\begin{definition}\label{def:qcsdp}
The $N^{th}$ level of QC SDP hierarchy is defined to be the following optimization problem:
\begin{align}
&\text{maximize  }  \sum_{(x,y,a,b)} \mu(x,y) \Gamma^N_{P_x^a, Q_y^b} \ V(x,y,a,b) \notag\\
&\notag\\
&\text{subject to:  } \notag\\
&\Gamma^N \succeq 0 \notag\\
&\Gamma^N_{\phi, \phi} = 1 \notag\\
&\forall C \in \mathcal{A}, \forall s,t \in W_{N-1},  \Gamma^N_{sC, t} = \Gamma^N_{s, Ct} \notag\\
&\forall P_x^a, Q_y^b \in \mathcal{A}, \forall s,t \in W_{N-1}, \Gamma^N_{sP_x^a, Q_y^bt} = \Gamma^N_{sQ_y^b, P_x^at}  &\label{commutationconstraint} \\
&\forall x \in X, s \in W_{N-1}, t \in W_N, \sum_{a \in A} \Gamma^N_{sP_x^a, t} = \Gamma^N_{s, t}\label{sumequalityinnera} \\
&\forall y \in Y, s \in W_{N-1}, t \in W_N,  \sum_{b \in B} \Gamma^N_{sQ_y^b, t} = \Gamma^N_{s, t}  \notag\\
&\forall x \in X, \forall s,t \in W_{N-1}, \text{and for } a \neq a',  \Gamma^N_{sP_x^a, P_x^{a'}t} = 0  \label{aorthog}\\
&\forall y \in Y, \forall s,t \in W_{N-1}, \text{and for } b \neq b',  \Gamma^N_{sQ_y^b,Q_y^{b'} t} = 0  \nonumber
\end{align}
\end{definition}

Note that, in order to make the constraints intuitive, we use the non-standard notation that, whenever as vector $|v_s \rangle$ is transposed, the result of the transpose is written as $\bra{v_s}:= (\ket{v_{s^\dagger}})^\dagger$, where $s^{\dagger}$ is the string written in the reverse order.  That is, we use the convention that transposing vectors also reverses the order of their labels. 

From here on we let $\Gamma^N_{s,t} = \langle v_s | v_t \rangle$ represent an optimal solution to the $N^{th}$ level of the QC SDP hierarchy.  By definition,

\begin{align*}
\omega_{QCSDP}^N(G) = \sum_{(x,y,a,b)} \mu(x,y) \Gamma^N_{P_x^a, Q_y^{b}} \ V(x,y,a,b) = \sum_{(x,y,a,b)} \mu(x,y) \langle v_{P_x^a} | v_{Q_y^b}  \rangle  V(x,y,a,b).
\end{align*}  

\begin{definition}
Let $V_j \equiv \text{Span}  \{ |v_s \rangle : s \in W_j \}  $ denote the vector space spanned by all the vectors with labels of length $\leq j$.  Note that, for a solution to the $N^{th}$ level of the QC SDP hierarchy, $V_N$ is the entire space spanned by all the vectors in the Cholesky decomposition of $\Gamma^N$.
\end{definition}

\begin{observation}  \label{vector sum}

\begin{align}  
\forall x \in X, s \in W_{N-1}, t \in W_N, \sum_{a \in A} \langle v_{sP_x^a} | = \langle v_{s} |  \label{sumequalitya} \\
\forall y \in Y, s \in W_{N-1}, t \in W_N, \sum_{b \in B} \langle v_{sQ_y^b} |= \langle v_{s} |    \label{sumequalityb}
\end{align}
\end{observation}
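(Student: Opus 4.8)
The plan is to derive Observation~\ref{vector sum} directly from the linear equality constraints~\eqref{sumequalityinnera} (and its $Q$-analogue) in Definition~\ref{def:qcsdp}, combined with the fact noted after the definition of $V_j$ that the Cholesky vectors $\{\ket{v_t}: t\in W_N\}$ span the whole ambient space. Constraint~\eqref{sumequalityinnera} only asserts an equality of \emph{inner products} with the $\ket{v_t}$ for $t\in W_N$; to promote this to the claimed equality of \emph{vectors} one invokes the standard fact that a vector lying in the span of a family against which it is orthogonal must be zero.

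Concretely, I would fix $x\in X$ and $s\in W_{N-1}$ and form the vector $\ket{w} \equiv \sum_{a\in A}\ket{v_{(sP_x^a)^\dagger}} - \ket{v_{s^\dagger}}$, where $(\cdot)^\dagger$ denotes string reversal, so that in the paper's bra convention this is exactly $\bra{w} = \sum_{a}\bra{v_{sP_x^a}} - \bra{v_{s}}$. Since $s\in W_{N-1}$, each word $sP_x^a$ has length at most $N$ and $s$ has length at most $N-1$, so every label appearing in $\ket{w}$ lies in $W_N$; hence $\ket{w}\in V_N$. Now for every $t\in W_N$, constraint~\eqref{sumequalityinnera} gives
$$\langle v_t | w\rangle \;=\; \sum_{a\in A}\langle v_t | v_{(sP_x^a)^\dagger}\rangle - \langle v_t | v_{s^\dagger}\rangle \;=\; \sum_{a\in A}\Gamma^N_{sP_x^a,\,t} - \Gamma^N_{s,\,t} \;=\; 0.$$
Thus $\ket{w}$ is orthogonal to all of $\{\ket{v_t}:t\in W_N\}$, which span $V_N\ni\ket{w}$; taking the inner product of $\ket{w}$ with itself gives $\|w\|^2=0$, so $\ket{w}=0$, which is precisely~\eqref{sumequalitya}. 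The identity~\eqref{sumequalityb} follows in the same way from the $Q$-version of constraint~\eqref{sumequalityinnera}.

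There is essentially no obstacle here beyond bookkeeping: the only point requiring care is tracking the paper's nonstandard transpose convention (transposing a vector reverses its string label), so that the target equality $\sum_a\bra{v_{sP_x^a}}=\bra{v_s}$ is correctly matched, after conjugation, against the constraint on $\Gamma^N$-entries; and checking the trivial length bookkeeping ($s\in W_{N-1}$ guarantees $\ket{w}\in V_N$) so that the orthogonality-plus-membership argument applies. The stray ``$t\in W_N$'' in the statement of the observation plays no role and can be read as an artifact carried over from the form of constraint~\eqref{sumequalityinnera}.
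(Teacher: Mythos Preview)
Your proposal is correct and follows essentially the same argument as the paper: form the difference vector, note it lies in $V_N$, use constraint~\eqref{sumequalityinnera} to show it is orthogonal to every $\ket{v_t}$ with $t\in W_N$, and conclude it vanishes. The only distinction is cosmetic---you track the string-reversal convention explicitly, whereas the paper works directly with $\ket{z}\equiv\ket{v_s}-\sum_a\ket{v_{sP_x^a}}$ and computes $\langle z|v_t\rangle$; either way the logic is identical.
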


\begin{proof}
We give the proof of equation \eqref{sumequalitya}.  The proof for equation \eqref{sumequalityb} is completely analogous.  Consider the vector $| z \rangle \equiv  |v_{s} \rangle - \sum_{a \in A} | v_{sP_x^a} \rangle$.  By definition we have that $| z \rangle \in V_N$.  On the other hand, for every vector $|v_t \rangle$ (for any $t \in W_N$), it follows from equation \eqref{sumequalityinnera} that $\langle z | v_t \rangle = \langle v_{s} | v_t \rangle  - \sum_{a \in A} \langle v_{sP_x^a}  | v_t \rangle = 0$.  Thus we must have $| z \rangle = 0$, and the claim follows.
\end{proof}

\begin{definition}
As in \cite{pna08}, for each $P_x^a \in \mathcal{A}$, let $\Pi_{P_x^a}$ denote the projector onto $\text{Span}\{ |v_{P_x^a s} \rangle : s \in W_{N-1} \}$.  Similarly, for each $Q_y^b \in \mathcal{A}$, let $\Pi_{ Q_y^b}$ denote the projector onto $\text{Span}\{ |v_{Q_y^b s} \rangle : s \in W_{N-1} \}$.  
\end{definition} 

\begin{observation} \label{projectoraction}
Note that, as observed in \cite{pna08}, for each $P_x^{a'} \in \mathcal{A}$, and for all $s \in W_{N-1}$ we have that: 
\begin{align}
&\Pi_{P_x^{a'}} | v_{s} \rangle = \Pi_{P_x^{a'}} \sum_{a \in A} | v_{P_x^as} \rangle =  \Pi_{P_x^{a'}} | v_{P_x^{a'}s} \rangle = | v_{P_x^{a'}s} \rangle
\end{align}
   This follows from equation \eqref{aorthog} and  Observation \ref{vector sum}.  The analogous statement holds for $Q_y^{b'} \in \mathcal{A}$, and $\Pi_{Q_y^{b'}}$.
\end{observation}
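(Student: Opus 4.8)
The plan is to verify the three equalities in the displayed chain one at a time, each being a direct consequence of one defining constraint of the $N$-th level SDP (Definition~\ref{def:qcsdp}) together with Observation~\ref{vector sum}, after which the statement for $Q_y^{b'}$ follows by a verbatim symmetric argument.

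First I would recast Observation~\ref{vector sum} in ``ket form'': taking adjoints of equation~\eqref{sumequalitya} and using the transpose convention $\langle v_u| = (\ket{v_{u^\dagger}})^\dagger$ turns $\sum_{a\in A}\langle v_{sP_x^a}| = \langle v_s|$ into $\sum_{a\in A}\ket{v_{P_x^a s}} = \ket{v_s}$, valid for every $s\in W_{N-1}$ (since $s^\dagger$ also ranges over $W_{N-1}$). Applying $\Pi_{P_x^{a'}}$ to both sides immediately yields the first equality $\Pi_{P_x^{a'}}\ket{v_s} = \Pi_{P_x^{a'}}\sum_{a\in A}\ket{v_{P_x^a s}}$. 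For the middle equality I would show that $\Pi_{P_x^{a'}}$ annihilates every summand with $a\neq a'$: since $\Pi_{P_x^{a'}}$ is the orthogonal projector onto $\Span{\ket{v_{P_x^{a'}t}} : t\in W_{N-1}}$, it suffices to check that $\ket{v_{P_x^a s}}$ is orthogonal to $\ket{v_{P_x^{a'}t}}$ for each $t\in W_{N-1}$; unwinding the transpose convention this inner product is the matrix entry $\Gamma^N_{t^\dagger P_x^{a'},\,P_x^a s}$, which vanishes by constraint~\eqref{aorthog} because $t^\dagger,s\in W_{N-1}$ and $a\neq a'$. Finally, the last equality is immediate, since $\ket{v_{P_x^{a'}s}}$ with $s\in W_{N-1}$ lies in $\Span{\ket{v_{P_x^{a'}t}} : t\in W_{N-1}}$, which is exactly the range of $\Pi_{P_x^{a'}}$. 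The $Q_y^{b'}$ version is proved identically, with equation~\eqref{sumequalityb} replacing~\eqref{sumequalitya} and the $Q$-analogue of~\eqref{aorthog} replacing~\eqref{aorthog}.

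There is no genuine mathematical obstacle here; the only place where a little care is needed is bookkeeping with the paper's non-standard transpose convention, i.e.\ remembering that daggering a vector reverses its word label, so that the ``inserted-symbol-on-the-right'' form in which constraints~\eqref{sumequalityinnera} and~\eqref{aorthog} are stated correctly matches the expressions that arise when one expands $\Pi_{P_x^{a'}}\ket{v_s}$ and the inner products $\langle v_{P_x^{a'}t}|v_{P_x^a s}\rangle$.
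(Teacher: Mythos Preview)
Your proposal is correct and is precisely the argument the paper intends: the observation in the paper is stated with only the one-line justification ``This follows from equation~\eqref{aorthog} and Observation~\ref{vector sum},'' and your three-step expansion (use Observation~\ref{vector sum} for the first equality, constraint~\eqref{aorthog} for the second, and the definition of $\Pi_{P_x^{a'}}$ for the third) is exactly the unpacking of that sentence. Your handling of the transpose convention is also accurate.
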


\begin{definition}
For each  $j \leq N$, let $\Pileq{j}$ denote the orthogonal projector onto $V_j$, and $\Pileq{j}^{\perp} \equiv I - \Pileq{j}$.  
\end{definition}

\subsection{Useful identities}  \label{projectoridentitiesappendix}

The following identities involving the projection operators defined in the previous section will be used in the analysis of our rounding scheme in Section~\ref{commutatorbound}. 

\begin{proposition}  \label{projectorlevelidentities}
$\forall i, j \in [N]$:
\begin{align}
& \Pileq{i} \Pileq{j} = \Pileq{j} \Pileq{i} = \Pileq{\min(i, j)} \label{standardinclusion} \\
& \Pileq{i}^{\perp} \Pileq{j}^{\perp} = \Pileq{j}^{\perp} \Pileq{i}^{\perp} = \Pileq{\max(i, j)}^{\perp}.  \label{perpinclusion}
\end{align}
Furthermore, for $i \geq j \in [N]$, 
\begin{align}
&\Pileq{j} \Pileq{i}^{\perp} = \Pileq{i}^{\perp} \Pileq{j} = 0,  \label{standardorthog}
\end{align}
\end{proposition}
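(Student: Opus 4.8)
The plan is to reduce the whole proposition to a single structural fact: the subspaces $V_1, \ldots, V_N$ form a nested chain, $V_i \subseteq V_j$ whenever $i \leq j$. This is immediate from the definitions, since $W_i \subseteq W_j$ for $i \leq j$ (every word of length at most $i$ is in particular a word of length at most $j$), so the spanning set of $V_i$ is contained in that of $V_j$, hence $V_i \subseteq V_j$.

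Next I would record the elementary linear-algebra lemma that if $U \subseteq W$ are subspaces with orthogonal projectors $\Pi_U$ and $\Pi_W$, then $\Pi_W \Pi_U = \Pi_U \Pi_W = \Pi_U$. The first equality holds because $\Pi_U v \in U \subseteq W$ for every vector $v$, so $\Pi_W$ fixes $\Pi_U v$; the second follows by taking adjoints and using that orthogonal projectors are self-adjoint. Applying this with $U = V_{\min(i,j)}$ and $W = V_{\max(i,j)}$ yields \eqref{standardinclusion} directly, for all $i,j \in [N]$; in particular $\Pileq{i}$ and $\Pileq{j}$ commute.

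For \eqref{perpinclusion}, I would simply expand $\Pileq{i}^{\perp} \Pileq{j}^{\perp} = (I - \Pileq{i})(I - \Pileq{j}) = I - \Pileq{i} - \Pileq{j} + \Pileq{i}\Pileq{j}$ and substitute $\Pileq{i}\Pileq{j} = \Pileq{\min(i,j)}$ from \eqref{standardinclusion}; since $\Pileq{\min(i,j)}$ equals whichever of $\Pileq{i}, \Pileq{j}$ carries the smaller index, one of the two middle terms cancels, leaving $I - \Pileq{\max(i,j)} = \Pileq{\max(i,j)}^{\perp}$. The commutativity $\Pileq{i}^{\perp}\Pileq{j}^{\perp} = \Pileq{j}^{\perp}\Pileq{i}^{\perp}$ is inherited from that of $\Pileq{i}$ and $\Pileq{j}$. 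Finally, for \eqref{standardorthog} with $i \geq j$, I compute $\Pileq{j}\Pileq{i}^{\perp} = \Pileq{j} - \Pileq{j}\Pileq{i} = \Pileq{j} - \Pileq{\min(i,j)} = \Pileq{j} - \Pileq{j} = 0$, and symmetrically $\Pileq{i}^{\perp}\Pileq{j} = \Pileq{j} - \Pileq{i}\Pileq{j} = 0$.

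There is no genuine obstacle here: the proposition simply packages the standard behaviour of orthogonal projectors onto a nested chain of subspaces, and the only point requiring any care is checking the nesting $V_i \subseteq V_j$ from the definition of $W_j$. Everything after that is a two-line computation.
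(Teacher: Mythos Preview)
Your proposal is correct and takes essentially the same approach as the paper: both proofs rest entirely on the observation that the subspaces $V_j$ are nested, $V_j \subseteq V_i$ for $j \leq i$, from which the projector identities follow immediately. The paper states this in a single sentence, whereas you have spelled out the elementary computations explicitly, but there is no substantive difference in method.
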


\begin{proof}
All three equations follow trivially from the definition of $\Pileq{i}$ as the orthogonal projector on $V_i$ and the inclusion $V_j  \subseteq V_i$ for $j\leq i$. 
\end{proof}

\begin{proposition} \label{commutation-cancellation identity}
$ \forall (x,y, a,b) \in X \times Y \times A \times B$, and $\forall i,j < N$,
$$\Pileq{i} \Pi_{P_x^a}  \Pi_{Q_y^b} \Pileq{j}  = \Pileq{i} \Pi_{Q_y^b}  \Pi_{P_x^a} \Pileq{j}.$$
\end{proposition}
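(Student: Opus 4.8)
The plan is to verify this operator identity by testing both sides against the Cholesky vectors $\ket{v_s}$, which span the subspaces onto which $\Pileq{i}$ and $\Pileq{j}$ project, and then reading off the commutation constraint~\eqref{commutationconstraint}. Since $\Pileq{i}$ (resp.\ $\Pileq{j}$) is the orthogonal projector onto $V_i$ (resp.\ $V_j$), and since $i,j<N$ forces $V_i,V_j\subseteq V_{N-1}=\Span{\ket{v_s}:\ s\in W_{N-1}}$, the asserted equality $\Pileq{i}\Pi_{P_x^a}\Pi_{Q_y^b}\Pileq{j}=\Pileq{i}\Pi_{Q_y^b}\Pi_{P_x^a}\Pileq{j}$ is equivalent to the family of scalar identities
\[
\bra{v_s}\Pi_{P_x^a}\Pi_{Q_y^b}\ket{v_t}=\bra{v_s}\Pi_{Q_y^b}\Pi_{P_x^a}\ket{v_t}\qquad\text{for all }s,t\in W_{N-1}.
\]

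Next I would record the ``transposed'' form of Observation~\ref{projectoraction}. Taking adjoints in $\Pi_{P_x^a}\ket{v_s}=\ket{v_{P_x^as}}$, using that $\Pi_{P_x^a}$ is self-adjoint (it is an orthogonal projector), and applying the label-reversal transpose convention of Section~\ref{Sec:def-sdp}, one obtains $\bra{v_s}\Pi_{P_x^a}=\bra{v_{sP_x^a}}$ for every $s\in W_{N-1}$, and likewise $\bra{v_s}\Pi_{Q_y^b}=\bra{v_{sQ_y^b}}$. With these in hand the computation is immediate: for $s,t\in W_{N-1}$, Observation~\ref{projectoraction} gives $\Pi_{Q_y^b}\ket{v_t}=\ket{v_{Q_y^bt}}$ while its transposed form gives $\bra{v_s}\Pi_{P_x^a}=\bra{v_{sP_x^a}}$ — the two projectors acting, respectively, on the right- and left-hand vector, so neither is ever applied to a word of length $N$ — whence
\[
\bra{v_s}\Pi_{P_x^a}\Pi_{Q_y^b}\ket{v_t}=\langle v_{sP_x^a}\,|\,v_{Q_y^bt}\rangle=\Gamma^N_{sP_x^a,\,Q_y^bt},
\]
and, exchanging the roles of $P_x^a$ and $Q_y^b$ throughout, $\bra{v_s}\Pi_{Q_y^b}\Pi_{P_x^a}\ket{v_t}=\Gamma^N_{sQ_y^b,\,P_x^at}$.

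To conclude, I would invoke the commutation constraint~\eqref{commutationconstraint}, which asserts precisely that $\Gamma^N_{sP_x^a,\,Q_y^bt}=\Gamma^N_{sQ_y^b,\,P_x^at}$ for all $s,t\in W_{N-1}$; this yields the scalar identities above and hence the proposition. The argument requires no inequality or quantitative estimate; its only genuinely delicate point is the bookkeeping around the transpose convention, in particular checking that the adjoint of Observation~\ref{projectoraction} really reads $\bra{v_s}\Pi_{P_x^a}=\bra{v_{sP_x^a}}$ (the label appended on the \emph{right}) rather than $\bra{v_{P_x^as}}$. The hypothesis $i,j<N$ enters only to guarantee that every word appearing has length at most $N-1$, so that Observation~\ref{projectoraction} and constraint~\eqref{commutationconstraint} are both applicable.
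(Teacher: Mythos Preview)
Your proof is correct and follows essentially the same route as the paper: both arguments reduce the operator identity to the scalar equalities $\langle v_{sP_x^a}|v_{Q_y^bt}\rangle=\langle v_{sQ_y^b}|v_{P_x^at}\rangle$ for words $s,t$ of length at most $N-1$, obtained by letting $\Pi_{P_x^a}$ act on one side and $\Pi_{Q_y^b}$ on the other via Observation~\ref{projectoraction}, and then invoke constraint~\eqref{commutationconstraint}. The only cosmetic difference is that the paper expands arbitrary test vectors $\ket{z},\ket{w}$ in the spanning set first, whereas you test directly on the Cholesky vectors and handle the transpose convention explicitly; your careful check that the adjoint of Observation~\ref{projectoraction} appends the symbol on the \emph{right} is exactly the bookkeeping the paper's $s^\dagger$ notation is encoding.
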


\begin{proof}
Consider any two vectors $|z  \rangle, |w \rangle \in V_N$.  By definition, we have that $\Pileq{i} |z \rangle \in \text{Im} \left ( \Pileq{i} \right ) \equiv V_i =  \text{Span}  \{ |v_s \rangle : s \in W_i \} $, so we can expand them in this vector space:  $\Pileq{i} |z \rangle \equiv \sum_{ s \in W_i } \lambda_s |v_s \rangle$. Similarly, $\Pileq{j} |w \rangle \equiv \sum_{ t \in W_j } \gamma_t |v_t \rangle$.  So,
    \begin{align}
\langle z | \Pileq{i} \Pi_{P_x^a}  \Pi_{Q_y^b} \Pileq{j} | w \rangle & = \Big (  \sum_{ s \in W_i } \lambda_s^* \langle v_{s^{\dagger}} |  \Big )  \Pi_{P_x^a}  \Pi_{Q_y^b}  \Big (\sum_{ t \in W_j } \gamma_t |v_t \rangle  \Big ) = \sum_{ s \in W_i }   \sum_{ t \in W_j }  \lambda_s^* \gamma_t  \langle v_{s^{\dagger}} | \Pi_{P_x^a}  \Pi_{Q_y^b} |v_t \rangle \notag\\
  & = \sum_{ s \in W_i }   \sum_{ t \in W_j }  \lambda_s^* \gamma_t  \langle v_{s^{\dagger} P_x^a} |v_{Q_y^bt} \rangle = \sum_{ s \in W_i }   \sum_{ t \in W_j }  \lambda_s^* \gamma_t  \langle v_{s^{\dagger} Q_y^b} |v_{P_x^at} \rangle \notag \\
  & = \sum_{ s \in W_i }   \sum_{ t \in W_j }  \lambda_s^* \gamma_t  \langle v_{s^{\dagger} } | \Pi_{Q_y^b} \Pi_{P_x^a}  |v_{t} \rangle  = \Big (  \sum_{ s \in W_i } \lambda_s^* \langle v_{s^{\dagger}} |  \Big )    \Pi_{Q_y^b} \Pi_{P_x^a}  \Big (\sum_{ t \in W_j } \gamma_t |v_t \rangle  \Big ) \notag \\
  & = \langle z |\Pileq{i}   \Pi_{Q_y^b} \Pi_{P_x^a} \Pileq{j} | w \rangle \label{perpswapexpansion}
  \end{align}
    In \eqref{perpswapexpansion}, since we know that $s \in W_i$, $t\in W_j$ and $W_i,W_j \subseteq W_{N-1}$, the third equality follows by Observation \ref{projectoraction}, the fourth equality follows by equation  \eqref{commutationconstraint}, and the fifth equality follows by Observation \ref{projectoraction} again.
    
  Since this holds for arbitrary $|z  \rangle, |w \rangle$, it follows that $\Pileq{i} \Pi_{P_x^a}  \Pi_{Q_y^b} \Pileq{j}  = \Pileq{i} \Pi_{Q_y^b}  \Pi_{P_x^a} \Pileq{j}$, as claimed.  
\end{proof}

\begin{proposition}  \label{projectoroneshift}
For any $j < N$, and any $P_x^a \in \mathcal{A}$, or $Q_y^b \in \mathcal{A}$ we have that 

\begin{equation*}
\text{Im} \big(\Pi_{P_x^a} \Pileq{j} \big) \subseteq \text{Im} \big(\Pileq{j+1} \big)\qquad\text{and}\qquad
\text{Im} \big(\Pi_{Q_y^b} \Pileq{j} \big) \subseteq \text{Im} \big(\Pileq{j+1} \big) 
\end{equation*}

\end{proposition}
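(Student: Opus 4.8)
The plan is to reduce the statement to a one-line computation using the expansion of an arbitrary vector in $V_j$ together with Observation~\ref{projectoraction}. The first step is to recall that for any operator $A$ and any orthogonal projector $\Pi$ one has $\text{Im}(A\Pi) = A\big(\text{Im}(\Pi)\big)$, since $\Pi$ acts as the identity on $\text{Im}(\Pi)$ and kills its orthogonal complement. Applying this with $A = \Pi_{P_x^a}$ and $\Pi = \Pileq{j}$, it suffices to show that $\Pi_{P_x^a}\ket{w} \in \text{Im}(\Pileq{j+1}) = V_{j+1}$ for every $\ket{w}\in\text{Im}(\Pileq{j}) = V_j$.

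Next I would fix such a $\ket{w}$ and expand it in the spanning set of $V_j$, writing $\ket{w} = \sum_{s\in W_j}\lambda_s \ket{v_s}$ for suitable scalars $\lambda_s$. Since $j < N$ we have $W_j\subseteq W_{N-1}$, so every index $s$ occurring in this sum satisfies $s\in W_{N-1}$ and Observation~\ref{projectoraction} is applicable: it gives $\Pi_{P_x^a}\ket{v_s} = \ket{v_{P_x^a s}}$. Hence $\Pi_{P_x^a}\ket{w} = \sum_{s\in W_j}\lambda_s \ket{v_{P_x^a s}}$. Each word $P_x^a s$ has length $|s|+1 \leq j+1$, so $P_x^a s\in W_{j+1}$ and therefore $\ket{v_{P_x^a s}}\in V_{j+1}$; since $V_{j+1}$ is a linear subspace, the entire linear combination lies in $V_{j+1} = \text{Im}(\Pileq{j+1})$, which is exactly what we wanted. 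The argument for $\Pi_{Q_y^b}$ is verbatim the same, invoking the analogous half of Observation~\ref{projectoraction}.

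I do not expect any genuine obstacle here: the proposition is essentially bookkeeping about word lengths. The only point that needs (minor) care is the role of the hypothesis $j < N$, which is used twice — once to ensure the indices $s\in W_j$ fall in the range $W_{N-1}$ where Observation~\ref{projectoraction} is stated, and once implicitly to guarantee that the prepended words $P_x^a s$ still index vectors that exist in the Cholesky decomposition (i.e. have length at most $N$). No properties of the SDP constraints beyond those already packaged into Observation~\ref{projectoraction} are required.
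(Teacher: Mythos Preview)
Your proposal is correct and essentially identical to the paper's proof: both expand an arbitrary element of $V_j$ in the spanning set $\{\ket{v_s}:s\in W_j\}$, apply Observation~\ref{projectoraction} (which is legitimate since $j<N$ gives $W_j\subseteq W_{N-1}$), and observe that prepending a single symbol lands each resulting vector in $V_{j+1}$. Your framing via $\text{Im}(A\Pi)=A(\text{Im}(\Pi))$ is a cosmetic variant of the paper's ``let $\ket{z}$ be arbitrary and look at $\Pileq{j}\ket{z}$''.
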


\begin{proof}
Let $|z \rangle$ be an arbitrary vector in $V_N$.  By definition, $\Pileq{j} |z \rangle \in \text{Im} \left (\Pileq{j} \right) \equiv V_j = \text{Span}  \{ |v_s \rangle : s \in W_{j} \} $.  So, there exist coefficients $\lambda_s \in \mathbb{C}$ such that $\Pileq{j} |z \rangle = \sum_{s \in W_j} \lambda _s |v_s \rangle$.  Now, by invoking Observation \ref{projectoraction} we see that 

\begin{align*}
& \Pi_{P_x^a} \Pileq{j} | z \rangle =\sum_{s \in W_j} \lambda _s  \Pi_{P_x^a}  |v_s \rangle =  \sum_{s \in W_j} \lambda _s    |v_{P_x^as} \rangle \in    \text{Im} \left (\Pileq{j+1} \right) \equiv V_{j+1} = \text{Span}   \{ |v_s \rangle : s \in W_{j+1} \}\\
& \Pi_{Q_y^b} \Pileq{j} | z \rangle =\sum_{s \in W_j} \lambda _s  \Pi_{Q_y^b}  |v_s \rangle =  \sum_{s \in W_j} \lambda _s    |v_{Q_y^bs} \rangle \in    \text{Im} \left (\Pileq{j+1} \right) \equiv  V_{j+1} = \text{Span}  \{ |v_s \rangle : s \in W_{j+1} \} 
\end{align*}

Since this is true for arbitrary $| z \rangle$, the desired result follows.
\end{proof}

\subsection{Some bounds} \label{sec:lowboundprelim}

In this section we collect a few identities that will be useful in the proof of Theorem \ref{thm:nexp}. We first note the bound 
\beq\label{eq:com-bound}
 \| [A,B^r]\| \leq 2\|B\|^{1-r} \|[A,B]\|^r,
\eeq
valid for $A,B\geq 0$ and $0\leq r \leq 1$, that will be useful in our analysis. See Problem X.5.3 in~\cite{bhatia} for a tighter bound from which~\eqref{eq:com-bound} follows.

\begin{claim}\label{claim:sq-bound}
For $i=1,\ldots,M$ let $A_i,B_i \geq 0$ be such that $(A_i)$ and $(B_i)$ are $\delta$-AC, and $\sum_i \Tr(A_i\sqrt{B_i} \rho\sqrt{B_i}) \geq 1-\eps$. Then
\beq\label{eq:sq-bound}
\sum_i \Tr\Big( \big(A_i^{1/2} - B_i^{1/2}\Big)^2 \rho\big) \,\leq\, 2\eps + O\big(\delta^{1/8}M\big),
\eeq
and 
\beq\label{eq:sq-bound-trace}
 \Big\|\sum_i A_i^{1/2}\rho A_i^{1/2} - \sum_i B_i^{1/2}\rho B_i^{1/2}\Big\|_1 \,\leq\, 2\sqrt{2\eps} + O\big(\delta^{1/16}M^{1/2}\big),
\eeq
\end{claim}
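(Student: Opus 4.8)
The plan is to establish \eqref{eq:sq-bound} first, then derive \eqref{eq:sq-bound-trace} from it by a Cauchy--Schwarz argument. Expanding the square, write
\[
\sum_i \Tr\big( (A_i^{1/2}-B_i^{1/2})^2 \rho\big) = \sum_i \Tr(A_i\rho) + \sum_i \Tr(B_i\rho) - 2\sum_i \Re\Tr\big(A_i^{1/2} B_i^{1/2}\rho\big).
\]
Using $\sum_i A_i = \sum_i B_i = \Id$ (or rather that they sum to $\Id$; in general one gets $\le 1$ for each) the first two terms contribute at most $2$. So the task reduces to lower bounding $\sum_i \Re\Tr(A_i^{1/2}B_i^{1/2}\rho)$ in terms of the hypothesis $\sum_i \Tr(A_i \sqrt{B_i}\rho\sqrt{B_i})\ge 1-\eps$. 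The idea is that $\sqrt{B_i}$-conjugation can be converted to the symmetric-looking quantity $A_i^{1/2}B_i^{1/2}\rho$ at the cost of moving operators past each other, which costs commutators. Concretely, I would repeatedly use $\| [A^{1/2}, B] \| \le 2\|B\|^{1/2}\|[A,B]\|^{1/2}$ from \eqref{eq:com-bound} (and its iterates, e.g.\ $\|[A^{1/2},B^{1/2}]\| = O(\delta^{1/4})$, $\|[A^{1/4},B^{1/4}]\| = O(\delta^{1/8})$, etc.) to commute half-powers and quarter-powers of $A_i$ and $B_i$ through one another, so that $\Tr(A_i\sqrt{B_i}\rho\sqrt{B_i}) = \Tr(A_i^{1/2}\sqrt{B_i}\,A_i^{1/2}\rho) + O(\text{poly}(\delta))$ and then regroup further into $\Re\Tr(A_i^{1/2}B_i^{1/2}\rho) + O(\text{poly}(\delta))$. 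Since $A_i, B_i \le \Id$ the operator-norm prefactors in \eqref{eq:com-bound} are bounded by $1$, and tracking the worst exponent across the fixed constant number of rearrangement steps yields a per-term error $O(\delta^{1/8})$, hence $O(\delta^{1/8}M)$ after summing over $i$. Plugging back in gives $\sum_i \Tr((A_i^{1/2}-B_i^{1/2})^2\rho) \le 2 - 2(1-\eps) + O(\delta^{1/8}M) = 2\eps + O(\delta^{1/8}M)$, which is \eqref{eq:sq-bound}.

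For \eqref{eq:sq-bound-trace}, write $\sum_i A_i^{1/2}\rho A_i^{1/2} - \sum_i B_i^{1/2}\rho B_i^{1/2}$ as a telescoping difference $\sum_i (A_i^{1/2}-B_i^{1/2})\rho A_i^{1/2} + \sum_i B_i^{1/2}\rho (A_i^{1/2}-B_i^{1/2})$. Bounding the trace norm of each sum by the triangle inequality and then Cauchy--Schwarz for the Hilbert--Schmidt inner product (with $\rho = \rho^{1/2}\rho^{1/2}$), each piece is at most $\big(\sum_i \Tr((A_i^{1/2}-B_i^{1/2})^2\rho)\big)^{1/2}\big(\sum_i \Tr(A_i \rho)\big)^{1/2}$ (and similarly with $B_i$); the second factor is at most $1$. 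Combining with \eqref{eq:sq-bound} gives a bound of $2\sqrt{2\eps + O(\delta^{1/8}M)} \le 2\sqrt{2\eps} + O(\delta^{1/16}M^{1/2})$, using $\sqrt{u+v}\le\sqrt u+\sqrt v$. That is exactly \eqref{eq:sq-bound-trace}.

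The main obstacle I anticipate is the bookkeeping in the first part: carefully choosing the order in which to commute the various fractional powers of $A_i$ and $B_i$ so that (a) one genuinely arrives at the symmetric quantity $\Re\Tr(A_i^{1/2}B_i^{1/2}\rho)$ starting from $\Tr(A_i\sqrt{B_i}\rho\sqrt{B_i})$, and (b) the resulting error is controlled by the $1/8$ power of $\delta$ rather than something worse. One has to be a little careful that $A_i,B_i$ being $\delta$-AC does \emph{not} immediately give that, say, $A_i^{1/2}$ and $B_i$ are $\delta$-AC --- this is exactly where \eqref{eq:com-bound} is needed, and each application degrades the exponent; the $1/8$ in the statement is presumably the exponent after the minimal necessary number of such applications (two or three). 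A secondary point to handle cleanly is that $\sqrt{\,\cdot\,}$ of a sub-identity positive operator is operator-monotone and $\le \Id$, so all the norm prefactors stay bounded and the $M$-dependence is only linear (resp.\ square-root) as claimed.
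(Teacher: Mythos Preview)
Your plan matches the paper's approach: expand the square, bound the diagonal terms trivially by $2$, and lower-bound the cross term $\sum_i \Re\Tr(A_i^{1/2}B_i^{1/2}\rho)$ by the hypothesis via repeated applications of~\eqref{eq:com-bound}. Your derivation of~\eqref{eq:sq-bound-trace} from~\eqref{eq:sq-bound} by telescoping and Cauchy--Schwarz is correct and in fact more explicit than the paper, which simply cites an external reference for that step.

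There is, however, one ingredient missing in your treatment of~\eqref{eq:sq-bound}. Commutator moves alone cannot take you from $\Tr(\sqrt{B_i}A_i\sqrt{B_i}\rho)$ (total degree $1$ in $A_i$, degree $1$ in $B_i$) to $\Tr(A_i^{1/2}B_i^{1/2}\rho)$ (degree $1/2$ in each): rearranging factors preserves these total degrees. In particular your intermediate claim ``$\Tr(A_i\sqrt{B_i}\rho\sqrt{B_i}) = \Tr(A_i^{1/2}\sqrt{B_i}\,A_i^{1/2}\rho) + O(\text{poly}(\delta))$'' is false as written, since the two sides carry different powers of $B_i$. The paper bridges the degree mismatch by twice invoking the operator inequality $X \leq X^{1/2}$, valid for $0\leq X\leq \Id$: it first symmetrizes to $\Tr(A_i^{1/4}B_i^{1/2}A_i^{1/4}\rho)$, then uses $B_i^{1/2}\geq B_i$ to drop to $\Tr(A_i^{1/4}B_iA_i^{1/4}\rho)$, re-commutes to $\Tr(B_i^{1/2}A_i^{1/2}B_i^{1/2}\rho)$, and applies $A_i^{1/2}\geq A_i$ once more to land on $\Tr(B_i^{1/2}A_iB_i^{1/2}\rho)\geq 1-\eps$. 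The $\delta^{1/8}$ exponent arises precisely because the first symmetrization commutes $A_i^{1/4}$ past $B_i^{1/2}$, at cost $O(\delta^{1/8})$ via two nested applications of~\eqref{eq:com-bound}. Once you insert this monotonicity step, your plan goes through as written.
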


\begin{proof}
We first evaluate
\begin{align}
\sum_i \Tr\big(A_i^{1/2}B_i^{1/2}\rho\big) &= \sum_i \Tr\big(A_i^{1/4}B_i^{1/2}A_i^{1/4}\rho\big) - \sum_i \Tr\big([A_i^{1/4},B_i^{1/2}]A_i^{1/4}\rho\big)\notag\\
&\geq  \sum_i \Tr\big(A_i^{1/4}B_i A_i^{1/4}\rho\big) -2\delta^{1/8} M\notag\\
&=  \sum_i \Tr\big(A_i^{1/2}B_i \rho\big) -   \sum_i \Tr\big(A_i^{1/4}[A_i^{1/4},B_i]\rho\big) -2\delta^{1/8} M\notag\\
&\geq \sum_i \Tr\big(B_i^{1/2}A_i^{1/2}B_i^{1/2} \rho\big) -   \sum_i \Tr\big([B_i^{1/2},A_i^{1/2}]B_i^{1/2}\rho\big)-4\delta^{1/8} M\notag\\
&\geq \sum_i \Tr\big(B_i^{1/2}A_i B_i^{1/2} \rho\big) -  6\delta^{1/8} M\label{eq:com-bound-1}
\end{align}
where we repeatedly used the bound~\eqref{eq:com-bound}. 
To obtain~\eqref{eq:sq-bound} it suffices to expand the square in~\eqref{eq:sq-bound} and use the assumption $\sum_i \Tr(A_i\sqrt{B_i} \rho\sqrt{B_i}) \geq 1-\eps$ together with~\eqref{eq:com-bound-1}. Finally,~\eqref{eq:sq-bound-trace} follows easily from~\eqref{eq:sq-bound} (see e.g. Claim~36 in~\cite{iv13arxiv}).
\end{proof}

\section{A rounding scheme for approximately commuting provers} 
\label{commutatorbound}

 We introduce a rounding scheme for the QC SDP hierarchy which, given the optimal $N^{th}$-level QC SDP solution for a certain game $G$, constructs an $O\left ( \frac{ A^2}{\sqrt{N} }  \right )$-AC strategy for the provers (here $A$ is the number of possible answers in $G$, which for simplicity we assume to be the same for each prover). The resulting strategy for $G$ has value equal to the value of $N^{th}$ level of the QC SDP hierarchy, which we denote $\omega_{QCSDP}^N(G)$ (see Definition~\ref{def:qcsdp} below for a precise definition).  To the best of our knowledge this is the first proposal of a rounding scheme for the QC SDP hierarchy for which one is able to provide any quantitative error estimate whatsoever.
    
 In \cite{pna08} and \cite{DLTW08} it is shown that $\omega_{QCSDP}^N(G)$ is an upper bound on the value of $0$-AC strategies, that is, $\omega_{QCSDP}^N(G) \geq \omega^*(G)$.  Our rounding result implies that for all $\delta = O\left ( \frac{ |A|^2}{\sqrt{N} }  \right )$  the quantity $\omega_{QCSDP}^N(G)$ is also a \emph{lower} bound on the optimal success probability achievable by any $\delta$-AC strategy.  This additional result allows us to place an upper bound on the complexity class $\MIP_\delta^*$ introduced in Section~\ref{interactiveproofs}. Precisely, we obtain the following:

\begin{theorem}\label{thm:ub}
For any $\delta>0$, $k\in\N$ and $c,s:\N\to[0,1]$ such that $c-s =\Omega( 2^{-\poly})$ it holds that $\MIP^*_\delta(k,c,s) \subseteq \TIME(\expt(\exp(\poly)/\delta^2))$. Furthermore, the upper bound can be brought down to $\TIME(\expt(\poly/\delta^2))$ when considering only protocols with constant answer size. 
\end{theorem}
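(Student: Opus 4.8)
The plan is to derive Theorem~\ref{thm:ub} as a direct algorithmic consequence of the rounding scheme of Theorem~\ref{thm:sdp}. The key observation is that $\MIP^*_\delta(k,c,s)$ asks us to distinguish, for a polynomial-time-generated game $G_x$, between the case $\omega^*(G_x)\geq c$ and the case $\omega^*_\delta(G_x)\leq s$. Since perfectly commuting (tensor product) strategies are in particular $\delta$-AC, we always have $\omega^*(G_x)\leq \omega^*_\delta(G_x)$, and the QC SDP value sandwiches both: on one side $\omega^N_{QCSDP}(G_x)\geq \omega^f(G_x)\geq \omega^*(G_x)$ (the hierarchy relaxes commuting-operator strategies, as recalled from \cite{pna08,DLTW08}), and on the other side Theorem~\ref{thm:sdp} produces an explicit $\delta'$-AC strategy of value exactly $\omega^N_{QCSDP}(G_x)$ with $\delta' = O(\ell^2/\sqrt{N})$, so $\omega^N_{QCSDP}(G_x)\leq \omega^*_{\delta'}(G_x)$. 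Combining, if we choose $N$ large enough that $\delta' \leq \delta$, then $\omega^*(G_x)\leq \omega^N_{QCSDP}(G_x)\leq \omega^*_\delta(G_x)$, so in the yes-case $\omega^N_{QCSDP}(G_x)\geq c$ and in the no-case $\omega^N_{QCSDP}(G_x)\leq s$. Since $c-s=\Omega(2^{-\poly})$, computing $\omega^N_{QCSDP}(G_x)$ to additive precision $(c-s)/3$ suffices to decide membership.

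Next I would make the parameter-counting explicit. To get $\delta' = O(\ell^2/\sqrt{N})\leq \delta$ we need $N = \Theta(\ell^4/\delta^2)$. Here $\ell = A_i$ is the answer length, which is at most polynomial in $|x|$ in general, and is a constant for the "constant answer size" refinement. The $N$-th level SDP of Definition~\ref{def:qcsdp} has a matrix variable of dimension $|W_N| = O((m\ell)^N)$ where $m = \max_i Q_i = \poly(|x|)$ is the number of questions; its entries are constrained by polynomially-in-$|W_N|$ many linear equations, and the objective is linear. A semidefinite program of this size can be solved to additive error $\epsilon$ in time $\poly(|W_N|,\log(1/\epsilon))$ using the ellipsoid method (one must check feasibility of the affine slice of the PSD cone, a standard computation; the game data $\mu,V$ are polynomial-time computable by hypothesis, so the SDP can be written down in time $\poly(|W_N|)$). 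With $\epsilon = (c-s)/3 = 2^{-\poly}$ this gives total running time $\poly(|W_N|) = \poly((m\ell)^N) = \expt(\poly(|x|)\cdot N) = \expt(\poly(|x|)\cdot \ell^4/\delta^2)$. In general $\ell = \poly(|x|)$, so $\ell^4 = \poly(|x|)$ and this is $\expt(\exp(\poly)/\delta^2)$ after absorbing the $\poly(|x|)$ into an $\exp(\poly)$ term; when $\ell = O(1)$, the exponent is just $\poly(|x|)/\delta^2$, giving the refined bound $\TIME(\expt(\poly/\delta^2))$.

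For the write-up I would therefore proceed in three short steps: (i) state and justify the sandwich $\omega^*(G_x)\leq \omega^N_{QCSDP}(G_x)\leq \omega^*_\delta(G_x)$ for $N$ chosen so that $O(\ell^2/\sqrt{N})\leq\delta$, invoking Theorem~\ref{thm:sdp} for the upper inequality and \cite{pna08,DLTW08} for the lower; (ii) observe that deciding $\MIP^*_\delta(k,c,s)$ thus reduces to approximating $\omega^N_{QCSDP}(G_x)$ to additive error $(c-s)/3 = 2^{-\poly}$; (iii) bound the cost of this SDP computation by $\poly(|W_N|)$ via the ellipsoid method and substitute $N=\Theta(\ell^4/\delta^2)$, $|W_N| = O((m\ell)^N)$, splitting into the general case ($\ell=\poly$) and the constant-answer-size case. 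I expect no serious obstacle here — the theorem is essentially a corollary — but the one point requiring care is the precise accounting of how the SDP solver's error and bit-complexity interact with the requirement $c-s=\Omega(2^{-\poly})$, and verifying that the ellipsoid method applies cleanly to the affinely-constrained PSD feasibility problem with the stated separation oracle; I would cite a standard reference (e.g. Grötschel–Lovász–Schrijver) rather than reprove it. A minor subtlety worth a remark is that the $\delta'$-AC strategy produced by Theorem~\ref{thm:sdp} satisfies $\delta'\leq\delta$ only for $N$ above a threshold, and one should note the success probability of that strategy is genuinely achievable (order-independence of the value for the rounded operators is exactly the content of the footnote to Theorem~\ref{thm:sdp}), so that $\omega^N_{QCSDP}(G_x)\leq \omega^*_\delta(G_x)$ really does hold.
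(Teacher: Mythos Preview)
Your approach is exactly the one the paper intends: the theorem is a direct corollary of the rounding result, obtained via the sandwich $\omega^*(G_x)\le \omega^N_{QCSDP}(G_x)\le \omega^*_\delta(G_x)$ once $N$ is large enough, followed by solving the level-$N$ SDP. The paper does not spell this out any more than you do.

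There is, however, a slip in your parameter accounting that you should fix. You write that ``$m=\max_i Q_i=\poly(|x|)$'' and ``$\ell$ is at most polynomial in $|x|$''. These are the \emph{numbers} of possible questions and answers, not their bit-lengths; since the verifier exchanges $\poly(|x|)$-bit messages, in general $m,\ell=2^{\poly(|x|)}$, not $\poly(|x|)$. With the correct scaling the double exponential falls out cleanly: $N=\Theta(\ell^4/\delta^2)=\exp(\poly)/\delta^2$, and $|W_N|=(m\ell)^{O(N)}=\exp\big(\poly\cdot\exp(\poly)/\delta^2\big)=\exp\big(\exp(\poly)/\delta^2\big)$, giving the stated bound without the ad~hoc ``absorbing $\poly$ into $\exp(\poly)$'' step (which, as written, does not follow from your premises). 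The constant-answer-size refinement then really is the special case $\ell=O(1)$, so $N=O(1/\delta^2)$ and $|W_N|=(2^{\poly})^{O(1/\delta^2)}=\exp(\poly/\delta^2)$.
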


Combining Theorem~\ref{thm:ub} with Theorem~\ref{thm:nexp} we obtain that for any constant $k$ it holds that 
$$ \NEXP \,\subseteq\, \bigcup_{p\in\poly} \AMIP^*(2,1,1-2^{-p}) \,\subseteq\, \TIME\big(2^{2^{\poly}}).$$

\subsection{Rounding Scheme}\label{sec:rounding}

In this section we introduce a rounding scheme for the QC SDP hierarchy.  First we  briefly argue that the most natural rounding scheme suggested by the definition of the hierarchy, which was first proposed in \cite{NPA07}, is actually not the best for our purposes.  In \cite{NPA08NJP} it is proposed that any solution, $\Gamma^N$, to the $N^{th}$ level of the QC SDP hierarchy be rounded to a strategy consisting of state $\rho \equiv  |v_{\phi} \rangle \langle v_{\phi}|$, and projective measurement operators $\Pi_{P_x^a}$ for the first prover and $\Pi_{Q_y^b}$ for the second.  It is further proved that, assuming a technical condition called the ``rank loop" condition, this rounded strategy gives valid POVMs for the two provers, and that those POVMs are exactly commuting ($[\Pi_{P_x^a},  \Pi_{Q_y^b}] = 0$).  Unfortunately, the ``rank loop" condition is computationally difficult to verify, and in general it may not hold at any level of the hierarchy.  Even without assuming the ``rank loop" condition, it is true that, for all $j<N$, $ \Pileq{j}  \Pi_{P_x^a}  \Pi_{Q_y^b} \Pileq{j}  = \Pileq{j} \Pi_{Q_y^b}  \Pi_{P_x^a} \Pileq{j}$  (see Proposition \ref{commutation-cancellation identity}).  However, while this tells us that $[\Pi_{P_x^a},  \Pi_{Q_y^b}] = 0$ \emph{exactly} when restricted to the space $V_{N-1}$, it is hard to control the size of $\| [\Pi_{P_x^a},  \Pi_{Q_y^b}]  \|$ on the space $V_{N-1}^{\perp} \equiv \text{Im} \left (\Pileq{N-1}^{\perp} \ \right )$ without making additional assumptions about the structure of $G$, etc.  Furthermore, when using this rounding scheme, it is not clear that there is any quantitative benefit from increasing the number of levels $N$ of the QC SDP hierarchy.

We introduce a rounding scheme which will ultimately allow us to control the operator norm of commutators of the rounded strategy on the entire space $V_N$, without making any assumptions whatsoever about the structure of $G$.

\begin{definition}[Rounding Scheme for the QC SDP hierarchy] \label{roundingscheme}
Fix probability distributions $\{p_i\}_{i=0}^N$, and $\{q_j\}_{j=1}^N$.  In what follows we will assume that $p_0 = q_0 = p_N = q_N = 0$.  Given a solution $\Gamma^N$ to the $N^{th}$ level of the QC SDP hierarchy for $G$, the probability distributions $p_i$ and $q_j$ specify a rounding scheme as follows.
The state shared by the provers is $\rho \equiv  |v_{\phi} \rangle \langle v_{\phi}|$.  Their measurement operators, $\{\tilde{P}_x^a\}$ for the first prover and $\{\tilde{Q}_y^b\}$ for the second, are defined as 
\begin{align}
\tilde{P}_x^a \equiv \sum_i p_i \Pileq{i} \Pi_{P_x^a}\Pileq{i},\qquad
\tilde{P}_x^{garbage} \equiv I - \sum_{a \in \mathcal{A}} \tilde{P}_x^a \nonumber \\
\tilde{Q}_y^b \equiv \sum_j q_j \Pileq{j} \Pi_{Q_y^b}\Pileq{j},\qquad
\tilde{Q}_y^{garbage} \equiv I - \sum_{b \in \mathcal{B}} \tilde{Q}_y^b \nonumber
\end{align}
\end{definition}

We first verify that the rounding scheme defined in Definition~\ref{roundingscheme} defines valid POVM measurements, and leads to a strategy whose value in $G$  exactly matches $\omega_{QCSDP}^N(G)$. (As we defined it, it may seem that the strategy sometimes outputs a symbol ``garbage''. As we show below this has probability $0$, and we can safely ignore the event.)

\begin{claim}\label{claim:strat}
The strategy defined in Definition~\ref{roundingscheme} has value exactly $\omega_{QCSDP}^N(G)$ in $G$.
\end{claim}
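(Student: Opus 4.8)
The plan is to verify two things: (i) the operators $\{\tilde{P}_x^a\}_a \cup \{\tilde{P}_x^{garbage}\}$ (and likewise for the second prover) form a valid POVM on $V_N$, and (ii) the value of this strategy in $G$ equals $\omega_{QCSDP}^N(G)$, with the ``garbage'' outcomes contributing nothing. For (i), positivity of $\tilde{P}_x^a = \sum_i p_i \,\Pileq{i}\Pi_{P_x^a}\Pileq{i}$ is immediate since each summand $\Pileq{i}\Pi_{P_x^a}\Pileq{i}$ is of the form $M^\dagger M$ (as $\Pileq{i}$ and $\Pi_{P_x^a}$ are projectors and the $p_i\geq 0$), so $\tilde{P}_x^a \succeq 0$; and $\tilde{P}_x^{garbage} \succeq 0$ will follow once we show $\sum_a \tilde{P}_x^a \preceq I$. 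For the latter, fix $x$ and compute $\sum_a \tilde{P}_x^a = \sum_i p_i \,\Pileq{i}\big(\sum_a \Pi_{P_x^a}\big)\Pileq{i}$; the $\Pi_{P_x^a}$ are mutually orthogonal projectors (by Observation~\ref{projectoraction} together with \eqref{aorthog}), so $\sum_a \Pi_{P_x^a}$ is itself a projector and hence $\preceq I$, giving $\Pileq{i}(\sum_a \Pi_{P_x^a})\Pileq{i} \preceq \Pileq{i} \preceq I$. Since $\sum_i p_i = 1$, we conclude $\sum_a \tilde{P}_x^a \preceq I$, so $\tilde{P}_x^{garbage}$ is a legitimate positive operator, and together the $\ell+1$ operators sum to $I$. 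The argument for $\{\tilde{Q}_y^b\}$ is identical.

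For (ii), I would first show that the ``garbage'' outcomes have probability zero. The key point is that $\rho = |v_\phi\rangle\langle v_\phi|$ is supported on $V_0 \subseteq \Ima(\Pileq{1})$, and $\Pileq{i}|v_\phi\rangle = |v_\phi\rangle$ for all $i\geq 0$. So $\Tr(\tilde{P}_x^{garbage}\rho) = 1 - \sum_a \langle v_\phi|\tilde{P}_x^a|v_\phi\rangle = 1 - \sum_i p_i \sum_a \langle v_\phi|\Pi_{P_x^a}|v_\phi\rangle = 1 - \sum_i p_i \langle v_\phi | \big(\sum_a \Pi_{P_x^a}\big)|v_\phi\rangle$; and since $\sum_a \Pi_{P_x^a}$ fixes $|v_\phi\rangle$ (again by Observation~\ref{projectoraction}, as $\phi\in W_{N-1}$ and every $|v_{P_x^a}\rangle$ lies in the relevant span, with $\sum_a|v_{P_x^a}\rangle = |v_\phi\rangle$ by Observation~\ref{vector sum}), this equals $1 - \sum_i p_i = 0$. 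The same holds for $\tilde{Q}_y^{garbage}$, so genuine outcomes occur with probability $1$.

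It then remains to evaluate $\Tr(\tilde{P}_x^a \tilde{Q}_y^b \rho)$ for genuine outcomes and check it equals $\Gamma^N_{P_x^a,Q_y^b} = \langle v_{P_x^a}|v_{Q_y^b}\rangle$. Expanding, $\tilde{P}_x^a\tilde{Q}_y^b|v_\phi\rangle = \sum_{i,j} p_i q_j \,\Pileq{i}\Pi_{P_x^a}\Pileq{i}\Pileq{j}\Pi_{Q_y^b}\Pileq{j}|v_\phi\rangle$. Using $\Pileq{j}|v_\phi\rangle = |v_\phi\rangle$, then $\Pi_{Q_y^b}|v_\phi\rangle = |v_{Q_y^b}\rangle$ (Observation~\ref{projectoraction}), then $\Pileq{j}|v_{Q_y^b}\rangle = |v_{Q_y^b}\rangle$ since $Q_y^b$ is a length-$1$ string and $j\geq 1$ whenever $q_j\neq 0$, and similarly $\Pileq{i}|v_{Q_y^b}\rangle = |v_{Q_y^b}\rangle$; then $\Pi_{P_x^a}|v_{Q_y^b}\rangle = |v_{P_x^aQ_y^b}\rangle$, and finally $\langle v_\phi|\Pileq{i}|v_{P_x^aQ_y^b}\rangle = \langle v_\phi | v_{P_x^aQ_y^b}\rangle = \langle v_{P_x^a}|v_{Q_y^b}\rangle$ by the consistency constraint $\Gamma^N_{\phi, P_x^aQ_y^b} = \Gamma^N_{P_x^a, Q_y^b}$ together with the transpose convention (note $|v_\phi\rangle$ is fixed by $\Pileq{i}$). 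Since $\sum_i p_i = \sum_j q_j = 1$, the double sum collapses to $\langle v_{P_x^a}|v_{Q_y^b}\rangle$ exactly. Summing against $\mu(x,y)V(x,y,a,b)$ reproduces the SDP objective; and because the permutation-ordering only affects genuine/genuine terms through $\Pileq{1}$-supported vectors where the exact commutation \eqref{commutationconstraint} applies, the order of the provers does not change the value. I do not anticipate a serious obstacle here — the only mild subtlety is being careful that $p_i=0$ for $i=0,N$ is actually used (it is not needed for this claim, only later for the commutator bound), and that the transpose/reversal convention is applied correctly when reading off $\langle v_\phi | v_{P_x^a Q_y^b}\rangle$ as an entry of $\Gamma^N$.
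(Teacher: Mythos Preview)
Your proof is correct and follows essentially the same route as the paper: verify the POVM conditions, then compute $\langle v_\phi|\tilde P_x^a\tilde Q_y^b|v_\phi\rangle=\langle v_{P_x^a}|v_{Q_y^b}\rangle$ by peeling off the $\Pileq{i},\Pileq{j}$ (using $|v_\phi\rangle\in V_0$ and $|v_{Q_y^b}\rangle\in V_1$) and invoking Observation~\ref{projectoraction}. The only differences are cosmetic: you dispose of the garbage outcomes by showing the \emph{marginals} $\langle v_\phi|\tilde P_x^{garbage}|v_\phi\rangle=0$ (which, since $\tilde P_x^{garbage}\succeq 0$, forces $\tilde P_x^{garbage}|v_\phi\rangle=0$ and hence all joint terms vanish), whereas the paper computes the joint terms $\langle v_\phi|\tilde P_x^{garbage}\tilde Q_y^b|v_\phi\rangle$ directly---both work. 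One small slip: your closing parenthetical says $p_0=q_0=0$ ``is not needed for this claim,'' but your own argument explicitly invokes ``$j\geq 1$ whenever $q_j\neq 0$'' to get $\Pileq{j}|v_{Q_y^b}\rangle=|v_{Q_y^b}\rangle$; the paper uses it at the same step.
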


\begin{proof}
First we note that the $\tilde{P}_x$ and $\tilde{Q}_y$ define valid POVM. Indeed, using that each $ \Pileq{i}$, $ \Pi_{P_x^a}$ and $\Pi_{Q_y^b}$ is a projector, it is clear that the $\tilde{P}_x^a$ (resp. $\tilde{Q}_y^b $) are positive semidefinite. Furthermore, using $\sum_a \Pi_{P_x^a}\leq \Id$ (since the projectors are, by definition, orthogonal), $\sum_i p_i=1$ and $ \Pileq{i} \leq \Id$ for every $i$ we get $\sum_a  \tilde{P}_x^a \leq \Id$ and hence $\tilde{P}_x^{garbage}\geq 0$ as well. 

Next we evaluate the strategy's success probability in the game. From the definition, 
\begin{align}
 \langle v_{\phi} | \tilde{P}_x^a \tilde{Q}_y^{b} | v_{\phi} \rangle &=    \langle v_{\phi} |  \Big(  \sum_i p_i  \Pileq{i} \Pi_{P_x^a} \Pileq{i}  \Big) \Big( \sum_j q_j  \Pileq{j} \Pi_{Q_y^b}\Pileq{j} \Big) | v_{\phi} \rangle \notag\\
& =   \sum_i p_i   \sum_j q_j \langle v_{\phi} |  \Pileq{i} \Pi_{P_x^a} \Pileq{i}  \Pileq{j} \Pi_{Q_y^b}\Pileq{j}  | v_{\phi} \rangle\notag\\
 &= \sum_i p_i   \sum_j q_j \langle v_{\phi} |  \Pi_{P_x^a} \Pileq{i}  \Pileq{j} \Pi_{Q_y^b} | v_{\phi} \rangle\notag \\
& = \sum_i p_i   \sum_j q_j \langle v_{P_x^a} |   \Pileq{i}  \Pileq{j} | v_{Q_y^b} \rangle \notag\\
&=  \langle v_{P_x^a} | v_{Q_y^b} \rangle.  \label{restricttoshortident}
\end{align}
Here the third equality follows since $| v_{\phi} \rangle$ is in the image of $\Pileq{i}$ for all $i$, the fourth equality follows from observation \ref{projectoraction} and the last from the definition of $ \Pileq{i}$ (using $i,j\geq 1$ wlog since $p_0=q_0=0$). Furthermore, 
\begin{align*}
\langle v_{\phi} | \tilde{P}_x^{garbage} \tilde{Q}_y^b | v_{\phi} \rangle &=    \langle v_{\phi} |  \Big ( I -\sum_{a \in \mathcal{A}} \tilde{P}_x^a \Big )\tilde{Q}_y^b  | v_{\phi} \rangle \\
&=   \langle v_{\phi} |\tilde{Q}_y^b  | v_{\phi} \rangle - \sum_{a \in \mathcal{A}}  \langle v_{\phi} |\tilde{P}_x^a  \tilde{Q}_y^b  | v_{\phi} \rangle \\
& = \langle v_{\phi}   | v_{Q_y^b} \rangle - \sum_{a \in \mathcal{A}} \langle v_{P_x^a} | v_{Q_y^b} \rangle \\
&= 0 ,
\end{align*}
where the fourth equality follows by equation \eqref{restricttoshortident}, and the third equality follows by reasoning very similar to that used to prove equation \eqref{restricttoshortident}.  Using similar arguments one can verify that $\langle v_{\phi} | \tilde{P}_x^a \tilde{Q}_y^{garbage} | v_{\phi} \rangle =   0 $ and $ \langle v_{\phi} | \tilde{P}_x^{garbage} \tilde{Q}_y^{garbage} | v_{\phi} \rangle =   0$ as well. Hence the ``garbage'' outcomes have probability zero of occurring (given the shared state is $\rho = \ket{v_{\phi}}\bra{v_{\phi}}$) and we may ignore them. 
       \end{proof}

\subsection{Commutator Bound}

\begin{theorem} \label{commutatorbound-nongarb}
Suppose the $p_i,q_j$ are such that 
$$\max\Big(\sum_i p_i^2,\,\sum_j q_j^2,\,\sum_i p_iq_i\Big)\,=\,O\Big(\frac{1}{N}\Big),$$
for instance $p_i = q_j = 1/(N-1)$ for $0< i,j < N$).  Then, for each value of $a,b,x,y \in A \times B \times X \times Y$, we have that  
$$\left\|\left [  \tilde{P}_x^a , \tilde{Q}^b_y \right ]  \right\|  =  O \left (  \frac{1}{\sqrt{N}}\right ).$$
\end{theorem}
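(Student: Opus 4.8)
The plan is to work in the orthogonal grading of $V_N$ by word length. Fix $(x,y,a,b)$ and abbreviate $P\equiv\Pi_{P_x^a}$, $Q\equiv\Pi_{Q_y^b}$, $R_i\equiv\Pileq{i}$; set $S_0\equiv R_0$ and $S_k\equiv R_k-R_{k-1}$ for $1\le k\le N$, so the $S_k$ are mutually orthogonal projectors with $\sum_k S_k=I$ on $V_N$ and $R_i=\sum_{k\le i}S_k$. Two structural facts drive the argument. First, $P$ and $Q$ are \emph{tridiagonal} in this grading: Proposition~\ref{projectoroneshift} gives $\text{Im}(\Pi_{P_x^a}\Pileq{j})\subseteq\text{Im}(\Pileq{j+1})$, which together with the orthogonality of the $S_k$ forces $S_kPS_l=0$ and $S_kQS_l=0$ whenever $|k-l|\ge 2$ (the case $k\ge l+2$ follows since $\Pileq{l+1}^{\perp}P R_l=0$ and $S_k\Pileq{l+1}=0$; the case $l\ge k+2$ by taking adjoints). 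Second, substituting $R_i=\sum_{k\le i}S_k$ into the definition of $\tilde{P}_x^a$ and using $p_0=p_N=0$ yields
\[
\tilde{P}_x^a=\sum_{k,l}g(\max(k,l))\,S_kPS_l,\qquad g(m)\equiv\sum_{i=\max(m,1)}^{N-1}p_i,
\]
and similarly $\tilde{Q}_y^b=\sum_{k,l}h(\max(k,l))\,S_kQS_l$ with $h$ built from the $q_j$. Here $g$ is non-increasing with $0\le g\le 1$, $g(0)=g(1)=1$, $g(m)=0$ for $m\ge N$, $g(m)-g(m+1)=p_m$, so $|g(a)-g(b)|\le|a-b|\max_i p_i$; and by hypothesis $\max_i p_i\le(\sum_i p_i^2)^{1/2}=O(1/\sqrt N)$, similarly $\max_j q_j=O(1/\sqrt N)$. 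In particular $\tilde{P}_x^a,\tilde{Q}_y^b$ are supported on $V_{N-1}$, since $g(m)=h(m)=0$ for $m\ge N$.

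Because $\tilde{P}_x^a,\tilde{Q}_y^b$ are tridiagonal, $\tilde{P}_x^a\tilde{Q}_y^b$ and $\tilde{Q}_y^b\tilde{P}_x^a$, hence $[\tilde{P}_x^a,\tilde{Q}_y^b]$, are pentadiagonal: $S_k[\tilde{P}_x^a,\tilde{Q}_y^b]S_{k'}=0$ for $|k-k'|\ge 3$. Write $[\tilde{P}_x^a,\tilde{Q}_y^b]=\sum_{d=-2}^{2}B_d$ with $B_d\equiv\sum_k S_k[\tilde{P}_x^a,\tilde{Q}_y^b]S_{k+d}$. For fixed $d$ the summands of $B_d$ have pairwise orthogonal ranges (contained in the distinct $\text{Im}(S_k)$) and pairwise orthogonal domains, so $\|B_d\|=\max_k\|S_k[\tilde{P}_x^a,\tilde{Q}_y^b]S_{k+d}\|$ and thus $\|[\tilde{P}_x^a,\tilde{Q}_y^b]\|\le\sum_{d=-2}^{2}\max_k\|S_k[\tilde{P}_x^a,\tilde{Q}_y^b]S_{k+d}\|$. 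It therefore suffices to bound one block, for fixed $k,d$ with $|d|\le 2$; and since the commutator is supported on $V_{N-1}$ we may take $k,k+d\le N-1$.

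For one block, expanding the two expressions above and using $S_lS_m=\delta_{lm}S_l$,
\[
S_k[\tilde{P}_x^a,\tilde{Q}_y^b]S_{k+d}=\sum_l\alpha_l\,S_kPS_lQS_{k+d}-\sum_l\beta_l\,S_kQS_lPS_{k+d},
\]
where $\alpha_l\equiv g(\max(k,l))h(\max(l,k+d))$, $\beta_l\equiv h(\max(k,l))g(\max(l,k+d))$, and the sum is over the at most three $l$ with $|k-l|\le 1$ and $|l-(k+d)|\le 1$. The key point is to use the reference index $l_\star\equiv\max(k,k+d)$: when $l=l_\star$ every maximum in $\alpha_l,\beta_l$ collapses to $l_\star$, so $\alpha_{l_\star}=\beta_{l_\star}=g(l_\star)h(l_\star)$ --- the \emph{same} scalar for both sums. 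Subtracting it, and extending the sums over $l$ to all indices (the added terms vanish by tridiagonality) so that $\sum_l PS_lQ=PQ$, the main part of the block equals
\[
g(l_\star)h(l_\star)\big(S_kPQS_{k+d}-S_kQPS_{k+d}\big)=g(l_\star)h(l_\star)\,S_k[P,Q]S_{k+d},
\]
which vanishes by Proposition~\ref{commutation-cancellation identity}: it gives $\Pileq{N-1}[P,Q]\Pileq{N-1}=0$, hence $S_k[P,Q]S_{k+d}=0$ as $k,k+d\le N-1$. What remains is $\sum_l(\alpha_l-\alpha_{l_\star})S_kPS_lQS_{k+d}-\sum_l(\beta_l-\beta_{l_\star})S_kQS_lPS_{k+d}$, at most six terms. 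Each operator $S_kPS_lQS_{k+d}$ has norm $\le 1$ (a product of orthogonal projectors), and for every $l$ in the sum the indices $\max(k,l)$, $\max(l,k+d)$, $l_\star$ all lie in an interval of length $O(1)$ about $k$, so by the slow-variation bound for $g,h$ and $0\le g,h\le 1$ we get $|\alpha_l-\alpha_{l_\star}|,|\beta_l-\beta_{l_\star}|=O(\max_i p_i+\max_j q_j)=O(1/\sqrt N)$. Hence each block has norm $O(1/\sqrt N)$, and summing the five bands gives $\|[\tilde{P}_x^a,\tilde{Q}_y^b]\|=O(1/\sqrt N)$.

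The one genuinely delicate point --- and the main obstacle --- is the choice of reference value inside a block: it must be picked so that \emph{identical} scalars are removed from the $PQ$-sum and the $QP$-sum, which is exactly why $l_\star=\max(k,k+d)$ (making $\alpha_{l_\star}=\beta_{l_\star}$) is the right choice, for only then does the residual main term collapse to $g(l_\star)h(l_\star)\,S_k[P,Q]S_{k+d}$ and die by exact commutation on $V_{N-1}$; a different reference would leave an uncancelled term of size $\Theta(1)$. Everything else --- tridiagonality from Proposition~\ref{projectoroneshift}, the coefficient expansion of $\tilde{P}_x^a$, slow variation of $g,h$, and the block-orthogonality identity $\|B_d\|=\max_k\|\cdot\|$ --- is routine. (This argument in fact uses only $\sum_i p_i^2,\sum_j q_j^2=O(1/N)$; the hypothesis $\sum_i p_iq_i=O(1/N)$ is not needed for the commutator bound.)
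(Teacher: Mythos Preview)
Your proof is correct and complete. It takes a genuinely different organizational route from the paper's argument, though both rest on the same two structural facts (Propositions~\ref{projectoroneshift} and~\ref{commutation-cancellation identity}).

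The paper keeps the double sum $\sum_{i,j}p_iq_j[\Pileq{i}P\Pileq{i},\Pileq{j}Q\Pileq{j}]$ intact and uses the identity $\Pileq{i}P\Pileq{i}=(\Id-\Pi_{=i+1})P\Pileq{i}$ to expand the commutator into three groups: a main term $\Pileq{i}[P,Q]\Pileq{j}$ that vanishes exactly, a diagonal correction $\sum_i p_iq_i\,\Pileq{i}(P\Pi_{=i+1}Q-Q\Pi_{=i+1}P)\Pileq{i}$ bounded using the hypothesis $\sum_i p_iq_i=O(1/N)$, and cross terms controlled by the orthogonality trick $\bigl\|\sum_i p_i\Pileq{i}P\Pi_{=i+1}\bigr\|^2\le\sum_i p_i^2$. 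Your proof instead first resums the weights to obtain the cumulative coefficients $g,h$, then exploits block--orthogonality of the pentadiagonal commutator to reduce to a single block $S_k[\tilde P,\tilde Q]S_{k+d}$; within each block you subtract the common scalar $g(l_\star)h(l_\star)$ so that the main part collapses to $S_k[P,Q]S_{k+d}=0$, and the residue is bounded by the Lipschitz property $|g(m)-g(m+1)|=p_m$.

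What each buys: the paper's argument stays closer to the definition of the rounding scheme and makes the ``smooth truncation'' intuition visible through the $\sum_i p_i\Pileq{i}P\Pi_{=i+1}$ term; your block--by--block argument is more modular and makes the tridiagonal structure explicit. A small bonus of your route, which you correctly note, is that it uses only $\max_i p_i,\max_j q_j=O(1/\sqrt N)$ (a consequence of $\sum_i p_i^2,\sum_j q_j^2=O(1/N)$) and never invokes the mixed hypothesis $\sum_i p_iq_i=O(1/N)$ that the paper needs for its diagonal correction term.
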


\begin{proof}
Fix $(a,b,x,y) \in A \times B \times X \times Y$.  In order to simplify notation within this proof we write $\tildPiPxa$ for $\tilde{P}_x^a$, $\PiPxa$ for $\Pi_{P_x^a}$, $\tildPiQyb$ for $\tilde{Q}^b_y$, and $\PiQyb$ for $\Pi_{Q_y^b}$.
For any $1\leq i\leq N$ let $\Pi_{=i}=(\Id-\Pi_{<i})\Pi_{\leq i}(\Id-\Pi_{<i})$. Using that $\Pi_{<i}\leq \Pi_{\leq i}$ for each $i$, we get that the $\Pi_{=i}$ are orthogonal projectors and, taking the convention that $\Pi_{\leq 0}=0$, $\sum_{i\leq N} \Pi_{=i} = \Pi_{\leq N}$. 

Proposition~\ref{projectoroneshift} immediately implies that for any $i<N$ and $k>i+1$ it holds that $\Pi_{=k} P \pili=0$. Thus $\pili P\pili = (\Id - \pieip)P\pili$ and similarly for $Q$. Expand
\begin{align}
[\tilde{P},\tilde{Q}] &= \sum_{i,j} p_i q_j\, \big[\pili P \pili, \pilj Q \pilj\big]\notag\\
&= \sum_{i,j} p_iq_j \,\big( \pili P (\Id-\pieip) (\Id-\piejp) Q \pilj - \pilj Q (\Id-\piejp)(\Id-\pieip)P\pili\big)\notag\\
&= \sum_{i,j} p_iq_j\, \Big( \pili [P ,Q]  \pilj  +  \big(\pili P \pieip \piejp Q \pilj- \pilj Q \piejp \pieip P\pili\big) \notag\\
&\qquad -  \big(\pili P (\pieip+\piejp) Q \pilj -  \pilj Q (\pieip+\piejp) P \pili \big)    \Big).\label{eq:cb-1}
\end{align}
The second equality above follows by using Propositions \ref{projectorlevelidentities}, and \ref{projectoroneshift}.  We bound each of the four terms in~\eqref{eq:cb-1} separately. Using $i,j<N$ terms of the form $\pili [P ,Q]  \pilj$ evaluate to zero by Proposition~\ref{commutation-cancellation identity}. The second term 
\begin{align*}
\sum_{i,j} p_iq_j\,   \big(\pili P \pieip \piejp Q\pilj - \pilj Q \piejp \pieip P\pili\big) &= \sum_{i} p_iq_i\,\pili \big( P \pieip Q -  Q \pieip P\big)\pili,
\end{align*}
which using $\|  P \pieip Q -  Q \pieip P\|\leq 2$ and $\sum_i p_iq_i = O(N^{-1})$ has norm $O(N^{-1})$. It remains to bound the last two terms in~\eqref{eq:cb-1}. Towards this we first claim that
\begin{equation}
\Big\|\sum_i \pili P \pieip \Big\|= O\Big(\frac{1}{\sqrt{N}}\Big).\label{eq:cbound-2}
\end{equation}
 This can be seen by evaluating
\begin{align*}
\Big(\sum_i p_i \pili P \pieip\Big)\Big(\sum_i p_i \pili P \pieip\Big)^\dagger &= \sum_i p_i^2 \pili P \pieip P \pili\\
&\leq \sum_i p_i^2 \pili P \pili\\
&\leq \sum_i p_i^2 \Id,
\end{align*}
from which the bound~\eqref{eq:cbound-2} follows since $\sum_i p_i^2 = O(1/N)$. Together with the fact that $\|\sum_i p_i P\pili\|\leq 1$, and using analogous bounds for $Q$, the last two terms in~\eqref{eq:cb-1} each has norm at most $O(N^{-1/2})$. This concludes the proof. 
\end{proof}

\begin{corollary} \label{commutatorbound-garbcase}
Let us specify $p_i = q_j =  \frac{1}{N-1}$ when $0< i,j < N$  (and $p_0 = q_0 = p_N = q_N = 0$).  Then, for each value of $a,b,x,y \in A \times B \times X \times Y$, we have that

\begin{align*}
&\left \| \left [  \tilde{P}_x^{garbage} , \tilde{Q}_y^b \right ]  \right \|_2  =  O \left (   \frac{|A|}{\sqrt{N}}\right ), \\
&\left \| \left [  \tilde{P}_x^a , \tilde{Q}_y^{garbage} \right ]  \right \|_2  =  O \left (   \frac{|B|}{\sqrt{N}}\right ),\\
& \left \| \left [  \tilde{P}_x^{garbage} , \tilde{Q}_y^{garbage} \right ]  \right \|_2  =  O \left (   \frac{|A| |B|}{\sqrt{N}}\right ),
\end{align*}
\end{corollary}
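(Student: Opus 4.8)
The plan is to derive the three commutator bounds for the ``garbage'' outcomes directly from Theorem~\ref{commutatorbound-nongarb} by exploiting the definitions $\tilde{P}_x^{garbage} = I - \sum_{a} \tilde{P}_x^a$ and $\tilde{Q}_y^{garbage} = I - \sum_{b} \tilde{Q}_y^b$, together with the triangle inequality for norms and the fact that the identity commutes with everything. The point is that each garbage operator is a finite signed sum of operators for which we already control the commutator, so no new analysis is required beyond bookkeeping.

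\medskip

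Concretely, for the first bound I would write $[\tilde{P}_x^{garbage}, \tilde{Q}_y^b] = [I - \sum_{a\in A} \tilde{P}_x^a,\ \tilde{Q}_y^b] = -\sum_{a\in A} [\tilde{P}_x^a, \tilde{Q}_y^b]$, since $[I, \tilde{Q}_y^b] = 0$, and then apply the triangle inequality to get $\|[\tilde{P}_x^{garbage}, \tilde{Q}_y^b]\|_2 \leq \sum_{a\in A} \|[\tilde{P}_x^a, \tilde{Q}_y^b]\|_2$. Each summand is bounded by $O(1/\sqrt{N})$ by Theorem~\ref{commutatorbound-nongarb} (noting $\|\cdot\|_2 \leq \|\cdot\| \cdot \sqrt{\mathrm{rank}}$, or simply that on the finite-dimensional space $V_N$ the Hilbert--Schmidt norm and operator norm are comparable up to a dimension factor that the $O(\cdot)$ notation can absorb into the game parameters — one should check which norm the corollary actually intends, but in either reading the bound follows from summing $|A|$ terms), giving the claimed $O(|A|/\sqrt{N})$. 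The second bound is symmetric, summing over $b \in B$. For the third, I would expand $[\tilde{P}_x^{garbage}, \tilde{Q}_y^{garbage}] = [I - \sum_a \tilde{P}_x^a,\ I - \sum_b \tilde{Q}_y^b] = \sum_{a\in A}\sum_{b\in B} [\tilde{P}_x^a, \tilde{Q}_y^b]$, the cross terms with $I$ again vanishing, and apply the triangle inequality over the $|A||B|$ terms to obtain $O(|A||B|/\sqrt{N})$.

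\medskip

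\textbf{The main obstacle} is essentially notational rather than mathematical: one must be careful about which norm ($\|\cdot\|$ versus $\|\cdot\|_2$) is being tracked, since Theorem~\ref{commutatorbound-nongarb} is stated in operator norm while the corollary is stated in $\|\cdot\|_2$, and the relationship between them on $V_N$ introduces a factor of $\sqrt{\dim V_N} = \sqrt{|W_N|}$. If the intent is that all bounds are ``$O$'' as functions of $N$ with the game fixed, this factor is harmless; if one wants a bound uniform in the game it should be flagged. Beyond that caveat, the proof is a one-line expansion followed by the triangle inequality, so there is no genuine difficulty — the garbage operators inherit approximate commutativity from the genuine POVM elements with only a multiplicative loss equal to the number of terms summed.
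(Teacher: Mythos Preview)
Your proposal is correct and follows exactly the paper's own proof: expand the garbage operators via their definitions, use $[I,\cdot]=0$, and apply the triangle inequality together with Theorem~\ref{commutatorbound-nongarb} term by term. Your observation about the norm is well taken: the paper's proof of the corollary in fact works entirely in the operator norm $\|\cdot\|$, so the $\|\cdot\|_2$ in the statement appears to be a typo rather than an intended Hilbert--Schmidt bound, and no dimension factor is needed.
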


\begin{proof}
\begin{align*}
\big [  \tilde{P}_x^{garbage} , \tilde{Q}_y^b \big ]  &=  \Big [ \Big ( I - \sum_{a \in \mathcal{A}} \tilde{P}_x^a \Big )  , \tilde{Q}_y^b \Big ]= 
\big [  I   , \tilde{Q}_y^b \big ] - \sum_{a \in \mathcal{A}} \big [  \tilde{P}_x^a, \tilde{Q}_y^b \big ] = 0 - \sum_{a \in \mathcal{A}} \big [  \tilde{P}_x^a , \tilde{Q}_y^b \big ], \\
 \big [  \tilde{P}_x^a , \tilde{Q}_y^{garbage} \big ] & = \Big [  \tilde{P}_x^a , \Big ( I - \sum_{b \in \mathcal{B}} \tilde{Q}_y^b \Big ) \Big ] =  -\big [  \tilde{P}_x^a , I \big ] +  \sum_{b \in \mathcal{B}} \big [  \tilde{P}_x^a , \tilde{Q}_y^b \big ] = 0 + \sum_{b \in \mathcal{B}} \big [  \tilde{P}_x^a , \tilde{Q}_y^b \big ],\\
\big [  \tilde{P}_x^{garbage} , \tilde{Q}_y^{garbage} \big ] &= \Big [ \Big ( I - \sum_{a \in \mathcal{A}} \tilde{P}_x^a \Big )  ,  \Big ( I - \sum_{b \in \mathcal{B}} \tilde{Q}_y^b \Big ) \Big ] \\
& = \Big [  I  ,  \Big ( I - \sum_{b \in \mathcal{B}} \tilde{Q}_y^b \Big ) \Big ]  - \sum_{a \in \mathcal{A}} \Big [  \tilde{P}_x^a , \Big ( I - \sum_{b \in \mathcal{B}} \tilde{Q}_y^b \Big )  \Big ] \\
&= 0 - \sum_{a \in \mathcal{A}} \Big ( -  \big [  \tilde{P}_x^a , I  \big ] +  \sum_{b \in \mathcal{B}}  \big [  \tilde{P}_x^a ,  \tilde{Q}_y^b  \big ] \Big ) \\ 
& =  \sum_{a \in \mathcal{A}}  \sum_{b \in \mathcal{B}}  \big [  \tilde{P}_x^a ,  \tilde{Q}_y^b  \big ],
\end{align*}

Using the triangle inequality and Theorem \ref{commutatorbound-nongarb} then gives
\begin{align*}
\left \| \left [  \tilde{P}_x^{garbage} , \tilde{Q}_y^b \right ]   \right \| & = \Big \|- \sum_{a \in \mathcal{A}} \left [  \tilde{P}_x^a , \tilde{Q}_y^b \right ]  \Big \| \leq  \sum_{a \in \mathcal{A}}  \left \| \left [  \tilde{P}_x^a , \tilde{Q}_y^b \right ]  \right \| \leq |A| O \left (  \frac{1}{\sqrt{N}}\right ) =  O \left (  \frac{|A|}{\sqrt{N}}\right ), \\
\left \|\left [  \tilde{P}_x^a , \tilde{Q}_y^{garbage} \right ]   \right \| & = \Big \|  \sum_{b \in \mathcal{B}} \left [  \tilde{P}_x^a , \tilde{Q}_y^b \right ]  \Big \| \leq  \sum_{b \in \mathcal{B}} \left \|  \left [  \tilde{P}_x^a , \tilde{Q}_y^b \right ]  \right \|  \leq |B| O \left (  \frac{1}{\sqrt{N}}\right ) = O \left (  \frac{|B|}{\sqrt{N}}\right ), \\
\left \| \left [  \tilde{P}_x^{garbage} , \tilde{Q}_y^{garbage} \right ]  \right \|  & = \Big \| \sum_{a \in \mathcal{A}}  \sum_{b \in \mathcal{B}}  \left [  \tilde{P}_x^a ,  \tilde{Q}_y^b  \right ]    \Big \|  \leq   \sum_{a \in \mathcal{A}}  \sum_{b \in \mathcal{B}} \left \|   \left [  \tilde{P}_x^a ,  \tilde{Q}_y^b  \right ]    \right \|  \\
&\leq |A| |B| O \left (  \frac{1}{\sqrt{N}}\right ) = O \left (  \frac{|A||B|}{\sqrt{N}}\right ),
\end{align*}
This is the desired result.
\end{proof}

\section{A lower bound on $\MIP_\delta^*$}\label{sec:nexp}

In this section we give a detailed sketch of the proof of Theorem~\ref{thm:nexp}. 
The proof closely follows the lines of Theorem~2 in~\cite{ikm09}, where the same result is proved for the case of provers that are restricted to be perfectly commuting. Most of the work consists in carefully going through the argument in~\cite{ikm09} and verifying that the commutation condition, provided it is satisfied for a sufficiently small $\delta$, suffices to preserve soundness. Although intuitively one expects this to be the case, one still has to be a little careful in order to avoid any dimension dependence coming into the argument.

\subsection{Proof outline}

Our starting point is a non-adaptive $3$-query PCP for $\NEXP$ with perfect completeness and soundness bounded away from $1$ and in which the alphabet size is a single bit. Fix an input $x$, let $N$ be the length of the PCP and $\pi:[N]^3\to [0,1]$ the distribution on queries. We may assume that the marginal distribution of $\pi$ on any of its three coordinates is uniform. Let $V: [N]^3 \times \{0,1\}^3\to \{0,1\}$ be the acceptance predicate. We consider the following protocol in which the verifier interacts with two provers only:  

\begin{enumerate}
\item The verifier chooses a triple $(i_1,i_2,i_3)$ according to $\pi$, $i\in\{i_1,i_2,i_3\}$ uniformly at random and $j\in [N]$ uniformly at random. He sends (lexicographically ordered) tuples $\{i_1,i_2,i_3\}$ to the first prover and $\{i,j\}$ to the second. 
\item $A$ replies with three bits $a_{i_1},a_{i_2},a_{i_3}$. $B$ replies with two bits $b_i,b_j$. 
\item The referee accepts if and only if $V(i_1,i_2,i_3,a_{i_1},a_{i_2},a_{i_3})=1$ and $a_{i}=b_i$. 
\end{enumerate}

This protocol is obtained through the standard oracularization technique, except for the additional question $j$ sent to the second prover. This is called a ``dummy question'' in~\cite{ikm09} and it plays an important role in the analysis. 

First we note that the completeness property of the protocol trivially holds. Hence it remains to establish soundness. This is done in the following lemma, which is proved in the following section: 

\begin{lemma}\label{lem:nexp-soundness}
There exists a universal constant $c_1>0$ such that the following holds. Let $(A_{i_1i_2i_3}^{a_1a_2a_3},B_{ij}^{b_ib_j},\rho)$ be a $2$-prover $\delta$-AC projective strategy that succeeds with probability at least $1-\eps$ in the protocol. Then there exists an assignment to the variables $[N]$ that satisfies the PCP verifier's queries with probability at least $1-O(N^2(\eps^{1/2}+\delta^{c_1}))$. 
\end{lemma}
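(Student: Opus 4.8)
The plan is to follow the standard oracularization soundness analysis, as in~\cite{ikm09}, being careful that the constants that arise depend only on $\eps$, $\delta$ and $N$ but never on the dimension of the provers' Hilbert space. Throughout, write $A^{a_1a_2a_3}_{i_1i_2i_3}$ for the first prover's projective POVM on query triple $(i_1,i_2,i_3)$, and let $B^{b}_{ij}\equiv\sum_{b':b_i=b} B^{b_ib_j}_{ij}$ be the second prover's marginal POVM for the answer to the ``real'' question $i$ when the dummy is $j$. The first step is to use the $a_i=b_i$ consistency check together with the success probability $\geq 1-\eps$: averaging over the verifier's choices, we learn that for a $1-O(\eps)$ fraction of tuples $(i_1,i_2,i_3,j)$, measuring the first prover's operator on coordinate $i\in\{i_1,i_2,i_3\}$ and the second prover's operator $B_{ij}$ gives matching outcomes with high probability. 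Because the $B_{ij}$ are $\delta$-AC with respect to the first prover's operators, we may commute them past the first prover's measurement up to an error $O(\delta)$ in operator norm, and hence up to $O(\delta)$ in the success probability.

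The key structural step is to argue that the second prover's answer to $i$ is essentially independent of which triple it came in and of the dummy $j$ — this is precisely the role of the dummy question, which ``hides'' from the second prover which of its two questions is the real one. Concretely, one introduces a single measurement $\{C_i^b\}$ for the second prover's answer on question $i$, defined by averaging $B_{ij}$ over $j$ (or, more carefully, by the usual argument that two POVMs that are each highly consistent with a common third family must be consistent with each other). Here I would invoke Claim~\ref{claim:sq-bound}: writing things in terms of square roots of POVM elements and using the $\delta$-AC hypothesis, the inequalities~\eqref{eq:sq-bound} and~\eqref{eq:sq-bound-trace} let us replace $B_{ij}$ by a $j$-independent $C_i$ at the cost of $O(\eps^{1/2}+\poly(\delta))$ in the relevant overlaps, with the polynomial loss in $\delta$ being some fixed power (this is where the constant $c_1$ comes from — it will be something like $1/16$ coming from~\eqref{eq:sq-bound-trace}). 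Having done this, the first prover's outcome on each coordinate $i$ is, with probability $1-O(\eps^{1/2}+\delta^{c_1})$, consistent with the fixed measurement $C_i^b$ applied to $\rho$, simultaneously for all three coordinates of a random triple (up to commuting the three $C$-measurements past each other, which costs another $O(\delta)$ each since they are all ``on the second prover's side'' — in fact they pairwise $\delta$-commute because each $\delta$-commutes with the first prover's operators and we can route through those, or more simply we just use that measuring $\rho$ with $C_{i_1},C_{i_2},C_{i_3}$ in sequence is close to measuring with $A_{i_1i_2i_3}$ which is a single projective measurement with well-defined joint outcomes).

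From here the argument is the classical one: define the (randomized) assignment to variable $i\in[N]$ by measuring $C_i$ on $\rho$. Then for a random PCP query triple $(i_1,i_2,i_3)\sim\pi$, the probability that this assignment fails $V$ is at most the probability that the first prover's joint answer fails $V$ plus the probability that the first prover's answers disagree with the $C_i$'s on one of the three coordinates; the former is $\leq \eps$ since the strategy wins whenever $V=1$ and $a_i=b_i$, and the latter is $O(N^2(\eps^{1/2}+\delta^{c_1}))$ after a union bound over the (at most three, but with the $N^2$ loss absorbing the conditioning of a coordinate being fixed to a particular value, exactly as in~\cite{ikm09}) coordinates — the $N^2$ factor being the price of passing from ``a random coordinate $i$ of a random triple'' to ``a fixed coordinate'', since the marginal of $\pi$ on each coordinate is uniform so each specific $i$ has weight $\Omega(1/N)$ and each pair $\Omega(1/N^2)$. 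Collecting the error terms gives satisfaction probability $1-O(N^2(\eps^{1/2}+\delta^{c_1}))$, proving the lemma.

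I expect the main obstacle to be bookkeeping the dimension-independence throughout: every time we want to ``commute'' an operator of one prover past an operator (or a measurement, i.e.\ a sum of operators) of the other, we must ensure the incurred error is controlled by $\delta$ times a quantity depending only on $N$ and the answer alphabet size (which is constant here), never on the dimension; this is exactly why the statement is phrased for projective strategies (so we can appeal to Claim~\ref{claim:projective} beforehand) and why the $\delta$-dependence degrades to a fixed root $\delta^{c_1}$ rather than $\delta$ itself — the losses in Claim~\ref{claim:sq-bound}, which go through fractional powers $B_i^{1/2}, B_i^{1/4}$ and the commutator bound~\eqref{eq:com-bound}, are the bottleneck. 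The rest is a faithful transcription of the proof of Theorem~2 in~\cite{ikm09} with $\delta$-AC replacing exact commutation.
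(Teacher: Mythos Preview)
Your outline has the right ingredients---the averaged single-coordinate measurement $C_i^c=\Es{j}\sum_{c'}B_{ij}^{cc'}$, the use of Claim~\ref{claim:sq-bound}, and the comparison with $A_{i_1i_2i_3}$---but there is a genuine gap at the step where you pass from ``$C_i$ is consistent with the first prover on coordinate $i$'' to a \emph{global} assignment on $[N]$. The $C_i$ are all built from the second prover's operators, and the $\delta$-AC hypothesis says nothing about commutators among operators of the \emph{same} prover. Your claim that ``they pairwise $\delta$-commute because each $\delta$-commutes with the first prover's operators and we can route through those'' is simply false: from $[C_i,A]\approx 0$ and $[C_j,A]\approx 0$ one cannot conclude $[C_i,C_j]\approx 0$ (take $A=\Id$). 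So you cannot measure $C_{i_1},C_{i_2},C_{i_3}$ in sequence and declare the outcome well-defined up to $O(\delta)$; nor can you measure each $C_i$ independently on fresh copies of $\rho$, since that yields a product distribution with no correlations across coordinates and will not reproduce the correlated outcomes of $A_{i_1i_2i_3}$.

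What the paper actually does is define the global assignment by measuring \emph{all} of $C_1,\ldots,C_N$ in a fixed order on the single state $\rho$, i.e.\ $\Pr((c_i))=\Tr(\sqrt{C_N^{c_N}}\cdots\sqrt{C_1^{c_1}}\,\rho\,\sqrt{C_1^{c_1}}\cdots\sqrt{C_N^{c_N}})$. The real work (Claim~\ref{claim:sat}) is then to show that the marginal of this on a fixed triple $(i,j,k)$ is close to $\Tr(\sqrt{C_i^c}\sqrt{C_j^b}\sqrt{C_k^a}\,\rho\,\sqrt{C_k^a}\sqrt{C_j^b}\sqrt{C_i^c})$: one removes the other $N-3$ factors $\sqrt{C_\ell}$ one at a time, and each removal proceeds not by commuting $C_\ell$ with another $C$ (which is unavailable) but by swapping $\sqrt{C_\ell}$ for $\sqrt{\hat A_\ell}$ via Claim~\ref{claim:sq-bound}, commuting $\hat A_\ell$ past the $t$ remaining $C$'s (this is where $\delta$-AC is legitimately used, at cost $O(t\,\delta^{1/2})$), and summing out. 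Summing these errors over $N-3$ removals with $t$ ranging up to $N$ is where the $N^2$ factor arises---not from a conditioning argument on coordinate weights as you suggest. Your parenthetical alternative (``measuring $\rho$ with $C_{i_1},C_{i_2},C_{i_3}$ in sequence is close to measuring with $A_{i_1i_2i_3}$'') is essentially Claim~\ref{claim:sat2} and is correct for a single triple, but does not by itself produce one assignment good for all triples simultaneously.
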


Theorem~\ref{thm:nexp} follows immediately from the above lemma (using Claim~\ref{claim:projective} to argue that the assumption that the provers' measurements are projective is without loss of generality).

\subsection{Soundness analysis}

In this section we sketch the proof of Lemma~\ref{lem:nexp-soundness}. Given a strategy for the provers for every $i\in [N]$ we define the following POVM with outcomes in $\{0,1\}$: 
\beq\label{eq:def-c}
 C_i^c \,:=\, \Es{j\in [N]} \sum_{c'\in\{0,1\}} B_{ij}^{cc'},
\eeq
where the expectation is taken with respect to the uniform distribution. Define a (probabilistic) assignment $(c_i)_{i\in [N]}$ to the PCP proof according to the distribution 
\beq\label{eq:def-ass}
\text{Prob}\big((c_i)\big) \,:=\, \Tr\big( \sqrt{{C}_{N}^{c_N}}\cdots \sqrt{{C}_{1}^{c_1}} \rho \sqrt{{C}_{1}^{c_1}}\cdots  \sqrt{{C}_{N}^{c_N}}\big).
\eeq
We will show that this assignment satisfies the acceptance predicate with good probability. First we prove the following claim, which gives a simpler form for the marginals of the distribution on assignments to any three fixed variables.

\begin{claim}\label{claim:sat}
There exists a constant $c_1>0$ such that the following holds for any $i,j,k\in [N]$:
\begin{align}
\sum_{a,b,c}\Big| \Tr\Big( &\sqrt{{C}_{i}^{c}}\sqrt{ C_{j}^{b}} \sqrt{C_{k}^{a}} \rho\sqrt{{C}_{i}^{a}}\sqrt{ C_{j}^{b}} \sqrt{C_{k}^{c}}  \Big)\notag\\
&- \sum_{\substack{c_\ell\\ \ell\notin\{i,j,k\}}}  \Tr\Big( \sqrt{{C}_{N}^{c_N}}\cdots \sqrt{{C}_{1}^{c_1}} \rho \sqrt{{C}_{1}^{c_1}}\cdots  \sqrt{{C}_{N}^{c_N}}\Big) \Big|\,=\, O\big(N^2\big(\eps^{1/2}+\delta^{c_1}\big)\big)\label{eq:sat-1}
\end{align}
\end{claim}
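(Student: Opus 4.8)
\emph{Proof plan.} The plan is to transform the ``in-place'' (sequential) measurement expression \eqref{eq:def-ass}, marginalized onto the three coordinates $i,j,k$, into the short three-local expression on the left-hand side of \eqref{eq:sat-1}. I would do this in two stages: an \emph{exact} stage that removes the coordinates $\ell\notin\{i,j,k\}$ at no cost, and an \emph{approximate} stage that uses the near-perfect success of the strategy together with the $\delta$-approximate commutation to rearrange what remains.

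\emph{Stage 1 (exact).} First I would sum \eqref{eq:def-ass} over all $c_\ell$ with $\ell\notin\{i,j,k\}$, proceeding from both ends toward the center. Each time the outermost remaining factor corresponds to some $\ell\notin\{i,j,k\}$, the identity $\sum_{c}\Tr\big(\sqrt{C_\ell^c}\,X\,\sqrt{C_\ell^c}\big)=\Tr(X)$ deletes it; each time the innermost remaining factor (the one adjacent to the current state $\sigma$) corresponds to such an $\ell$, the identity $\sum_c \sqrt{C_\ell^c}\,\sigma\,\sqrt{C_\ell^c}=\Phi_{C_\ell}(\sigma)$ merely replaces $\sigma$ by the post-measurement state, $\Phi_{C_\ell}$ being the ``measure-and-forget'' channel. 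This shows that the marginal of \eqref{eq:def-ass} on $\{i,j,k\}$ is exactly equal to an expression of the form $\Tr\big(\sqrt{C^{?}}\,\mathcal E_2\big(\sqrt{C^{?}}\,\mathcal E_1\big(\sqrt{C^{?}}\,\mathcal E_0(\rho)\,\sqrt{C^{?}}\big)\sqrt{C^{?}}\big)\sqrt{C^{?}}\big)$, where the three $C$'s are $C_i,C_j,C_k$ taken from the center outward and $\mathcal E_0,\mathcal E_1,\mathcal E_2$ are compositions of the channels $\Phi_{C_\ell}$ over the indices $\ell\notin\{i,j,k\}$ falling in the respective gaps of the sequential order. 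It then only remains to delete $\mathcal E_0,\mathcal E_1,\mathcal E_2$.

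\emph{Stage 2 (approximate).} Next I would extract consistency from the success hypothesis. Success with probability $\geq1-\eps$ forces, in particular, the test ``$a_i=b_i$'' to pass with probability $\geq1-\eps$; unfolding this, averaging over the dummy question $j$ (legitimate because the trace makes the relevant quantity linear in $B_{ij}$, so $\mathbb E_j$ produces precisely $C_i$), and using the $\delta$-AC property to symmetrize, one obtains $\sum_i\sum_c\Tr\big(\hat A_i^c\sqrt{C_i^c}\rho\sqrt{C_i^c}\big)\geq N-O\big(N(\eps+\delta^{c_0})\big)$, where $\hat A_i$ is the POVM of the $A$-prover obtained by sending it a random triple containing $i$ and reading off its answer for $i$. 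Since each inner sum is at most $1$, applying Claim~\ref{claim:sq-bound} with $M=2$ to each pair $(\hat A_i,C_i)$ separately yields $\big\|(\sqrt{C_i^c}-\sqrt{\hat A_i^c})\sqrt\rho\big\|_2=O\big(\sqrt N(\eps^{1/2}+\delta^{c_0})\big)$ for a universal $c_0>0$, and likewise for $j$ and $k$. The key structural observation is that every operator appearing in $\mathcal E_0,\mathcal E_1,\mathcal E_2$ is built from the $B$-prover and hence commutes, up to $O(\delta)$ in operator norm, with every operator of the $A$-prover; consequently these channels do not disturb the $A$-prover's record of any query's value, and the consistency estimate above continues to hold against every intermediate (sub-normalized) state that arises. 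I would then use this to (i) replace $\sqrt{C_\ell^{c_\ell}}$ by $\sqrt{\hat A_\ell^{c_\ell}}$ whenever that factor is adjacent to the current state, at cost $O\big(\sqrt N(\eps^{1/2}+\delta^{c_0})\big)$ per replacement, and (ii) commute the resulting $A$-side factors past the $B$-side channels freely, at cost $O(\delta^{c_0})$ per transposition, so as to bring the three factors for $i,j,k$ (now on the $A$ side) adjacent to $\rho$ while pushing $\mathcal E_0,\mathcal E_1,\mathcal E_2$ to the outside, where, being trace-preserving, they disappear under the trace; a final conversion of the three $\sqrt{\hat A}$'s back to $\sqrt{C}$'s from the center outward produces the three-local expression. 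Accounting for $O(N)$ transpositions and $O(1)$ replacements for each of the three indices bounds the total error by $O\big(N^2(\eps^{1/2}+\delta^{c_1})\big)$ for a universal $c_1>0$; after relabeling $a,b,c$ this is exactly \eqref{eq:sat-1}.

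\emph{The hard part.} The main difficulty is to keep the whole argument free of any dependence on the provers' Hilbert-space dimension. Claim~\ref{claim:sq-bound} controls $\sqrt{C_i}-\sqrt{\hat A_i}$ only in the state-dependent seminorm $\|(\cdot)\sqrt\sigma\|_2$, never in operator norm, so a $\sqrt{C}\leftrightarrow\sqrt{\hat A}$ substitution can never be done ``in the middle'' of the operator product — it must be performed right next to the current state. This forces the precise peeling/commuting schedule above and, crucially, a proof that each intermediate state inherits the $A$--$B$ consistency; that inheritance is exactly where one uses that the $B$-prover's operations $\delta$-commute with all of the $A$-prover's operators (so the $A$-prover still ``knows'' every query's value), and where the dummy question is needed so that $C_i$ behaves as a bona fide single-value measurement for position $i$. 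Verifying that the $O(N)$-fold accumulation of $\delta$-sized commutators and consistency errors combines to the stated $O\big(N^2(\eps^{1/2}+\delta^{c_1})\big)$, with no Hilbert-space dimension entering, is the technical heart; it mirrors the corresponding step in~\cite{ikm09}.
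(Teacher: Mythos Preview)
Your plan is essentially the paper's: it introduces the same auxiliary $A$-side POVM $\hat A_i^a:=\Es{j,k}\sum_{a_j,a_k}A_{ijk}^{a a_j a_k}$, uses the consistency bound~\eqref{eq:confuse-com-1} together with Claim~\ref{claim:sq-bound} (the paper's~\eqref{eq:sat-3}/\eqref{eq:sq-bound-trace}) to trade $\sqrt{C_i}\leftrightarrow\sqrt{\hat A_i}$ in a state-dependent norm, and then exploits $\delta$-AC to slide the resulting $A$-side factors through the surrounding $B$-side ones. The only difference is packaging --- the paper isolates a one-step lemma~\eqref{eq:sat-2} that deletes the \emph{innermost} summed $C_\ell$ (swap $\sum_a\sqrt{C_\ell^a}\rho\sqrt{C_\ell^a}\to\sum_a\sqrt{\hat A_\ell^a}\rho\sqrt{\hat A_\ell^a}$, commute the $\sqrt{\hat A}$'s out through the $2t$ outer $B$-factors, use $\sum_a\hat A_\ell^a=\Id$) and iterates it $N{-}3$ times, whereas your Stage~1 first collapses the summed layers exactly into three $B$-side channels and your Stage~2 moves the three \emph{fixed} factors through them; both organisations hinge on the same delicate point you flag as ``the hard part'', namely arranging that every $C\leftrightarrow\hat A$ exchange is justified by Claim~\ref{claim:sq-bound} at (a state close to) the original~$\rho$.
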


\begin{proof}[Proof sketch]
First we note that the bound~\eqref{eq:sat-1} follows by an easy induction once the following has been established: for any $t$ and $j_1,\ldots,j_t\in [N]$, 
\begin{align}\label{eq:sat-2}
\Es{i}\,\sum_{c_1,\ldots,c_t}\,\Big| \sum_a \,&\Tr\Big( \sqrt{{C}_{j_t}^{c_{t}}}\cdots\sqrt{ {C}_{j_1}^{c_{1}}}\sqrt{{C}_{i}^{a}} \rho\sqrt{{C}_{i}^{a}} \sqrt{{C}_{j_1}^{c_{1}}} \cdots \sqrt{{C}_{j_t}^{c_{t}}} \Big)\notag\\
&- \Tr\Big( \sqrt{{C}_{j_t}^{c_{t}}}\cdots\sqrt{ {C}_{j_1}^{c_{1}}}\rho\sqrt{{C}_{j_1}^{c_{1}}} \cdots \sqrt{{C}_{j_t}^{c_{t}}} \Big) \Big|\,=\, O\big(t\,\big(\eps^{1/2} + \delta^{1/16}\big)\big).
\end{align}
To prove~\eqref{eq:sat-2}, for any $i\in [N]$ and $a\in \{0,1\}$ we introduce the POVM element 
$$\hat{A}_i^a := \Es{j,k} \sum_{a_j,a_k} A_{ijk}^{a a_ja_k},$$
where the expectation is taken according to the conditional distribution $\pi(\cdot,\cdot|i)$. Note that $\sum_a \hat{A}_i^a = \Id$, and success in the consistency check of the protocol implies
\beq\label{eq:confuse-com-1}
\Es{i} \sum_a \Tr\big( \hat{A}_i^a \sqrt{C_i^a} \rho \sqrt{C_i^a} \big) \geq 1-\eps.
\eeq
This justifies applying Claim~\ref{claim:sq-bound}, and from~\eqref{eq:sq-bound-trace} we get 
\beq\label{eq:sat-3}
\Es{i} \, \Big\| \sum_a\, \sqrt{C_i^a}\rho \sqrt{C_i^a}-\sqrt{\hat{A}_i^a}\rho\sqrt{\hat{A}_i^a}\Big\|_1 \,=\, O\big(\eps^{1/2} + \delta^{1/16}\big).
\eeq
Using~\eqref{eq:sat-3} we obtain 
\begin{align}
\Es{i}\sum_{c_1,\ldots,c_t}\Big| \sum_a& \Tr\big( \sqrt{{C}_{j_t}^{c_{t}}}\cdots\sqrt{ {C}_{j_1}^{c_{1}}}\sqrt{{C}_{i}^{a}} \rho\sqrt{{C}_{i}^{a}} \sqrt{{C}_{j_1}^{c_{1}}} \cdots \sqrt{{C}_{j_t}^{c_{t}}} \big)\notag \\
&-  \sum_a \Tr\big( \sqrt{{C}_{j_t}^{c_{t}}}\cdots\sqrt{ {C}_{j_1}^{c_{1}}}\sqrt{\hat{A}_{i}^{a}} \rho\sqrt{\hat{A}_{i}^{a}} \sqrt{{C}_{j_1}^{c_{1}}} \cdots \sqrt{\hat{C}_{j_t}^{c_{t}}} \big) \Big|\,=\, O\big(\eps^{1/2} + \delta^{1/16}\big).\label{eq:sat-4}
\end{align}
Applying the $(4\delta)$-AC condition (together with~\eqref{eq:com-bound} in order to apply it to the square roots) between $\hat{A}$ and $C$ $(2t)$ times leads to~\eqref{eq:sat-2}. To conclude the proof of~\eqref{eq:sat-1} we start from the second term in the absolute value and apply~\eqref{eq:sat-2} $(N-3)$ times to eliminate the $C$ that are being summed over. 
\end{proof}

Our second claim relates the marginal computed in Claim~\ref{claim:sat} to the original provers' strategy. 

\begin{claim}\label{claim:sat2}
There exists a constant $c_1>0$ such that the following holds:
\beq\label{eq:sat2-1}
\Es{ijk}\sum_{a,b,c}\Big| \Tr\Big( \sqrt{{C}_{i}^{c}}\sqrt{ C_{j}^{b}} \sqrt{C_{k}^{a}} \rho\sqrt{{C}_{i}^{a}}\sqrt{ C_{j}^{b}} \sqrt{C_{k}^{c}}  \Big)- \Tr\big( A_{ijk}^{abc}\rho\big) \Big|\,=\, O\big(\eps^{1/2}+\delta^{1/16}\big),
\eeq
where the expectation is taken according to the distribution $\pi$ used in the protocol. 
\end{claim}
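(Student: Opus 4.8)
The plan is to prove Claim~\ref{claim:sat2} by interpolating between the two sides through the introduction of the provers' actual first-prover POVM $\hat{A}_i^a$, reusing the consistency-check hypothesis exactly as in Claim~\ref{claim:sat}. First I would recall from~\eqref{eq:sat-3} (an immediate consequence of Claim~\ref{claim:sq-bound} applied to the family $\{\hat{A}_i^a\}$ and $\{C_i^a\}$, which are $\delta$-AC up to the $4\delta$ loss of Claim~\ref{claim:projective}, using that the consistency check forces $\Es{i}\sum_a \Tr(\hat{A}_i^a\sqrt{C_i^a}\rho\sqrt{C_i^a})\geq 1-\eps$) that on average over $i$ the operator $\sum_a \sqrt{C_i^a}\rho\sqrt{C_i^a}$ is $O(\eps^{1/2}+\delta^{1/16})$-close in trace norm to $\sum_a \sqrt{\hat{A}_i^a}\rho\sqrt{\hat{A}_i^a}$. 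The strategy is then: (i) starting from $\sum_{a,b,c}\Tr(\sqrt{C_i^c}\sqrt{C_j^b}\sqrt{C_k^a}\rho\sqrt{C_i^a}\sqrt{C_j^b}\sqrt{C_k^c})$, peel off the innermost pair $\sqrt{C_k^a}\rho\sqrt{C_k^a}$ and replace it by $\sqrt{\hat{A}_k^a}\rho\sqrt{\hat{A}_k^a}$ at cost $O(\eps^{1/2}+\delta^{1/16})$ using~\eqref{eq:sat-3} and the fact that conjugation by the surrounding bounded operators $\sqrt{C_i^c}\sqrt{C_j^b}\cdots$ and summation over $b,c$ only contracts trace norm; (ii) repeat for the $j$ and $i$ slots, each time at cost $O(\eps^{1/2}+\delta^{1/16})$, so that after three substitutions the expression is within $O(\eps^{1/2}+\delta^{1/16})$ of $\sum_{a,b,c}\Tr(\sqrt{\hat{A}_i^c}\sqrt{\hat{A}_j^b}\sqrt{\hat{A}_k^a}\rho\sqrt{\hat{A}_i^a}\sqrt{\hat{A}_j^b}\sqrt{\hat{A}_k^c})$; (iii) collapse this last expression to $\sum_{a,b,c}\Tr(A_{ijk}^{abc}\rho)$.

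For step (iii) the key point is that $\hat{A}_i^a=\Es{j,k}\sum_{a_j,a_k}A_{ijk}^{aa_ja_k}$ is a coarse-graining of the genuine three-outcome POVM $\{A_{ijk}^{abc}\}$ of the first prover, and these three families of operators (for the three roles $i,j,k$) all act on the \emph{same} prover's space, hence pairwise commute \emph{exactly} — no $\delta$ enters here. I would therefore argue that on the support of $\rho$ (which is in the first prover's space up to the obvious factorization implicit in the protocol) the product $\sqrt{\hat{A}_i^c}\sqrt{\hat{A}_j^b}\sqrt{\hat{A}_k^a}$ applied to the state, summed appropriately, reproduces $\Tr(A_{ijk}^{abc}\rho)$; more precisely, since the three POVMs are measurements of the same prover but a single fixed query $(i,j,k)$ is what actually gets asked, one has $\hat{A}_i^a A_{ijk}^{a'bc}$-type consistency, and the symmetrized product over $a,b,c$ telescopes. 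Concretely I would use that $\{A_{ijk}^{abc}\}_{a,b,c}$ refines each $\hat{A}$, so $\sqrt{\hat{A}_i^a}$ restricted to the relevant block is the square root of a sum, and a short calculation identical in spirit to~\eqref{eq:sat-2}'s final step (but with \emph{zero} commutator error) gives $\sum_{a,b,c}\Tr(\sqrt{\hat{A}_i^c}\sqrt{\hat{A}_j^b}\sqrt{\hat{A}_k^a}\rho\sqrt{\hat{A}_i^a}\sqrt{\hat{A}_j^b}\sqrt{\hat{A}_k^c})=\sum_{a,b,c}\Tr(A_{ijk}^{abc}\rho)$, possibly up to reordering of the measurement that costs nothing since they commute.

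I expect the main obstacle to be a clean bookkeeping of step (ii): when substituting $\sqrt{C_i^c}\rho\sqrt{C_i^c}\mapsto\sqrt{\hat{A}_i^c}\rho\sqrt{\hat{A}_i^c}$ the square roots $\sqrt{C_i^c}$ sit on the \emph{outside} of the trace rather than sandwiching $\rho$ directly, so to reuse~\eqref{eq:sat-3} one must first cyclically permute under the trace, which forces a reordering of the noncommuting $C$'s and $\hat{A}$'s and hence reintroduces $\delta$-AC corrections; I would handle this exactly as in the proof sketch of Claim~\ref{claim:sat}, applying the $(4\delta)$-AC condition together with~\eqref{eq:com-bound} a constant (at most $O(1)$, independent of $N$) number of times to move each substituted slot into the sandwiched position, absorbing these into the $O(\delta^{c_1})$ term with $c_1$ the same universal constant as in Claim~\ref{claim:sat} (one can take $c_1=1/16$). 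The other subtlety — ensuring no dimension dependence creeps in — is automatic because every bound used (Claim~\ref{claim:sq-bound}, \eqref{eq:com-bound}, trace-norm contraction under conjugation by operators of norm $\leq 1$) is dimension-free and the number of substitution and reordering steps is $O(1)$, so the total error is $O(\eps^{1/2}+\delta^{1/16})$ as claimed, with the expectation over $(i,j,k)\sim\pi$ needed only to invoke the averaged bound~\eqref{eq:sat-3}.
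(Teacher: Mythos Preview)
Your step (iii) contains a genuine error. You assert that $\hat{A}_i^a$, $\hat{A}_j^b$, $\hat{A}_k^c$ ``all act on the same prover's space, hence pairwise commute exactly.'' That inference is false: operators on the same Hilbert space need not commute, and in general different measurements of the \emph{same} prover do not commute at all (this is precisely the situation where neither the tensor-product structure nor the $\delta$-AC hypothesis gives any control). Moreover, $\hat{A}_i^a = \Es{j',k'}\sum_{a_{j'},a_{k'}} A_{ij'k'}^{a a_{j'} a_{k'}}$ is an \emph{average} over triples $(i,j',k')$, so it is not a coarse-graining of the single projective measurement $\{A_{ijk}^{abc}\}_{a,b,c}$ for the fixed triple $(i,j,k)$ appearing in the target $\Tr(A_{ijk}^{abc}\rho)$. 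Your proposed collapse therefore has no algebraic justification.

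The paper's proof avoids this by replacing $\hat{A}_i^a$ with the non-averaged marginal $A_{ijk}^a := \sum_{b,c} A_{ijk}^{abc}$ (and similarly for the $j$- and $k$-slots), keeping the expectation over $(i,j,k)\sim\pi$ outside. The consistency check still yields the analogue of~\eqref{eq:sat-3} with $A_{ijk}^a$ in place of $\hat{A}_i^a$ (this is~\eqref{eq:sat2-2}). Now the three substituted operators $A_{ijk}^a$, $A_{ijk}^b$, $A_{ijk}^c$ are all partial sums of the \emph{same} family of orthogonal projections $\{A_{ijk}^{abc}\}$, so they genuinely commute and multiply to $A_{ijk}^{abc}$; projectivity also gives $\sqrt{A_{ijk}^a}=A_{ijk}^a$, which is why the paper emphasizes that the $A_{ijk}$ are projective. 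Your steps (i) and (ii) and the handling of reorderings via the $(4\delta)$-AC condition are essentially correct; the only repair needed is to swap $\hat{A}$ for the fixed-triple marginal throughout.
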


\begin{proof}[Sketch]
We first proceed as in the proof of Claim~\ref{claim:sat} and note that success in the consistency check of the protocol implies, through Claim~\ref{claim:sq-bound}, the bound
\beq\label{eq:sat2-2}
\Es{ijk} \, \Big\| \sum_a\, \sqrt{C_i^a}\rho \sqrt{C_i^a}-\sqrt{{A}_{ijk}^a}\rho\sqrt{A_{ijk}^a}\Big\|_1 \,=\, O\big(\eps^{1/2} + \delta^{1/16}\big).
\eeq
Note however the slightly different formulation from~\eqref{eq:sat-3}, where we left the expectation on $j$ and $k$ outside, and slightly abused notation to write $A_{ijk}^a$ for $\sum_{b,c} A_{ijk}^{abc}$. Applying~\eqref{eq:sat2-2} thrice, using the $(4\delta)$-AC condition between $A$ and $C$ and the fact that we assumed the $A_{ijk}$ to be projective measurements, we get 
\begin{align*}
\Es{ijk}\Big\|\sum_{a,b,c} \,\sqrt{{C}_{i}^{c}}\sqrt{ C_{j}^{b}} \sqrt{C_{k}^{a}} \rho\sqrt{{C}_{i}^{a}}\sqrt{ C_{j}^{b}} \sqrt{C_{k}^{c}}  - \sum_{a,b,c}\, A_{ijk}^{abc} \rho A_{ijk}^{abc}\Big\|_1\,=\, O\big(\eps^{1/2} + \delta^{1/16}\big).
\end{align*}
The claimed bound~\eqref{eq:sat2-1} follows by noting that the $A_{ijk}^{abc}$ are orthogonal for different outputs, and using the following pinching inequality: for any $X$ and projection $P$, $\|PXP+(1-P)X(1-P)\|_1 \leq \|X\|_1$. 
\end{proof}

Let $(c_i)_{i\in [N]}$ be sampled according to the distribution~\eqref{eq:def-ass}. Combining the bounds from Claim~\ref{claim:sat} and Claim~\ref{claim:sat2} we see that for any query $(i,j,k)$ made by the PCP verifier the marginal distribution on assignments induced by $(c_i)$ is within statistical distance $O(N^2(\eps^{1/2}+\delta^{c_1}))$ from the distribution on assignments obtained from the entangled provers' answers in the protocol. Since by assumption the latter satisfies the PCP predicate with probability $1-\eps$, we deduce that the assignment $(c_i)_{i\in [N]}$ satisfies a randomly chosen query of the PCP verifier with probability (over the sampling of $(c_i)$ as well as over the PCP verifier's random choice of query) at least $1- O(N^2(\eps^{1/2}+\delta^{c_1}))$. This completes the proof of Lemma~\ref{lem:nexp-soundness}, from which Theorem~\ref{thm:nexp} follows easily. 

\section{Discussion}\label{sec:discussion}

The rounding scheme for the QC SDP hierarchy in Section \ref{commutatorbound}, and our introduction of the corresponding class $\AMIP^*$ in Section \ref{sec:nexp}, are motivated by a desire to develop a framework for the study of quantum multiprover interactive proof systems that is both computationally bounded and relevant for typical applications of such proof systems. Our main technical result, Theorem~\ref{thm:sdp}, demonstrates the first aspect. In this section we discuss the relevance of the new model, its connection with the standard definition of $\MIP^*$, and applications to quantum information. 
   

\subsection{Commuting approximants: some results, limits, and possibilities}\label{sec:commute}

While we believe $\AMIP^*$ is of interest in itself, we do not claim that approximately commuting provers are more natural than commuting provers, or provers in tensor product form; the main goal in introducing the new class is to shed light on its thus-far-intractable parent $\MIP^*$. In light of the results from Section~\ref{commutatorbound} the relationship between the two classes seems to hinge on the general mathematical problem of finding exactly commuting approximants to approximately commuting matrices.    

\subsubsection{Limits for commuting approximants}\label{sec:voiculescu}

The main objection to the existence of a positive answer for the ``commuting approximants" question is revealed by a beautiful construction of Voiculescu who exhibits a surprisingly simple scenario in which commuting approximants provably do not exist~\cite{Voi83}. The following  is a direct consequence of Voiculescu's result.

\begin{theorem}[Voiculescu]\label{thm::Voiculescu}
For every $d \in \mathbb{N}$ there exists a pair of unitary matrices $U_1, U_2 \in \mathbb{C}^{d \times d}$ with $\| [ U_1, U_2 ] \| = O(\frac{1}{d})$, such that for any pair of complex matrices $A, B \in \mathbb{C}^{d \times d}$ satisfying $[A, B] = 0$, $\max ( \| U_1 - A \|, \| U_2 - B \| ) = \Omega(1)$.  
\end{theorem}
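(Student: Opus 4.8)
The plan is to deduce Theorem~\ref{thm::Voiculescu} from Voiculescu's classical obstruction to the existence of commuting approximants for almost-commuting unitaries. Recall that Voiculescu~\cite{Voi83} considers, for each $d$, the pair of unitaries $U_1 = S_d$ (the cyclic shift on $\C^d$, i.e. $S_d e_k = e_{k+1 \bmod d}$) and $U_2 = D_d = \operatorname{diag}(1, \zeta, \zeta^2, \ldots, \zeta^{d-1})$ with $\zeta = e^{2\pi i/d}$. A direct computation gives $U_1 U_2 = \zeta\, U_2 U_1$, hence $\|[U_1,U_2]\| = |\zeta - 1| = 2\sin(\pi/d) = O(1/d)$, which establishes the almost-commutation claim. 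The content of Voiculescu's theorem is that there is an absolute constant $c>0$ such that for \emph{any} pair of commuting matrices $A,B$ (one can even take them normal, or drop any extra hypotheses — the statement as quoted only asks $[A,B]=0$), one has $\max(\|U_1 - A\|,\|U_2 - B\|) \geq c$; the obstruction is topological, coming from a nonvanishing winding-number / index invariant (the pair $(S_d, D_d)$ has nontrivial ``Bott element'' in a suitable sense), which is stable under small perturbations but must vanish for a commuting pair.

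First I would state precisely the version of Voiculescu's result being invoked, citing~\cite{Voi83}: for the shift-and-clock pair above, any commuting $A,B$ satisfy $\max(\|U_1-A\|,\|U_2-B\|) = \Omega(1)$ with a dimension-independent constant. Second, I would verify the elementary commutator estimate $\|[S_d,D_d]\| = \|(\zeta-1) D_d S_d\| = |\zeta - 1| = O(1/d)$, using that $S_d, D_d$ are unitary so $\|D_d S_d\| = 1$. Third, I would simply set $U_1 := S_d$, $U_2 := D_d$, and observe that the theorem statement is now exactly Voiculescu's conclusion. Optionally I would remark that the hypothesis in our statement is \emph{weaker} than in typical formulations (we only require $[A,B]=0$, not that $A,B$ be unitary or normal), and note that this is still covered because the relevant $K$-theoretic/winding obstruction does not require normality of the approximants — alternatively, one can reduce to the normal case by polar decomposition at the cost of an absolute constant factor, since an $\Omega(1)$-far-from-unitary matrix is automatically $\Omega(1)$-far from $U_i$.

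Since essentially all the mathematical work is imported, there is no serious obstacle internal to this proof; the only thing that needs care is making sure the quoted form matches a statement actually available in~\cite{Voi83}. The subtle point — and the one I would be most careful about — is the quantifier on the approximants: Voiculescu's original argument is usually phrased for commuting \emph{normal} (or unitary) approximants, whereas Theorem~\ref{thm::Voiculescu} as written quantifies over all matrices $A,B$ with $[A,B]=0$. I would address this by noting that a commuting pair of matrices can be simultaneously put in (block) upper-triangular form, and that being within distance $\eta$ of a unitary forces a matrix to be ``almost normal'' in the sense that its eigenvalues lie within $O(\eta)$ of the unit circle; a short perturbation argument then transfers the $\Omega(1)$ lower bound from the normal case to the general case with only a constant loss. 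If one prefers to avoid even this, the cleanest route is to cite a formulation of the result (e.g. in terms of the Bott projection / exponential, which is defined for any almost-commuting pair and is a norm-continuous $K$-theoretic invariant) where the conclusion already holds against arbitrary commuting approximants, and the proof is then a one-line specialization.
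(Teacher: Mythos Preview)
Your proposal is correct and matches the paper's treatment: the paper does not prove this theorem either, but simply attributes it to Voiculescu, identifies $U_1$ as the cyclic shift and $U_2$ as the diagonal of $d$-th roots of unity, notes that the obstruction is homotopy/homology-theoretic, and points to Exel--Loring for an elementary proof. Your write-up is a slightly more detailed version of the same citation-based argument (you spell out the commutator computation and flag the normal-vs-arbitrary approximant issue, which the paper does not address), but the approach is essentially identical.
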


In Voiculescu's example $U_1$ is a $d$-dimensional cyclic permutation matrix, and $U_2$ is a diagonal matrix whose eigenvalues are the $d^{th}$ roots of unity.  The proof draws on a connection to homology, in particular using a homotopy invariant to establish the lower bound on the distance to commuting approximants.  A succinct and elementary proof of the result is given by Exel and Loring \cite{EL89}.  
  
In the context of entangled strategies one is most concerned with Hermitian matrices representing measurements, rather than unitaries.  However, as a consequence of Theorem \ref{thm::Voiculescu} we see that if one considers the Hermitian operators $M^j_k = \frac{(-i)^j}{2} (U_k + (-1)^j U_k^{\dagger})$ ($j \in \{0,1\}$) we have that $\| [M^j_1, M^{j'}_2] \| = O(\frac{1}{d})$, and yet any exactly commuting set of matrices must be a constant distance away in the operator norm.  Thus Theorem \ref{thm::Voiculescu} rules out the strongest form of a ``commuting approximants" statement, which would ask for approximants in the same space as the original matrices, and with a commutator bound that does not depend on the dimension of the matrices.  

Thus Theorem~\ref{thm::Voiculescu} invites us to refine the ``commuting approximants'' question and distinguish ways in which it may avoid the counter-example; we describe some possibilities in the following subsections. 

\subsubsection{Ozawa's conjecture}

Motivated by the study of Tsirelson's problem~\cite{ScholzW08} and the relationship with Tsirelson's conjecture, Ozawa~\cite{Ozawa13commuting} introduces two equivalent conjectures, the ``Strong Kirchberg Conjecture (I)'' and ``Strong Kirchberg Conjecture (II)'' respectively, which postulate the existence of commuting approximants to approximately commuting sets of POVM measurements and unitaries respectively. The novelty of these conjectures, which allows them to avoid the immediate pitfall given by Voiculescu's example, is that Ozawa considers approximants in a larger Hilbert space than the original approximately commuting operators. Precisely, his Strong Kirchberg Conjecture (I) states the following:

\begin{conjecture}[Ozawa]\label{conj:ozawa}
Let $m,\ell\geq 2$ be such that $(m,\ell)\neq (2,2)$ \footnote{The case $(m,\ell) = (2,2)$ is the only nontrivial setting for which we have some understanding. In particular nonlocal games with two inputs and two outputs per party can be analyzed via an application of Jordan's lemma~\cite{Masanes05}.}. For every $\kappa>0$ there exists $\eps>0$ such that, if $\dim\mathcal{H}<\infty$ and $(P_i^k)$, $(Q_j^l)$ is a pair of $m$ projective $\ell$-outcome POVMs on $\mathcal{H}$ satisfying $\|[P_i^k,Q_j^l]\|\leq \eps$, then there is a finite-dimensional Hilbert space $\tilde{H}$ containing $\mathcal{H}$ and projective POVMs $\tilde{P}_i^k$, $\tilde{Q}_j^l$ on $\tilde{\mathcal{H}}$ such that $\|[\tilde{P}_i^k,\tilde{Q}_j^l]\|=0$ and $\|\Phi_\mathcal{H}(\tilde{P}_i^k)-P_i^k\|\leq\kappa$ and $\|\Phi_\mathcal{H}(\tilde{Q}_j^l)-Q_j^l\|\leq\kappa$.  Here $\Phi_\mathcal{H}$ denotes the compression to $\mathcal{H}$, defined by $\Phi_{\mathcal{H}} (M) \equiv P_{\mathcal{H}} M P_{\mathcal{H}}$, where $P_{\mathcal{H}}$ is the projection onto $\mathcal{H}$. 
\end{conjecture}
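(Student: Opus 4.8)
Since this is stated as an open conjecture rather than an established result, I cannot offer a proof; instead I will outline the only route that seems viable to me, and explain the wall it runs into. \textbf{Step 1 (reformulation as a lifting problem).} A projective $\ell$-outcome POVM on $\mathcal{H}$ is the same thing as a unital $*$-representation of $\mathbb{C}^\ell$, so a tuple of $m$ of them is a unital representation $\phi$ of the free product $A_{m,\ell} := \mathbb{C}^\ell \ast \cdots \ast \mathbb{C}^\ell$ ($m$ factors), and the pair $((P_i^k),(Q_j^l))$ is a pair of representations $\phi,\psi$ of $A_{m,\ell}$ with approximately commuting ranges. Because $\dim\mathcal{H}<\infty$, all C*-tensor norms on $A_{m,\ell}\otimes A_{m,\ell}$ agree, and a pair of \emph{exactly} commuting representations is precisely a representation of $A_{m,\ell}\otimes_{\max}A_{m,\ell}$. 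Thus the conjecture is equivalent to the following stability statement: every unital map $A_{m,\ell}\otimes_{\mathrm{alg}}A_{m,\ell}\to B(\mathcal{H})$ that is a genuine $*$-homomorphism on each leg and is $\eps$-multiplicative on the canonical generating set of $A_{m,\ell}\otimes_{\max}A_{m,\ell}$ admits, on some finite-dimensional dilation $\tilde{\mathcal{H}}\supseteq\mathcal{H}$, an exact $*$-representation whose compression to $\mathcal{H}$ is $\kappa$-close in operator norm, on the generators, to the original map --- with $\kappa\to 0$ as $\eps\to 0$ and, crucially, uniformly in $\dim\mathcal{H}$.

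\textbf{Step 2 (dilation and matching).} I would first form the unital completely positive map $\Theta\colon A_{m,\ell}\otimes_{\mathrm{alg}}A_{m,\ell}\to B(\mathcal{H})$, $\Theta(a\otimes b)=\tfrac12(\phi(a)\psi(b)+\psi(b)\phi(a))$, which by the $\eps$-commutation hypothesis is an $O(\eps)$-approximate $*$-homomorphism on the (finite) set of generators and their products. By Stinespring dilation, $\Theta$ extends to an honest $*$-representation $\pi$ on some $\tilde{\mathcal{H}}\supseteq\mathcal{H}$. Two problems remain: $\pi$ need not be finite-dimensional, and Stinespring controls $P_\mathcal{H}\pi(x)P_\mathcal{H}$ on the operator system but not $\|P_\mathcal{H}\pi(g)P_\mathcal{H}-\phi(g)\|$ for the generators $g$ individually. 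For the first I would invoke residual finite-dimensionality of $A_{m,\ell}\otimes_{\max}A_{m,\ell}$ to replace $\pi$ by a direct sum of finite-dimensional representations. For the second --- the actual work --- one must match the finite-dimensional approximate model $(\phi,\psi)$ on $\mathcal{H}$ with the genuine finite-dimensional model supplied by residual finite-dimensionality, via a Lin-type ``almost-commuting matrices are near commuting matrices'' perturbation argument; more precisely, one needs a relative weak semiprojectivity property of $A_{m,\ell}$ inside $A_{m,\ell}\otimes_{\max}A_{m,\ell}$, i.e. a uniform, dimension-free way of pushing an approximate representation onto a nearby exact one after a small enlargement of the Hilbert space.

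\textbf{Step 3 (the obstruction).} Two facts make Step 2 essentially as hard as a major open problem, which is why I do not expect a clean proof. First, Theorem~\ref{thm::Voiculescu} shows that the naive version --- approximants on the \emph{same} space with a dimension-independent commutator bound --- is outright false, so the enlargement $\tilde{\mathcal{H}}\supsetneq\mathcal{H}$ is indispensable; any correct argument must exploit the extra room through a genuine stability property, and the homotopy-invariant / $K$-theoretic obstructions behind Voiculescu's and Exel--Loring's examples would have to be shown not to obstruct in the regime $(m,\ell)\neq(2,2)$. Second, and more seriously, even the mere existence of \emph{some} finite-dimensional commuting model approximating $(\phi,\psi)$ in the above sense is, via a Kirchberg-type reformulation, equivalent to residual finite-dimensionality of $C^*(\mathbb{F}_r)\otimes_{\max}C^*(\mathbb{F}_r)$ for the relevant $r$, hence to Connes' embedding conjecture~\cite{Connes76,Kirchberg93qwep}. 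So Ozawa's conjecture is \emph{at least} as hard as Connes embedding, and plausibly false --- note that the only tractable case $(m,\ell)=(2,2)$, analyzable by Jordan's lemma~\cite{Masanes05}, is precisely the one excluded from the statement because it is where Voiculescu-type behaviour is visible. My honest assessment is therefore that any serious attack on Step 2 would either resolve Connes' embedding problem en route or uncover a counterexample; I would lean toward the latter, e.g. by adapting the cyclic-shift/diagonal pair of Theorem~\ref{thm::Voiculescu} to the POVM setting in a form that survives arbitrary finite-dimensional enlargement.
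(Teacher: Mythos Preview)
You have correctly identified that this is an open conjecture, not a theorem; the paper does not prove it and offers no proof to compare against. Your outline of why a direct attack is obstructed --- in particular the reformulation as a stability/lifting problem for $A_{m,\ell}\otimes_{\max}A_{m,\ell}$ and the observation that the conjecture sits above Kirchberg's conjecture (hence Connes' embedding) --- is in line with the discussion surrounding the conjecture in the paper, which explicitly records that Ozawa himself showed the statement to be \emph{stronger} than Kirchberg's conjecture and expressed doubts about it.

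One small correction: you write that $(m,\ell)=(2,2)$ is excluded ``because it is where Voiculescu-type behaviour is visible.'' That is not the reason. The footnote says the opposite: $(2,2)$ is the one case that \emph{is} understood, via Jordan's lemma, and is excluded precisely because it is tractable (and hence uninteresting as a conjecture), not because it is obstructed. Voiculescu's example concerns a pair of unitaries and is invoked to motivate the need for the dilation $\tilde{\mathcal{H}}\supsetneq\mathcal{H}$ in the statement, not to single out $(2,2)$ as pathological.
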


Ozawa gives an elegant proof of a variant of the conjecture that applies to just two approximately commuting unitaries, thereby establishing that extending the Hilbert space can allow one to avoid the complications in Voiculescu's example. He also establishes that the conjecture is \emph{stronger} than Kirchberg's conjecture (itself equivalent to Tsirelson's problem and Connes' embedding conjecture~\cite{JungeNPPSW11}), casting doubt, if not on its validity, at least on its approachability. 

Nevertheless, we can mention the following facts. First, it follows from Theorem~\ref{commutatorbound-nongarb} that Conjecture~\ref{conj:ozawa} implies the equality $\AMIP^* = \MIP^*$; in fact it implies that $\MIP^*_\delta = \MIP^*$ for small enough $\delta$, depending on how the parameter $\eps$ in Conjecture~\ref{conj:ozawa} depends on $\kappa$, $m$ and $d$. For this it suffices to verify that a state $\rho$ optimal for a strategy based on POVMs $P_i^k$ and $Q_j^l$ in a given protocol can be lifted to a state $\tilde{\rho}$ on $\tilde{\mathcal{H}}$ such that the correlations exhibited by performing the POVMs $\tilde{P}_i^k$, $\tilde{Q}_j^l$ on $\tilde{\rho}$ approximately reproduce those generated by $P_i^k$, $Q_j^l$ on $\rho$; this is easily seen to be the case provided $\kappa$ is small enough.  Therefore, Theorem~\ref{commutatorbound-nongarb} may be seen as an elementary proof that Conjecture~\ref{conj:ozawa} implies that $\MIP^*$ is computable (this result was previously known, but previous methods use a series of reductions connecting Kirchberg's Conjecture to Tsirelson's problem, see \cite{Ozawa13commuting}).   

Second, Conjecture~\ref{conj:ozawa} can be weakened in several ways without losing the implication that $\AMIP^* = \MIP^*$.  For instance, it is not necessary for the exactly commuting $\tilde{P}_i^k$, $\tilde{Q}_j^l$ to approximate $P_i^k$, $Q_j^l$ in operator norm --- in our context of interactive proofs, only the correlations obtained by measuring a particular state need to be preserved, and this does not in general imply an approximation as strong as that promised in Conjecture~\ref{conj:ozawa}. 


\subsubsection{Dimension dependent bounds}

An alternative relaxation for the ``commuting approximants'' question is to allow the approximation error to depend explicitly on the dimension of the matrices. A careful analysis of the rounding scheme from Theorem \ref{thm:sdp} shows that it produces $d$-dimensional POVM elements with an $O(1/\sqrt{\log(d)})$ bound on the commutators  (this is because the dimension of the subspace $\Ima( \Pileq{N-1})$ is exponential in $N$). Unfortunately, Voiculescu's result (Theorem \ref{thm::Voiculescu}) shows that one can only hope for good approximants in the operator norm if the commutator bound is $o(1/d)$. It remains instructive to find \emph{any} explicit existence result for commuting approximants in the general case, regardless of dimension dependence. Concretely, we conjecture that Conjecture~\ref{conj:ozawa} may be true with a parameter $\kappa$ that scales with the dimension $d$ of the operators $\{P_i^k,Q_j^l\}$ as $\kappa = \eps^c \poly(d)^{(ml)^2}$ for some constant $0<c\leq 1$. 

\subsubsection{An alternative norm}

Another relaxation of the ``commuting approximants" question, which would be sufficient to imply $\AMIP^* = \MIP^*$, is to allow for any set of commuting approximants which approximately preserves the winning probability of the game.  
For concreteness we include a precise version of a possible statement along these lines:

\begin{conjecture}
There exists a function $f(\eps, k):  \mathbb{R}^+ \times \mathbb{N} \to \mathbb{R}^{+}$ satisfying $\lim_{\eps \to 0}f(\eps, k) = 0$ for all $k \in \mathbb{N}$, such that for every game $G$ and $(m,\ell)$ strategy $(A_x^{a}, B_y^b,\rho)$ which is $\delta$-AC, there exists a $0$-AC strategy $(\tilde{A}_x^a , \tilde{B}_{y}^{b}, \rho)$ for $G$ satisfying
$$\left |  \omega^*\big( ((A_{x}^{a}, B_{y}^{b},\rho);G\big) -  \omega^*\big( (\tilde{A}_x^a , \tilde{B}_{y}^{b}, \rho); G \big)  \right|  \leq f(\delta,m\ell).$$
\end{conjecture}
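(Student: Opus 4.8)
The plan is to route the $\delta$-AC strategy through the QC SDP hierarchy of Section~\ref{Sec:def-sdp}, prove a quantitative stability statement for the hierarchy, and then take a limit to recover an exactly commuting strategy. Begin with the standard normalizations: purifying $\rho$ by adjoining a reference register on which every $A_x^a$ and $B_y^b$ acts as the identity changes neither the value nor the $\delta$-AC property, so we may assume $\rho=\ket{v}\bra{v}$ is pure; by Claim~\ref{claim:projective} we may further assume all POVMs are projective at the price of replacing $\delta$ by $\ell^2\delta$, a loss that will be absorbed into $f$. For each level $N$, define $\Gamma^N\in\C^{|W_N|\times|W_N|}$ by $\Gamma^N_{s,t}=\bra{v}w(s)^\dagger w(t)\ket{v}$, where $w(s)$ is the product of the operators named by the letters of the word $s$, read in order. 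Because the $A_x^a,B_y^b$ are genuine projective POVMs on $\mathcal{H}$ and each $w(s)$ has operator norm $\le1$, the matrix $\Gamma^N$ is positive semidefinite and satisfies the normalization, consistency, sum, and orthogonality constraints of Definition~\ref{def:qcsdp} \emph{exactly} (these follow purely from associativity, $\sum_aP_x^a=\Id$, and $P_x^aP_x^{a'}=0$ for $a\neq a'$); the only constraint that fails is the commutation constraint~\eqref{commutationconstraint}, and it fails by at most $\|w(s)\|\,\|[P_x^a,Q_y^b]\|\,\|w(t)\|\le\delta$ in each entry --- crucially, this violation does \emph{not} grow with $N$, since it is only the single operators $P_x^a,Q_y^b$ that are being swapped. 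Finally the objective value of $\Gamma^N$ coincides with the identity-ordering value of the original strategy, hence with its value $\omega^*((A,B,\rho);G)$ up to an additive $\ell^2\delta$.

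The crux would then be the following stability lemma for the hierarchy: \emph{any $\eta$-approximately feasible solution to the $N$-th level, with objective $v$, can be perturbed to an exactly feasible solution to the $N$-th level whose objective is at least $v-g(\eta,\ell)$, where $g(\eta,\ell)\to0$ as $\eta\to0$ with $g$ independent of $N$} (alternatively, one could try to adapt the rounding scheme of Section~\ref{sec:rounding} so that it tolerates $O(\delta)$ violations of~\eqref{commutationconstraint} directly). Granting this with $\eta=\ell^2\delta$, one obtains an exactly feasible level-$N$ solution, for every $N$, of objective at least $\omega^*((A,B,\rho);G)-g(\ell^2\delta,\ell)$. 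Applying Theorem~\ref{commutatorbound-nongarb} to it yields a genuine $O(\ell^2/\sqrt N)$-AC strategy of the same value; letting $N\to\infty$ along a subsequence and using compactness of the nested truncated feasible sets, one extracts a feasible solution to the \emph{entire} infinite hierarchy, which via the GNS-type construction underlying the hierarchy corresponds to a perfectly commuting (commuting-operator) strategy on some Hilbert space $\mathcal{K}$ of value at least $\omega^*((A,B,\rho);G)-g(\ell^2\delta,\ell)$. The final step would be to replace this commuting-operator strategy by a finite-dimensional $0$-AC (equivalently, tensor-product) strategy --- on a space containing the original $\rho$ --- of nearly the same value, and to set $f(\delta,m\ell)$ to be $g(\ell^2\delta,\ell)$ plus the resulting slack.

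I expect both halves of this argument to be essentially as hard as the central open problems of the area. First, no quantitative stability is known for the QC SDP hierarchy: the only available convergence criterion is the computationally opaque ``rank loop'' condition of~\cite{NPA08NJP}, which may fail at every finite level, and it is quite possible that the honest bound in the stability lemma must degrade with $N$, which would break the limiting step. Second, Voiculescu's theorem (Theorem~\ref{thm::Voiculescu}) shows that no exactly commuting approximant can exist on the \emph{same} space with dimension-independent operator-norm error, so any successful proof must crucially exploit that only the correlations produced on the single fixed vector $\ket{v}$ need be reproduced and that the approximant may be realized after an enlargement of the space --- exactly the features that the ``pull back to the original $\rho$'' in the last step threatens to destroy. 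Third, and most fundamentally, passing from a commuting-operator strategy on $\mathcal{K}$ to a finite-dimensional tensor strategy of nearly the same value is precisely the assertion that the field-theoretic value and the entangled value of a game agree, i.e.\ Tsirelson's conjecture / Connes' embedding conjecture; and since a dimension-\emph{independent} $f$ in the statement would, through Theorem~\ref{thm:ub}, make $\MIP^*=\AMIP^*$ computable, the conjecture as worded cannot be decoupled from these questions. For this reason the realistic target is the dimension-dependent relaxation of Section~\ref{sec:commute}: if $f$ is permitted to depend on the dimension $d$ of the original strategy, then the compactness/limit step can be bypassed by invoking Theorem~\ref{commutatorbound-nongarb} at a single finite level $N=N(d)$ chosen large enough that $O(\ell^2/\sqrt{N})$ is negligible, and the $C_{qc}=C_{qa}$ issue does not arise.
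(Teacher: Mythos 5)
This statement is not proved in the paper: it appears in Section~\ref{sec:commute} (``An alternative norm'') precisely as an \emph{open conjecture}, offered as a weakening of Ozawa's Conjecture~\ref{conj:ozawa} that would still suffice to give $\AMIP^* = \MIP^*$. So there is no paper proof to match your argument against, and your submission is, by your own account, a strategy rather than a proof. The two pivotal steps you isolate are genuine gaps and are exactly where the difficulty lives. First, the ``stability lemma'' --- that an $\eta$-approximately feasible level-$N$ solution (violating only constraint~\eqref{commutationconstraint} by $\eta$ per entry) can be repaired to an exactly feasible one with objective loss $g(\eta,\ell)$ \emph{independent of $N$} --- is not known, and nothing in the paper's machinery (neither the rounding scheme of Section~\ref{sec:rounding} nor the rank-loop criterion of~\cite{NPA08NJP}) provides it; it is entirely plausible that any honest bound degrades with $N$, which collapses your limiting step. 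Second, even granting that lemma, the compactness limit of exactly feasible solutions yields only a commuting-operator (field-theoretic) strategy, and converting it into a finite-dimensional $0$-AC strategy of nearly the same value is Tsirelson's problem / Connes' embedding conjecture --- the very questions the paper identifies as the obstruction to understanding $\MIPstar$. Your meta-observation is also correct: since the conjectured $f$ is dimension-independent, combining it with Theorem~\ref{commutatorbound-nongarb} and Theorem~\ref{thm:ub} would make $\MIPstar$ computable, so no proof along elementary lines should be expected; the paper draws the same moral when it discusses why Conjecture~\ref{conj:ozawa} and its relaxations are beyond current techniques.

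One additional wrinkle you should flag if you pursue the dimension-dependent fallback: the conjecture as stated requires the exactly commuting strategy to use the \emph{same} state $\rho$, hence (as strategies are defined in the paper) operators on the same finite-dimensional space. Your route produces commuting operators only after enlarging the space (as in Ozawa's formulation) or on the GNS space of the limiting hierarchy solution, and compressing them back to the support of $\rho$ destroys exact commutation; Voiculescu's Theorem~\ref{thm::Voiculescu} shows this cannot in general be fixed in operator norm on the original space, so any successful argument must exploit that only the value on the fixed state matters --- a point you raise but do not resolve. In short: the approach is a reasonable map of the territory and correctly locates the obstructions, but it does not constitute a proof, and the paper offers none either.
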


\subsection{Device-independent randomness expansion and weak cross-talk }\label{subsec::SPM}

A device-independent randomness expansion (DIRE) protocol is a protocol which may be used by a classical verifier to certify that a pair of untrusted devices are producing true randomness. Under the sole assumptions that the devices do not communicate with each other, and that the verifier has access to a small initial seed of uniform randomness, the protocol allows for the generation of much larger quantities of certifiably uniform random bits; hence the term ``randomness expansion''.  This conclusion relies only on the assumption that the two devices do not communicate, and in particular does not require any limit on the computational power of the devices, as is typically the case in the study of pseudorandomness.  The precise formalization of DIRE protocols is rather involved, and we direct the interested reader to the flourishing collection of works on the topic~\cite{Colbeck2011,Pironio2010,Vazirani2012,Fehr2013,acin2012randomness,um2013experimental,gallego2013device,ms14,csw14}.  In particular the precise formulation of the model is a focus of \cite{CVY13}.

Our definition of $\AMIP^*$ is directly relevant to the notion of devices with \emph{weak cross-talk} introduced in \cite{SPM13} as a model which relaxes the assumption that the devices must not communicate, leading to protocols that are more robust to leakage than the traditional model of device-independence.~\cite{SPM13} proposes the use of the QC SDP hierarchy in order to optimize over the set of ``weakly interacting'' quantum strategies that they introduce, but no bounds are shown on the rate of convergence.  This is where $\AMIP^*$ becomes relevant.  Our notion of $\delta$-AC strategies is easily seen to be a relaxation of weak cross-talk, and thus the analogue of the approach in \cite{SPM13} when performed with a $\delta$-AC constraint is at least as robust as the weak cross-talk approach.  Our rounding scheme for the QC SDP hierarchy thus provides a specific algorithm and complexity bound that applies to both  $\delta$-AC strategies and strategies with weak cross-talk.

\bibliography{commute}

\end{document}